\pgfplotsset{compat=1.18}
\newtheorem{lemma}{Lemma}
\newtheorem{corollary}{Corollary}
\newtheorem{theorem}{Theorem}
\newtheorem{fact}{Fact}
\theoremstyle{definition}
\newtheorem{definition}{Definition}
\newtheorem{conjecture}{Conjecture}
\newtheorem{remark}{Remark}
\newtheorem{example}{Example}
\newcommand{\ketbra}[1]{|{#1}\>\mkern-4mu\<{#1}|}
\renewcommand{\>}{\rangle}
\newcommand{\<}{\langle}
\newcommand{\C}{{\mathbb{C}}}
\newcommand{\R}{{\mathbb{R}}}
\newcommand{\F}{{\mathbb{F}}}
\newcommand{\Z}{{\mathbb{Z}}}
\newcommand{\B}{{\mathbb{B}}}
\newcommand{\one}{\chi}
\newcommand{\CNOT}{{\mathrm{CNOT}}}
\newcommand{\spn}{\textrm{span}}
\newcommand{\wt}{{\textrm{wt}\,}}
\newcommand{\restr}{\textrm{res}}
\newcommand{\supp}{\mathrm{supp}}
\newcommand{\blank}{{-}}
\DeclarePairedDelimiter\ceil{\lceil}{\rceil}
\DeclarePairedDelimiter\floor{\lfloor}{\rfloor}
\newcommand{\CQT}{Centre for Quantum Technologies, National University of Singapore, 3 Science Drive 2, Singapore 117543\looseness=-1}
\newcommand{\NTU}{Nanyang Quantum Hub, School of Physical and Mathematical Sciences, Nanyang Technological University, Singapore 639673\looseness=-1}
\newcommand{\ihpc}{Institute of High Performance Computing (IHPC), Agency for Science, Technology and Research (A*STAR), 1 Fusionopolis Way, $\#$16-16 Connexis, Singapore 138632, Republic of Singapore}
\newcommand{\qinc}{
Quantum Innovation Centre (Q.InC), Agency for Science Technology and Research (A*STAR), 2 Fusionopolis Way, Innovis $\#$08-03, Singapore 138634, Republic of Singapore }
\newcommand{\infocomm}{Institute for Infocomm Research (I{\texttwosuperior}R), Agency for Science, Technology and Research (A*STAR), 1 Fusionopolis Way, \#21-01, Connexis South Tower, Singapore 138632, Republic of Singapore}
\begin{document}

\normalem
\newlength\figHeight 
\newlength\figWidth 

\title{Contextuality of Quantum Error-Correcting Codes}

\author{Derek Khu}
\email{derek_khu@a-star.edu.sg}
\affiliation{\infocomm}

\author{Andrew Tanggara}
\email{andrew.tanggara@gmail.com}
\affiliation{\CQT}
\affiliation{\NTU}

\author{Chao Jin}
\email{jin_chao@a-star.edu.sg}
\affiliation{\infocomm}

\author{Kishor Bharti}
\email{kishor.bharti1@gmail.com}
\affiliation{\qinc}
\affiliation{\ihpc}

\begin{abstract} 
Fault-tolerant quantum computation requires quantum error correction (QEC), which relies on entanglement to protect information from local noise. Achieving universality, however, demands overcoming the Eastin--Knill theorem. This is often accomplished through strategies like magic state distillation, a process that prepares computational resources---namely, magic states---whose power is now understood to be rooted in quantum contextuality, a fundamental nonclassical feature generalizing Bell nonlocality. Yet, the broader role of contextuality in enabling universality, including its significance as an inherent feature of QEC codes and protocols themselves, has remained largely unexplored.
In this work, we develop a rigorous framework for contextuality in QEC and prove three main results. Fundamentally, we show that subsystem stabilizer codes with two or more gauge qubits are strongly contextual in their partial closure, while others are noncontextual, establishing a clear criterion for identifying contextual codes. Mathematically, we unify Abramsky--Brandenburger’s sheaf-theoretic and Kirby--Love’s tree-based definitions of contextuality, resolving a conjecture of Kim and Abramsky. Practically, we prove that many widely studied code-switching protocols which admit universal transversal gate sets, such as the doubled color codes introduced by Bravyi and Cross, are necessarily strongly contextual in their partial closure.
Collectively, our results establish quantum contextuality as an intrinsic characteristic of fault-tolerant quantum codes and protocols, complementing entanglement and magic as resources for scalable quantum computation. For quantum coding theorists, this provides a new invariant: contextuality classifies which subsystem stabilizer codes can participate in universal fault-tolerant protocols. These findings position contextuality not only as a foundational concept but also as a practical guide for the design and analysis of future QEC architectures.

\end{abstract}

\maketitle

\section{Introduction} \label{sec:intro}
Quantum error correction (QEC) plays a crucial role in realizing fault-tolerant quantum computation by protecting fragile quantum states from noise~\cite{shor1995scheme,gottesman1997stabilizer,dennis2002topological,kitaev2003fault,lidar2013quantum,terhal2015quantum}. This raises a fundamental question: \emph{what quantum resources are required to enable QEC?} One prominent answer is quantum entanglement~\cite{bennett1996mixed,brun2006correcting,dur2007entanglement,bravyi2024much}, and in fact, the interplay between quantum error correction and entanglement theory has led to significant advancements~\cite{kitaev2003fault,anshu2023nlts,bravyi2024much}. Another answer is magic~\cite{bravyi2005universal,veitch2014resource,gidney2024magic}. Magic is intimately connected to the Eastin--Knill theorem~\cite{eastin2009restrictions}, which states the fundamental limitation that no quantum error-correcting code can transversely implement a universal set of gates. Overcoming this no-go theorem requires nontransversal operations, leading to the development of magic state distillation~\cite{bravyi2005universal}, now a cornerstone of fault-tolerant quantum computation. The QEC community has heavily focused on entanglement and magic, but are these the only resources, or should we explore others to deepen our understanding of fault tolerance?

Quantum contextuality~\cite{Koc,Specker1960-bg,budroni2022kochen,klyachko2008simple,raussendorf2013contextuality}, a fundamental nonclassical phenomenon that generalizes Bell nonlocality, is a natural candidate. Contextuality has been celebrated for its role in device certification~\cite{lapkiewicz2011experimental,um2013experimental,jerger2016contextuality,zhan2017experimental,malinowski2018probing,leupold2018sustained,zhang2019experimental,um2020randomness,wang2022significant,hu2023self,liu2023experimental,bharti2019robust,bharti2019local,xu2024certifying,saha2020sum,budroni2022kochen,wang2022significant,hu2023self,liu2023experimental,liu2023exploring,xu2024certifying,arora2024computational}, machine learning~\cite{bowles2023contextuality,gao2022enhancing}, and communication~\cite{hameedi2017communication,saha2019state,gupta2023quantum}, yet its significance for QEC remains unclear. 
Clarifying this relationship could not only deepen our theoretical understanding of fault tolerance but also guide the design of more efficient error-correcting schemes. Key open questions therefore include: What does contextuality mean for QEC codes? Can the diverse definitions of contextuality~\cite{Koc,mermin1990extreme,abramsky2011sheaf,abramsky2015contextuality,kirby2019contextuality} be unified, and how do they apply to QEC? Which codes exhibit contextuality? How does contextuality manifest in fault-tolerant protocols enabling universal computation, such as code-switching?

Our work addresses these gaps by establishing a rigorous framework for contextuality in QEC. Our contributions are threefold: fundamental, mathematical, and practical.

Fundamentally, we formulate a precise, operationally meaningful definition of contextuality for QEC codes and prove a criterion for subsystem qubit stabilizer codes: those with at least two gauge qubits are strongly contextual in their \emph{partial closure} (i.e., when considering the full set of observables whose outcomes can be inferred by multiplying commuting check measurements; see Definition~\ref{def:partial-closure}), whereas others are noncontextual (Corollary~\ref{cor:subsystem_code_contextual_iff_g_ge_2}). This finding provides a clear characterization of contextuality within these codes, establishing a foundational framework for exploring nonclassicality in QEC.

Mathematically, we unify different frameworks of contextuality under partial closure. This includes the partial algebra approach of Kochen and Specker~\cite{Koc}, the ``all-versus-nothing'' arguments as named by Mermin~\cite{mermin1990extreme}, the probabilistic and strong contextuality notions arising from the sheaf-theoretic formalism by Abramsky and Brandenburger ~\cite{abramsky2011sheaf, abramsky2015contextuality}, and the tree-based definition by Kirby and Love~\cite{kirby2019contextuality}. By demonstrating that these seemingly disparate approaches are unified under partial closure, we provide a common language to analyze contextuality across a wide range of physical systems. In doing so, we also resolve a conjecture by Kim and Abramsky~\cite{kim2023stateindependentallversusnothingarguments} (Corollary~\ref{cor:strongly-contextual-implies-siavn-closure}). This unification applies broadly to Pauli measurement sets, including the check measurements used for QEC codes.

Practically, we apply our framework to several of the most important code-switching protocols enabling universal transversal gates in the literature, such as the fault-tolerant code-switching between the $\llbracket 7,1,3 \rrbracket$ Steane code and $\llbracket 15,1,3 \rrbracket$ Reed–Muller code~\cite{butt2024fault}, and the three families of doubled color codes introduced by Bravyi and Cross~\cite{bravyi2015doubled}. All of these protocols exhibit strong contextuality in their partial closure (Corollaries~\ref{cor:code_switching_contextuality} and \ref{cor:doubled_color_code_contextuality}). We further prove under mild assumptions that strong contextuality under partial closure necessarily arises in protocols achieving universality (Theorem~\ref{thm:code-switching-contextual}), suggesting that contextuality underpins fault-tolerant universality in code-switching. Our results position contextuality alongside entanglement and magic as a core resource for scalable quantum computation~\cite{chitambar2019quantum}, providing a crucial new consideration for designing future fault-tolerant QEC systems.

\subsection{Comparison with Prior Art}

Our investigation into contextuality within quantum error-correcting codes is informed by a broad spectrum of related research, particularly studies exploring nonclassicality and its role in quantum computation. Foundational works on quantum nonclassicality, such as the seminal Bell's theorem~\cite{bell1964einstein} and the Kochen--Specker theorem~\cite{Koc}, established that no local hidden-variable model can fully account for quantum predictions. Subsequent work, including the Greenberger--Horne--Zeilinger (GHZ) argument \cite{greenberger1989going, greenberger1990bell} and its refinement by Mermin \cite{mermin1990extreme}, demonstrated ``all-versus-nothing'' (AvN) contradictions in certain quantum correlations, which cannot be explained by local realism. These insights have since been extended through studies that highlight nonlocal features within the structure of quantum codes~\cite{divincenzo1997quantum, cabello2008mermin} and provide detailed analyses of AvN contradictions in stabilizer states~\cite{abramsky2017complete}.

Building on these efforts, Howard \textit{et al.}~\cite{howard2014contextuality} demonstrated in their seminal work that contextuality serves as a critical resource for magic state distillation, a key subroutine in state injection protocols designed to circumvent the limitations imposed by the Eastin--Knill theorem~\cite{eastin2009restrictions} and enable fault-tolerant universal quantum computation. Their approach operates at the \emph{state} level: they showed, using the Cabello--Severini--Winter exclusivity graph framework~\cite{cabello2010noncontextuality,cabello2014graph}, that only contextual \emph{states} can be distilled into useful magic states, whereas noncontextual states cannot.

In contrast, our work takes a \emph{complementary} perspective by focusing on contextuality at the \emph{operator} level. Specifically, our approach leverages the sheaf-theoretic methods from~\cite{abramsky2011sheaf, abramsky2015contextuality}, which formalizes contextuality as the failure of locally consistent operator measurement outcomes to ``glue'' together into a global assignment. We also incorporate the tree-based contextuality formalism introduced by Kirby and Love~\cite{kirby2019contextuality}, which has proven effective for analyzing variational quantum algorithms~\cite{kirby2021contextual, kirby2019contextuality}. It is worth noting the recent work of Abramsky \textit{et al.}~\cite{abramsky2024commutation}, who independently establish a similar link between graph-theoretic properties and contextuality. While their work also leverages the sheaf-theoretic framework, it analyzes a more restrictive class of empirical models where local sections are required to be homomorphisms that respect the partial multiplication, an algebraic constraint not imposed in the general possibilistic models we consider.

This \emph{operator}-level lens is natural for QEC, since codes are fundamentally defined by their check \emph{operators}, and thus contextuality can be analyzed directly at these latter algebraic structures that govern error detection and correction. Using this operator-level approach, we define contextuality and strong contextuality for QEC codes based on their check measurements. We then prove that a wide range of code-switching protocols that enable universal computation via transversal gates must exhibit strong contextuality in the partial closure of the check operators used in the constituent QEC codes. This suggests that contextuality indeed underpins universal fault tolerance at the operator level, offering a novel framework that complements state-based approaches and enriches the understanding of nonclassical resources required for quantum computation.

\subsection{Paper Outline}

The paper is structured as follows. In Section~\ref{sec:prelims}, we review the necessary preliminaries on qubit stabilizer codes and subsystem qubit stabilizer codes (Section~\ref{sec:prelims:quantum_codes}), as well as several formalisms for quantum contextuality: a sheaf-theoretic formulation of contextuality (Section~\ref{sec:prelims:sheaf_contextuality}), a graph-based contextuality (Section~\ref{sec:prelims:graph_contextuality}), and the all-versus-nothing argument (Section~\ref{sec:prelims:avn}). In Section~\ref{sec:QEC_contextuality}, we discuss what it means for a QEC code to exhibit contextuality in an operational sense and give a precise definition based on sheaf-theoretic contextuality. We then present one of our main foundational results: the criterion that links the number of gauge qubits in a subsystem code to its contextuality (Corollary~\ref{cor:subsystem_code_contextual_iff_g_ge_2}). Sections~\ref{sec:kirby_love} and \ref{sec:equivalence_of_contextuality_notions_under_partial_closure} contain the technical proofs that establish the equivalence of various contextuality definitions under partial closure (Corollary~\ref{cor:strong-contextuality-equivalences}), culminating in the resolution of a conjecture by Kim and Abramsky~\cite{kim2023stateindependentallversusnothingarguments} (Corollary~\ref{cor:strongly-contextual-implies-siavn-closure}). In Section~\ref{sec:contextuality_code_switching}, we apply our framework to code-switching protocols, demonstrating through specific examples and a general theorem (Theorem~\ref{thm:code-switching-contextual}) that a wide range of code-switching protocols enabling a universal transversal gate set are necessarily strongly contextual in a partial closure. Finally, in Section~\ref{sec:discussion_open_problems}, we discuss the implications of our findings and suggest directions for future research.

This paper connects concepts from quantum error correction, quantum foundations, abstract algebra and sheaf theory. To aid readers from different backgrounds, we suggest the following reading paths, which are also illustrated in Fig.~\ref{fig:reading_guide}.
\begin{itemize}
    \item \textbf{For the QEC expert interested in the main results}: Focus on the QEC preliminaries (Section~\ref{sec:prelims:quantum_codes}), the definitions and results on contextuality of QEC codes (Section~\ref{sec:QEC_contextuality}), and the application to code-switching (Section~\ref{sec:contextuality_code_switching}). The technical proofs in Sections~\ref{sec:kirby_love} and \ref{sec:equivalence_of_contextuality_notions_under_partial_closure}, which establish the equivalence of different contextuality notions, can be taken as given on a first reading.
    \item \textbf{For the quantum foundations expert:} The various formulations of contextuality are defined in Sections~\ref{sec:prelims:sheaf_contextuality}, \ref{sec:prelims:graph_contextuality}, and \ref{sec:prelims:avn}, while the core technical contributions are in the characterization of the Kirby--Love property (Section~\ref{sec:kirby_love}) and the proof of the equivalence of different contextuality frameworks under partial closure (Section~\ref{sec:equivalence_of_contextuality_notions_under_partial_closure}), which resolves a conjecture from~\cite{kim2023stateindependentallversusnothingarguments}. The QEC sections (Sections~\ref{sec:prelims:quantum_codes}, \ref{sec:QEC_contextuality}, and \ref{sec:contextuality_code_switching}) provide the physical motivation and application for this framework.
    \item \textbf{For a general overview:} The introduction (Section~\ref{sec:intro}) and discussion (Section~\ref{sec:discussion_open_problems}) sections provide the high-level motivation, a summary of results, and our outlook. A brief review of the main QEC result in Section~\ref{sec:QEC_contextuality} (specifically Corollary~\ref{cor:subsystem_code_contextual_iff_g_ge_2}) will provide one of the central takeaways of the paper.
\end{itemize}

\begin{figure}
    \centering
    \begin{tikzpicture}[
        node distance=0.8cm and 0.6cm,
        box/.style={
            rectangle, 
            draw, 
            rounded corners, 
            text centered, 
            text width=8.2em,
            minimum height=2em
        },
        path_box/.style={
            minimum height=4.5em
        },
        line/.style={
            draw, 
            -{Stealth[length=2mm]}
        }
    ]

    \node[box, fill=purple!20, text width=17em] (intro) {Sec I: Introduction};

    \node[box, path_box, fill=blue!20, below left = of intro.south] (qec_prelim) {Sec II A:\\ QEC \\ Preliminaries};
    \node[box, path_box, fill=red!20, below right = of intro.south] (con_prelim) {Sec II B--D:\\ Contextuality Preliminaries};

    \node[box, path_box, fill=blue!20, below = of qec_prelim] (main_result) {Sec III:\\ Main QEC Result \\ (Corollary~1)};
    \node[box, path_box, fill=red!20, below = of con_prelim] (kl_proof) {Sec IV:\\ Kirby--Love Characterization};

    \node[box, path_box, fill=blue!20, below = of main_result] (app_switch) {Sec VI:\\ Application to Code-Switching};
    \node[box, path_box, fill=red!20, below = of kl_proof] (equiv_proof) {Sec V:\\ Equivalence \\ Proofs};

    \coordinate (mid_point) at ($(app_switch.south)!0.5!(equiv_proof.south)$);
    \node[box, fill=purple!20, text width=17em, below = of mid_point] (discuss) {Sec VII: Discussion};

    \path[line] (intro) -- (qec_prelim);
    \path[line] (intro) -- (con_prelim);
    \path[line] (qec_prelim) -- (main_result);
    \path[line] (con_prelim) -- (main_result);
    \path[line] (con_prelim) -- (kl_proof);
    \path[line] (kl_proof) -- (equiv_proof);
    \path[line] (main_result) -- (app_switch);
    \path[line, dashed] (equiv_proof) -- 
        node[pos=0.5, sloped, above, font=\scriptsize] {Technical}
        node[pos=0.5, sloped, below, font=\scriptsize] {Justification}
        (main_result);
    \path[line] (app_switch) -- (discuss);
    \path[line] (equiv_proof) -- (discuss);

    \node[font=\small\bfseries, color=blue!80!black, rotate=90] 
        at ([xshift=-0.4cm]main_result.west) {QEC Path};
    \node[font=\small\bfseries, color=red!90!black, rotate=-90] 
        at ([xshift=0.4cm]kl_proof.east) {Quantum Foundations Path};
    
    \end{tikzpicture}
    \caption{A reading guide for this paper. Readers primarily interested in the QEC results can follow the solid path on the left. Those interested in the formal proofs of contextuality can follow the path on the right. The dashed arrow indicates that the technical results of Section~\ref{sec:equivalence_of_contextuality_notions_under_partial_closure} provide the formal justification for the main QEC result in Section~\ref{sec:contextuality_code_switching}.}
    \label{fig:reading_guide}
\end{figure}

\section{Preliminaries}\label{sec:prelims}

Throughout this paper we focus on codes in the stabilizer formalism.
We denote the $n$-qubit Pauli group by $G_n = \<I,X,Y,Z\>^{\otimes n}$, where $\<a,b,c,\dots\>$ denotes the group generated by group elements $a,b,c,\dots$, and write $X_i$ (resp.\ $Y_i, Z_i$) for the $n$-qubit Pauli operator that is the tensor product of $X$ (resp.\ $Y, Z$) in the $i$th component and $I$ in all other components. (All tensor products are taken over $\C$.)
Also we let $\mathcal{P}_n\subseteq G_n$ be the set of all $n$-qubit Pauli operators with real global phase, i.e., $\mathcal{P}_n = \{\pm 1\} \times \{I, X, Y, Z\}^{\otimes n}$, for convenience when working with stabilizer codes.
Finally, we denote the \textit{weight} of $n$-qubit Pauli operator $p\in G_n$ by $\wt p$, which is the number of nonidentity terms among the tensor factors of $p$.
For example, $p=I\otimes Z\otimes X = Z_2 X_3$ has a weight of $\wt p=2$.

\subsection{Quantum Error-Correcting Codes}\label{sec:prelims:quantum_codes}

In general, a quantum error-correcting code is defined by a subspace $\mathcal{Q}$ of a quantum system described by a $d$-dimensional complex vector space $\C^d$.
A quantum state $|\psi\>$ that belongs to $\mathcal{Q}$ can be recovered after being inflicted by an error $E$ correctable by $\mathcal{Q}$.
This is generally done by first performing a set of check measurements $\mathcal{C}$ on the erroneous state $E|\psi\>$ which gives a set of error syndromes $o=\{o_1,\dots,o_s\}$, followed by a correcting operation $T_o$ depending on the error syndromes to recover the initial quantum state as $T_o E|\psi\> \propto |\psi\>$.

An important and perhaps most studied family of quantum error-correcting codes is the \textit{stabilizer codes}~\cite{gottesman1997stabilizer,gottesman2016surviving}.
A stabilizer code $\mathcal{Q}$ is defined by an abelian subgroup $\mathcal{S}\subseteq\mathcal{P}_n$ of the Pauli group $G_n$, which allows convenient group-theoretic analysis for its properties mainly due to its check measurements $\mathcal{C}$ being those that generate the group $\mathcal{S}$.
In this work, we focus on quantum error-correcting codes within the formalism of stabilizer codes.
Below, we formally define a stabilizer code.

\begin{definition}\label{def:stabilizer_code}
    Let $\mathcal{S}\subseteq\mathcal{P}_n$ be an abelian subgroup of $n$-qubit Pauli group $G_n$, with $-I \not\in \mathcal{S}$.
    The \textit{stabilizer code} $\mathcal{Q}$ fixed by $\mathcal{S}$ is a subspace of $\C^{2^n}$ defined by 
    \begin{equation}
        \mathcal{Q} = \spn\big\{ |\psi\> : \forall p\in\mathcal{S}, \, p|\psi\>=|\psi\> \big\} \;.
    \end{equation}
    In words, $\mathcal{Q}$ consists of quantum states in the $+1$ eigenspace of all $p\in\mathcal{S}$, where $\mathcal{S}$ is known as a \textit{stabilizer group} and an element $s$ of the stabilizer group $\mathcal{S}$ is known as a \textit{stabilizer} of $\mathcal{Q}$.
    A set of check measurements $\mathcal{C}$ of a stabilizer code $\mathcal{Q}$ can be taken as the set of independent generators $p_1,\dots,p_s$ of stabilizer group $\mathcal{S}$.
    An $n$-qubit stabilizer code $\mathcal{Q}$ is an $\llbracket n,k,d\rrbracket$ code whenever\footnote{All instances of $\log$ in this paper refer to the base-$2$ logarithm.}  $k=\log\dim\mathcal{Q}$ and $d=\min_{p\in N(\mathcal{S})\setminus\mathcal{S}}\wt p$, where $N(\mathcal{S})$ is the normalizer of $\mathcal{S}$ in $G_n$.
\end{definition}

Stabilizer codes belong to the family of \textit{subspace} quantum error-correcting codes where the entire subspace $\mathcal{Q}$ is being used to store logical information.
A more general family of codes called the \textit{subsystem codes}~\cite{kribs2005operator,kribs2005unified} offers more flexibility where the subspace $\mathcal{Q}$ of $d$-dimensional complex vector space $\C^d$ is further partitioned into two subsystems: (1) a \textit{code system} $\mathcal{A}$ where logical quantum information is encoded in and (2) a \textit{gauge system} $\mathcal{B}$.
In other words, the codespace $\mathcal{Q}$ is isomorphic to $\mathcal{A}\otimes\mathcal{B}$ and a codestate takes the form of $\ketbra{\psi}_\mathcal{A}\otimes\rho_\mathcal{B} \in\mathcal{L}(\mathcal{Q})$ for some arbitrary density operator $\rho_\mathcal{B}$.
When the subsystem $\mathcal{B}$ is one-dimensional, this gives us a subspace quantum error-correcting code.

In this work we focus on a class of subsystem codes, the \textit{stabilizer subsystem codes}~\cite{poulin2005stabilizer} which uses the stabilizer formalism to construct a subsystem code.
As opposed to the stabilizer code (Definition~\ref{def:stabilizer_code}) which is defined only by the abelian stabilizer group $\mathcal{S}\subseteq\mathcal{P}_n$, a subsystem stabilizer code is defined by both an abelian stabilizer group $\mathcal{S}\subseteq\mathcal{P}_n$ and a \textit{gauge group} $\mathcal{G}\subseteq G_n$ (which is not necessarily abelian) such that $Z(\mathcal{G}) = \< \mathcal{S}, iI \>$, where $Z(\mathcal{G})$ is the center of $\mathcal{G}$. 
From $\mathcal{S}$ and $\mathcal{G}$, we can define the corresponding codespace $\mathcal{Q}$ fixed by $\mathcal{S}$ and a corresponding decomposition $\mathcal{Q}\cong\mathcal{A}\otimes\mathcal{B}$ such that, under this isomorphism, each $g \in \mathcal{G}$ acts as $I_\mathcal{A} \otimes g_\mathcal{B}$, where $I_\mathcal{A}$ is the identity operator on $\mathcal{A}$ and $g_\mathcal{B}$ is some operator on $\mathcal{B}$.
A subsystem stabilizer code is formally defined as follows.

\begin{definition}\label{def:subsystem_code}
    Let $\mathcal{S}\subseteq\mathcal{P}_n$ be an abelian subgroup of $G_n$, with $-I \not\in \mathcal{S}$, and let $\mathcal{G}\subseteq G_n$ be a subgroup of $G_n$ such that $Z(\mathcal{G}) = \< \mathcal{S}, iI \>$.
    The associated \textit{subsystem qubit stabilizer code} $\mathcal{Q}\cong\mathcal{A}\otimes\mathcal{B}$ is a subspace of $\C^{2^n}$ defined by 
    \begin{equation}
        \mathcal{Q} = \spn\big\{ |\varphi\> : \forall p\in \mathcal{S}, \, p|\varphi\>=|\varphi\> \big\} \;,
    \end{equation}
    along with the data of $\mathcal{A}$ and $\mathcal{B}$ as described in \cite{poulin2005stabilizer}.
    The check measurements $\mathcal{C}$ of subsystem stabilizer code $\mathcal{Q}$ form a minimal set of generators $\{p_1,\dots,p_l\}\subseteq \mathcal{P}_n$ that, together with $iI$, generate $\mathcal{G}$.
    The subsystem stabilizer code $\mathcal{Q}$ fixed by $\mathcal{G}$ is an $\llbracket n,k,g,d\rrbracket$ code whenever $g=(\log|\mathcal{G} / \<\mathcal{S},iI\>|)/2$ and $k=n-g-\log |\mathcal{S}|$ and $d=\min_{p\in N(\mathcal{S})\setminus\mathcal{G}}\wt p$. 
\end{definition}

As shown in~\cite{poulin2005stabilizer}, given a subsystem qubit stabilizer code, one can construct $n$ pairs of ``virtual'' Pauli $X$ and $Z$ operators $\{X_j', Z_j'\}_{1 \le j \le n}$ from $G_n$ satisfying:
\begin{enumerate}
\item $[X_j', X_k'] = [Z_j', Z_k'] = [X_j', Z_k'] = 0$ for all distinct $j, k \in \{1, \cdots, n\}$,
\item $X_j' Z_j' = - Z_j' X_j'$ for all $j \in \{1, \cdots, n\}$, and
\item $(X_j')^2 = (Z_j')^2 = I$ for all $j \in \{1, \cdots, n\}$.
\end{enumerate}
The stabilizer group $\mathcal{S}$ is given by $\mathcal{S} = \langle Z_1', \cdots, Z_s' \rangle$ for some $s \ge 0$, and its normalizer in $G_n$ is $N(\mathcal{S}) = \langle iI, Z_1', \cdots, Z_n', X_{s+1}', \cdots, X_n' \rangle$. We can now define the set of check measurements to be $\mathcal{C} = \{Z_1', \cdots, Z_s', X_{s+1}', Z_{s+1}', \cdots, X_{s+g}', Z_{s+g}'\}$ for some $g \ge 0$ with $s+g \le n$. Together with $iI$, these generate the gauge group $\mathcal{G} = \langle \mathcal{C}, iI \rangle$.

\subsection{Sheaf-Theoretic Contextuality Framework}\label{sec:prelims:sheaf_contextuality}

We recall the sheaf-theoretic definition of contextuality from \cite{abramsky2011sheaf, abramsky2015contextuality, kim2023stateindependentallversusnothingarguments} in this section. For the reader who is less familiar with sheaf theory, we include a primer on sheaf-theoretic contextuality in Appendix \ref{sec:primer_sheaf_theory}. A more complete treatment may be found in \cite{abramsky2011sheaf}.

We also refer to the notion of \emph{semirings} in this section. Semirings are algebraic structures similar to rings\footnote{In this paper, the word ``ring'' (resp.\ ``semiring'') always refers to a commutative ring (resp.\ semiring) with unit.}, but without the requirement for an additive inverse. This means that, unlike in a ring, not every element needs a corresponding ``negative'' that results in zero when added. We will mainly be concerned with the semiring of all nonnegative real numbers $\R_{\ge 0}$ and the two-element boolean semiring $\B = \{0,1\}$ with $1+1=1$ (note that, like all semirings, $x+0=0+x=x$ and $x\cdot 0 = 0 \cdot x = 0$ for all $x \in \B$, and $1 \cdot 1 = 1$). For the reader unfamiliar with the concept of semirings, it will be sufficient to think of them as either $\R_{\ge 0}$ or the $\B$ for the purposes of this paper.

\begin{definition} \label{def:quantum_measurement_scenario}
    A \emph{quantum measurement scenario} (also known as a \emph{measurement scenario with a quantum representation} in \cite{abramsky2011sheaf}), is a triple $\langle X, \mathcal{M}, O \rangle$, where $X$ is a nonempty, finite set of observables on some Hilbert space, $\mathcal{M}$ is the set of maximally commuting subsets of $X$, and $O$ is a nonempty, finite set of outcomes for the measurement of each observable. The set $\mathcal{M}$ is also known as the \emph{measurement cover}, and each element in $\mathcal{M}$ is a \emph{measurement context}.
\end{definition}

\begin{definition} \label{def:emp-model-general}
    For a quantum measurement scenario $\langle X, \mathcal{M}, O \rangle$, we define the \emph{event sheaf} $\mathcal{E}$ on $X$ (as a discrete topological space) by taking the sections to be $\mathcal{E}(U) = O^U$ for all $U \subseteq X$ and the restriction maps to be $s \mapsto s|_{U}$ for all $U \subseteq U' \subseteq X$ and $s \in \mathcal{E}(U')$. For a semiring $R$, we can define the functor $\mathcal{D}_R \colon \mathbf{Set} \to \mathbf{Set}$ by setting $\mathcal{D}_R(X)$ to be the set of $R$-distributions on $X$ (i.e., functions $d \colon X \to R$ with $\sum_{x \in X} d(x) = 1$) and, for any map $f\colon X \to Y$, $\mathcal{D}_R(f) \colon \mathcal{D}_R(X) \to \mathcal{D}_R(Y)$ to be
    \[d \mapsto \left(y \mapsto \sum_{x \in f^{-1}(\{y\})} d(x)\right).\]
    A \emph{(no-signaling) $R$-valued empirical model} on $\langle X, \mathcal{M}, O \rangle$ is defined to be a family $e = \{e_C\}_{C \in \mathcal{M}}$ with $e_C \in \mathcal{D}_R\mathcal{E}(C)$ such that $e_C|_{C \cap C'} = e_{C'}|_{C \cap C'}$ for all $C, C' \in \mathcal{M}$. A \emph{possibilistic empirical model} on $\langle X, \mathcal{M}, O \rangle$ is defined to be a subpresheaf $\mathcal{S}$ of $\mathcal{E}$ satisfying the following properties:
    \begin{enumerate} [label={(E\arabic*)}]
        \item \label{def:emp-model-e1} $\mathcal{S}(C) \ne \emptyset$ for all $C \in \mathcal{M}$,
        \item \label{def:emp-model-e2} $\mathcal{S}(U') \to \mathcal{S}(U)$ is surjective for all $C \in \mathcal{M}$ and $U \subseteq U' \subseteq C$, and
        \item \label{def:emp-model-e3} for any  family $\{s_C\}_{C \in \mathcal{M}}$ with $s_C \in \mathcal{S}(C)$ satisfying $s_C|_{C\cap C'} = s_{C'}|_{C \cap C'}$ for all $C, C' \in \mathcal{M}$, there is a global section $s \in \mathcal{S}(X)$ satisfying $s|_{C} = s_C$ for all $C \in \mathcal{M}$. (This section is unique because $\mathcal{E}$ and hence $\mathcal{S}$ are separated.)
    \end{enumerate}
\end{definition}

For $R = \R_{\ge 0}$, $\mathcal{D}_R(X)$ is simply the set of probability distributions on $X$, for any set (of events) $X$. For $R = \B$, $\mathcal{D}_R(X)$ is the set of ``\emph{possibility} distributions'' on $X$, where we assign $1$ to an element of $X$ to indicate it is a ``possible'' event, and $0$ otherwise.

In the case where $R = \B$, there is a bijective correspondence between $\B$-valued empirical models and possibilistic empirical models. Given a $\B$-valued empirical model $\{e_C\}_{C \in \mathcal{M}}$, we obtain a possibilistic empirical model $\mathcal{S}$ defined by setting, for each $U \subseteq X$,
\[\mathcal{S}(U) \coloneqq \{s \in O^U \mid e_C|_{U \cap C}(s|_{U\cap C}) = 1 \text{ for all } C \in \mathcal{M}\}.\]
Conversely, given a possibilistic empirical model $\mathcal{S}$, the family $\{\one_{\mathcal{S}(C)}\}_{C \in \mathcal{M}}$ forms a $\B$-valued empirical model, where we use $\one_{A}$ to denote the indicator function of the set $A$. These two operations are inverse to each other.

Given a unital additive map $f \colon R \to S$ between semirings $R$ and $S$ (i.e., $f$ satisfies $f(1) = 1$ and $f(x+y) = f(x) + f(y)$ for all $x, y \in R$), there is a natural transformation of functors from $\mathcal{D}_R$ to $\mathcal{D}_S$ given by sending each $d \in \mathcal{D}_R(X)$ to $f \circ d \in \mathcal{D}_S(X)$, and thus any $R$-valued empirical model $\{e_C\}_{C \in \mathcal{M}}$ gives rise to an $S$-valued empirical model $\{f \circ e_C\}_{C \in \mathcal{M}}$.

For any semiring $R$, we can define its \emph{support map} $f \colon R \to \B$ by setting $f(r) = 0$ if and only if $r = 0$. This map is unital and additive if and only if $R$ is a nonzero zerosumfree semiring (i.e., a nonzero semiring in which the sum of any two nonzero elements is nonzero). Examples of zerosumfree semirings include $\R_{\ge 0}$ and $\B$. For nonzero zerosumfree semirings $R$, we get a $\B$-valued empirical model, or equivalently a possibilistic empirical model, associated to any $R$-valued empirical model in the manner described above. We formalize this in the definition below.

\begin{definition}
    Let $R$ be a nonzero zerosumfree semiring, and $f \colon R \to \B$ be its support map. The \emph{possibilistic empirical model $\mathcal{S}$ associated to an $R$-valued empirical model} $e = \{e_C\}_{C \in \mathcal{M}}$ on $\langle X, \mathcal{M}, O \rangle$ is the one corresponding to (under the bijective correspondence described earlier) the $\B$-valued empirical model $\{f \circ e_C\}_{C \in \mathcal{M}}$. Equivalently, it is defined by setting, for each $U \subseteq X$,
    \[\mathcal{S}(U) \coloneqq \{s \in O^U \mid s|_{U\cap C} \in \mathrm{supp}(e_C|_{U \cap C}) \text{ for all } C \in \mathcal{M}\}.\]
\end{definition}

The nonzero zerosumfree semiring of most interest to us is $\R_{\ge 0}$. In this case, $\R_{\ge 0}$-distributions on a finite set $U$ are precisely probability distributions on $U$.

\begin{definition}
   Given a quantum measurement scenario $\langle X, \mathcal{M}, O \rangle$ on a Hilbert space $\mathcal{H}$ as well as a quantum state $\rho$ on $\mathcal{H}$, the \emph{empirical model defined by $\rho$} is the $\R_{\ge 0}$-valued empirical model $e$ on $\langle X, \mathcal{M}, O \rangle$ obtained by setting each $e_C$ to be the probability distribution on $O^C$ of the measurement outcomes from making the measurements in $C$ on the state $\rho$. The \emph{possibilistic empirical model defined by $\rho$} is the possibilistic empirical model associated to the empirical model defined by $\rho$.
\end{definition}

\begin{definition}\label{def:quantum_measurement_scenario_contextuality}
    We say a possibilistic empirical model $\mathcal{S}$ on a quantum measurement scenario $\langle X, \mathcal{M}, O \rangle$ is \emph{strongly contextual} if it has no global section, i.e., if $\mathcal{S}(X) = \emptyset$. We say an $R$-valued empirical model $e = \{e_C\}_{C \in \mathcal{M}}$ on $\langle X, \mathcal{M}, O \rangle$ is \emph{contextual} if there is no global section $d \in \mathcal{D}_R\mathcal{E}(X)$ satisfying $d|_C = e_C$ for all $C \in \mathcal{M}$, and \emph{strongly contextual} if the possibilistic empirical model associated to $e$ is strongly contextual. If $e$ is not contextual, we say it is \emph{noncontextual}.
\end{definition}

\begin{remark}
    In \cite{abramsky2011sheaf}, contextuality of an $\R_{\ge 0}$-valued empirical model is known as \emph{probabilistic nonextendability}, while contextuality of a $\B$-valued empirical model is known as \emph{possibilistic nonextendability}. In \cite{abramsky2015contextuality}, contextuality of the $\B$-valued empirical model associated to a possibilistic empirical model is known as \emph{logical contextuality}.
\end{remark}

As observed in \cite[][Sections 4 and 6]{abramsky2011sheaf}, strong  contextuality of an $\R_{\ge 0}$-valued empirical model implies contextuality of its corresponding $\B$-valued empirical model, and this in turn implies contextuality of the $\R_{\ge 0}$-valued empirical model. (It was also noted that the converse to either of these implications may not be true.) More generally, we have the following.

\begin{theorem} \label{thm:strong-contexuality-implies-contextuality}
    Let $R$ be a nonzero zerosumfree semiring. Let $e$ be an $R$-valued empirical model on $\< X, \mathcal{M}, O\>$, with $e'$ its corresponding $\B$-valued empirical model. We then have the following.
    \begin{enumerate}
        \item If $e$ is strongly contextual, $e'$ is contextual.
        \item If $e'$ is contextual, $e$ is also contextual.
    \end{enumerate}
    Therefore, $e$ is contextual if it is strongly contextual.
\end{theorem}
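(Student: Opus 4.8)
The plan is to prove the two implications by contraposition and then chain them; both are definition-chases, and the hypothesis on $R$ enters at exactly one point.

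For statement (2), I would start from a noncontextual $e$, i.e.\ a global section $d \in \mathcal{D}_R\mathcal{E}(X)$ with $d|_C = e_C$ for all $C \in \mathcal{M}$, and push it forward along the support map $f \colon R \to \B$. Since $R$ is nonzero and zerosumfree, $f$ is unital and additive, so (as recalled just before the theorem) it induces a natural transformation $\mathcal{D}_R \Rightarrow \mathcal{D}_\B$; the key observation is that each restriction map of the presheaf $\mathcal{D}_R\mathcal{E}$ is just $\mathcal{D}_R$ applied to the set map $s \mapsto s|_C$, so naturality gives $(f\circ d)|_C = f\circ(d|_C) = f \circ e_C = e'_C$. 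Concretely this is the identity $f\big(\sum_{s|_C = t} d(s)\big) = \sum_{s|_C = t} f(d(s))$, which holds because $f$ is additive and the index set is finite ($X$ and $O$ being finite by assumption). Hence $f \circ d \in \mathcal{D}_\B\mathcal{E}(X)$ witnesses noncontextuality of $e'$.

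For statement (1), I would start from a noncontextual $e'$, i.e.\ some $d \in \mathcal{D}_\B\mathcal{E}(X)$ with $d|_C = e'_C$ for all $C$, and show $\mathcal{S}(X) \neq \emptyset$ for the possibilistic model $\mathcal{S}$ associated to $e$ (equivalently to $e'$). Because $\sum_{s} d(s) = 1$ in $\B$ and addition in $\B$ is idempotent, $d$ is not identically zero; fixing $s_0$ with $d(s_0) = 1$, for each $C$ the sum $\sum_{s:\, s|_C = s_0|_C} d(s)$ contains the summand $d(s_0)=1$ and hence equals $1$, so $e'_C(s_0|_C) = 1$ and $s_0|_C \in \supp(e_C)$. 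Unwinding the definition of $\mathcal{S}$, this says exactly that $s_0 \in \mathcal{S}(X)$, so $e$ is not strongly contextual. (This is essentially the standard fact that strong contextuality of a possibilistic model implies its logical contextuality.)

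Combining the contrapositives yields the chain $e$ strongly contextual $\Rightarrow$ $e'$ contextual $\Rightarrow$ $e$ contextual. I do not anticipate a genuine obstacle; the only places requiring care are checking that the support map commutes with the presheaf restrictions (the naturality square), which is precisely where nonzero-zerosumfreeness of $R$ is used, and correctly handling the idempotent addition in $\B$ in part (1).
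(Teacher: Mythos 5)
Your proposal is correct and follows essentially the same route as the paper: both implications are proved by contraposition, with (2) obtained by pushing a global $R$-distribution forward along the support map (using that it is unital and additive precisely because $R$ is nonzero and zerosumfree), and (1) obtained by extracting an $s_0$ with $d(s_0)=1$ from a global $\B$-distribution and checking $s_0|_C \in \mathcal{S}(C)$ for every context. The only cosmetic difference is that you conclude $s_0 \in \mathcal{S}(X)$ directly from the explicit formula for the associated possibilistic model, whereas the paper invokes property (E3) and separatedness; these are equivalent here.
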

\begin{proof}
    The final statement follows from the first two.

    For the first statement, suppose that $e'$ is not contextual. We need to show that $e$ is not strongly contextual, i.e., the possibilistic empirical model $\mathcal{S}$ associated to $e$ (and equivalently, to $e'$) is not strongly contextual. Since $e' = \{\one_{\mathcal{S}(C)}\}_{C \in \mathcal{M}}$ is not contextual, there is some $d \in \mathcal{D}_{\B} \mathcal{E} (X)$ such that $d|_C = \one_{\mathcal{S}(C)}$ for all $C \in \mathcal{M}$. Since $d \in \mathcal{D}_{\B} \mathcal{E} (X)$, there is some $s \in \mathcal{E}(X)$ for which $d(s) = 1$. It follows that $\one_{\mathcal{S}(C)}(s|_C) = d|_C(s|_C) = 1$ and thus $s|_C \in \mathcal{S}(C)$ for all $C \in \mathcal{M}$. Hence, $s \in \mathcal{S}(X)$ by condition \ref{def:emp-model-e3} and separatedness of $\mathcal{S}$, proving that $\mathcal{S}(X) \ne \emptyset$, as desired.

    The second statement follows immediately from the definition of contextuality and the fact that the map $\mathcal{D}_R \to \mathcal{D}_{\B}$ inducing the map from $e$ to $e'$ is a natural transformation. More precisely, if $f: R \to \B$ is the support map and $e = \{e_C\}_{C \in \mathcal{M}}$ is such that there is some $d \in \mathcal{D}_R \mathcal{E} (X)$ such that $d|_C = e_C$ for all $C \in \mathcal{M}$ (i.e., $e$ is noncontextual), then $f \circ d \in \mathcal{D}_{\B} \mathcal{E} (X)$ satisfies $(f \circ d)|_C = f \circ e_C$ for all $C \in \mathcal{M}$ (so $e'$ is also noncontextual).
\end{proof}

\subsection{Kirby--Love Property}\label{sec:prelims:graph_contextuality}

Following \cite{kirby2019contextuality}, we now define the compatibility graph of a set of observables and what it means for a graph to have the Kirby--Love property.

\begin{definition}
    Given a set $X$ of observables, the \emph{compatibility graph} of $X$ is the graph\footnote{In this paper, the word ``graph'' always refers to a simple, undirected graph.} whose vertices are labeled by the operators in $X$, and where an edge exists between two distinct vertices if and only if the operators labeling those vertices commute.
\end{definition}

\begin{definition} \label{def:kirby-love-property}
    We say a graph has the \emph{Kirby--Love property} if it contains vertices $a,b,c,d$ such that $\{a,b\}$ and $\{a,c\}$ are edges but $\{a,d\}$ and $\{b,c\}$ are not edges (see Fig.\ \ref{fig:kirby-love-property-definition}(a)).
\end{definition}

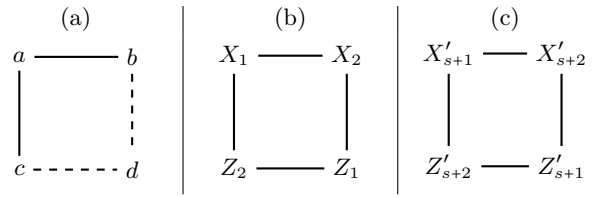
\begin{figure}
    \centering
    \begin{tabular}{>{\centering\arraybackslash}p{2.7cm}|>{\centering\arraybackslash}p{2.7cm}|>{\centering\arraybackslash}p{2.7cm}}
        (a) & (b) & (c) \\
        \begin{tikzpicture}
            \node (a) at (0,1.5) {$a$};
            \node (b) at (1.5,1.5) {$b$};
            \node (d) at (1.5,0) {$d$};
            \node (c) at (0,0) {$c$};
            \draw[thick] (a) -- (b);
            \draw[thick] (a) -- (c);
            \draw[dashed, thick] (c) -- (d);
            \draw[dashed, thick] (b) -- (d);
        \end{tikzpicture}
        & 
        \begin{tikzpicture}
            \node (a) at (0,1.5) {$X_1$};
            \node (b) at (1.5,1.5) {$X_2$};
            \node (d) at (1.5,0) {$Z_1$};
            \node (c) at (0,0) {$Z_2$};
            \draw[thick] (a) -- (b);
            \draw[thick] (a) -- (c);
            \draw[thick] (c) -- (d);
            \draw[thick] (b) -- (d);
        \end{tikzpicture}
        & 
        \begin{tikzpicture}
            \node (a) at (0,1.5) {$X_{s+1}'$};
            \node (b) at (1.5,1.5) {$X_{s+2}'$};
            \node (d) at (1.5,0) {$Z_{s+1}'$};
            \node (c) at (0,0) {$Z_{s+2}'$};
            \draw[thick] (a) -- (b);
            \draw[thick] (a) -- (c);
            \draw[thick] (c) -- (d);
            \draw[thick] (b) -- (d);
        \end{tikzpicture}
    \end{tabular}
    \caption{(a) All $4$-vertex graphs with the Kirby--Love property (up to relabeling of vertices). A solid (resp.\ no) line between two vertices indicates the presence (resp.\ absence) of an edge between the two vertices. A dashed line between two vertices indicates that the edge may or may not be present. In general, a graph has the Kirby--Love property if and only if it contains such a $4$-vertex graph as an induced subgraph. (b) Compatibility graph of $\{X_1, X_2, Z_1, Z_2\} \subseteq \mathcal{P}_2$. This graph has the Kirby--Love property. (c) Subgraph induced by the vertices representing $X_{s+1}', X_{s+2}', Z_{s+1}', Z_{s+2}'$ in the compatibility graph of the set of check measurements $\mathcal{C}$ of a code with $g \ge 2$ gauge qubits, as in the proof of Lemma \ref{lem:comp_graph_kl_iff_at_least_two_gauge_qubits}.}
    \label{fig:kirby-love-property-definition}
\end{figure}

In the language of compatibility graphs, the set $\mathcal{M}$ in a quantum measurement scenario $\langle X, \mathcal{M}, O \rangle$ is in fact the set of all maximal cliques in the compatibility graph of $X$.

\subsection{AvN property of empirical models}\label{sec:prelims:avn}

In \cite{abramsky2015contextuality}, the authors pointed out that strong contextuality can be established for a large family of quantum states using ``all-versus-nothing'' arguments like the one introduced in \cite{mermin1990extreme}, and proposed the following definitions to capture the essence of such arguments.

\begin{definition}\label{def:possibilistic_model_AvN}
    Let $\mathcal{S}$ be a possibilistic empirical model on a quantum measurement scenario $\langle X, \mathcal{M}, R \rangle$ (with event sheaf $\mathcal{E}$), where $R$ is a ring.

    An \emph{$R$-linear equation} is a triple $\phi = \langle C, a, b \rangle$ with $C \in \mathcal{M}$, $a \colon C \to R$ (representing the coefficients) and $b \in R$ (representing the constant), and we write $V_\phi \coloneqq C$. For each $C \in \mathcal{M}$, we say a section $s \in \mathcal{E}(C)$ \emph{satisfies} $\phi = \langle C, a, b \rangle$, written $s \models \phi$, if \[\sum_{m \in C} a(m)s(m) = b.\] The \emph{$R$-linear theory of $\mathcal{S}$} is the set $\mathbb{T}_R(\mathcal{S})$ of all $R$-linear equations $\phi$ such that every $s \in \mathcal{S}(V_\phi)$ satisfies $\phi$. We say a section $s \in \mathcal{S}(X)$ \emph{satisfies} $\mathbb{T}_R(\mathcal{S})$ if $s|_{V_\phi}$ satisfies $\phi$ for all $\phi \in \mathbb{T}_R(\mathcal{S})$, and say $\mathcal{S}$ is \emph{AvN} if $\mathbb{T}_R(\mathcal{S})$ is inconsistent, i.e., there is no section $s \in \mathcal{S}(X)$ that satisfies $\mathbb{T}_R(\mathcal{S})$. We will also say an empirical model is AvN if the associated possibilistic empirical model is AvN.
\end{definition}

The following is proven in \cite[][Proposition 7]{abramsky2015contextuality}.

\begin{fact} \label{fact:avn-implies-strongly-contextual}
    A possibilistic empirical model on a quantum measurement scenario which is AvN is also strongly contextual.
\end{fact}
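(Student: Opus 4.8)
The plan is to prove the contrapositive: any possibilistic empirical model $\mathcal{S}$ that is \emph{not} strongly contextual --- that is, one admitting a global section $s \in \mathcal{S}(X)$ --- cannot be AvN. Concretely, I would show that such a global section $s$ automatically satisfies the entire $R$-linear theory $\mathbb{T}_R(\mathcal{S})$, which witnesses the consistency of $\mathbb{T}_R(\mathcal{S})$ and hence contradicts the AvN condition.

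The only thing to unpack is the interplay between the definition of $\mathbb{T}_R(\mathcal{S})$ and the restriction behavior of global sections. Fix a global section $s \in \mathcal{S}(X)$ and an arbitrary $R$-linear equation $\phi = \langle C, a, b \rangle \in \mathbb{T}_R(\mathcal{S})$. By definition $V_\phi = C$ with $C \in \mathcal{M}$, and since $\mathcal{S}$ is a subpresheaf of the event sheaf $\mathcal{E}$, the restriction $s|_{V_\phi} = s|_C$ lies in $\mathcal{S}(C) = \mathcal{S}(V_\phi)$. By the very definition of the $R$-linear theory, every element of $\mathcal{S}(V_\phi)$ satisfies $\phi$, so in particular $s|_{V_\phi} \models \phi$. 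As $\phi \in \mathbb{T}_R(\mathcal{S})$ was arbitrary, $s$ satisfies $\mathbb{T}_R(\mathcal{S})$ in the sense of Definition~\ref{def:possibilistic_model_AvN}.

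Thus $\mathbb{T}_R(\mathcal{S})$ is consistent, so $\mathcal{S}$ is not AvN; taking the contrapositive, if $\mathcal{S}$ is AvN then $\mathcal{S}(X) = \emptyset$, which is exactly the definition of strong contextuality. I do not expect any genuine obstacle here --- the statement follows directly from the definitions --- so the only point requiring care is the bookkeeping between a section defined on all of $X$ and its restrictions to the various contexts $C$ appearing as $V_\phi$. I would also remark in passing that this direction uses nothing about quantum representations beyond $\mathcal{M}$ being a cover of $X$, so the implication holds for general measurement scenarios, not just those with a quantum representation.
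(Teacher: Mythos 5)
Your proof is correct and is exactly the standard argument: the paper states this result as a Fact, citing Proposition~7 of the Abramsky et al.\ reference rather than reproducing a proof, and that cited proof is precisely your contrapositive --- any global section $s \in \mathcal{S}(X)$ restricts (since $\mathcal{S}$ is a subpresheaf of $\mathcal{E}$) to an element of $\mathcal{S}(V_\phi)$ for each $\phi$, hence satisfies every equation in $\mathbb{T}_R(\mathcal{S})$, witnessing consistency. No gaps; your closing remark that nothing quantum is used is also accurate.
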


We now specialize to the quantum measurement scenario $\langle X, \mathcal{M}, R \rangle$ where $R = \Z_2 = \{0, 1\}$ is the ring of integers modulo $2$ and the observables $x$ in $X$ only have eigenvalues $1$ and/or $-1$ (or equivalently, satisfy $x^2 = I$ since $x$ is diagonalizable), corresponding to the measurement outcomes $0$ and $1$ respectively. (In other words, each outcome $b \in \Z_2$ corresponds to the eigenvalue $(-1)^b$.)

Given a multiplicative relation\footnote{It makes sense for $a(x)$ to take values in $\Z_2$ because $x^2 = I$.} of commuting observables \[\prod_{x \in C} x^{a(x)} = (-1)^b I\] for some $C \in \mathcal{M}$, $b \in \Z_2$ and $a\colon C \to \Z_2$, we should expect, after 
measurement of a joint eigenstate of the observables whose eigenvalues are given by $(-1)^{s(x)}$ for $s \colon C \to \Z_2$ (and observing the outcome $s$), that the following relation holds:
\[\prod_{x \in C} (-1)^{s(x)a(x)} = (-1)^b,\]
or equivalently
\[\sum_{x \in C} a(x) s(x) = b.\]
This motivates the following definition of state-independent AvN in \cite{kim2023stateindependentallversusnothingarguments}.

\begin{definition}\label{def:quantum_measurement_schenario_AvN}
    Suppose $\langle X, \mathcal{M}, \Z_2 \rangle$ is a quantum measurement scenario where every observable in $X$ is self-inverse. The \emph{linear theory of $X$} is the set $\mathbb{T}_{\Z_2}(X)$ of all $\phi = \langle C, a, b \rangle$ (for $C \in \mathcal{M}$, $a \colon C \to \Z_2$, $b \in \Z_2$) such that the following relation holds in the space of observables on the Hilbert space:
    \[\prod_{x \in C} x^{a(x)} = (-1)^b I.\]
    We say a section $s \in \mathcal{E}(X)$ \emph{satisfies} $\mathbb{T}_{\Z_2}(X)$ if $s|_{V_\phi}$ satisfies $\phi$ for all $\phi \in \mathbb{T}_{\Z_2}(X)$, and say $X$ is \emph{state-independently AvN} if $\mathbb{T}_{\Z_2}(X)$ is inconsistent, i.e., there is no section $s \in \mathcal{E}(X)$ that satisfies $\mathbb{T}_{\Z_2}(X)$.
\end{definition}

The next result is proven in \cite{kim2023stateindependentallversusnothingarguments} for the case where $X$ is a subset of the set of observables in the Pauli $n$-group (which are self-inverse), but the same proof also applies to the more general result below, so we omit its proof.

\begin{theorem} \label{thm:siavn-implies-avn}
    Suppose $\langle X, \mathcal{M}, \Z_2 \rangle$ is a quantum measurement scenario, where the operators in $X$ have eigenvalues $1$ and/or $-1$ corresponding to the measurement outcomes $0$ and $1$ respectively. If $X$ is state-independently AvN, then any possibilistic empirical model on $\langle X, \mathcal{M}, \Z_2 \rangle$ that is defined by some quantum state is AvN.
\end{theorem}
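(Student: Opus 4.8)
The plan is to show that the state-independent linear theory $\mathbb{T}_{\Z_2}(X)$ of Definition~\ref{def:quantum_measurement_schenario_AvN} is contained in the linear theory $\mathbb{T}_{\Z_2}(\mathcal{S})$ of the possibilistic empirical model $\mathcal{S}$ defined by an arbitrary quantum state $\rho$ (in the sense of Definition~\ref{def:possibilistic_model_AvN}). Granting this inclusion, suppose for contradiction that $\mathcal{S}$ is not AvN, i.e.\ there is a global section $s \in \mathcal{S}(X)$ with $s|_{V_\phi} \models \phi$ for every $\phi \in \mathbb{T}_{\Z_2}(\mathcal{S})$. Since $\mathbb{T}_{\Z_2}(X) \subseteq \mathbb{T}_{\Z_2}(\mathcal{S})$, that same $s$ satisfies every equation of $\mathbb{T}_{\Z_2}(X)$; but $s$ also lies in $\mathcal{E}(X) \supseteq \mathcal{S}(X)$, so it is a global section of the event sheaf satisfying $\mathbb{T}_{\Z_2}(X)$, contradicting the assumption that $X$ is state-independently AvN. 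Hence $\mathcal{S}$ is AvN. All the content is therefore in the inclusion $\mathbb{T}_{\Z_2}(X) \subseteq \mathbb{T}_{\Z_2}(\mathcal{S})$.

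To prove the inclusion, fix $\phi = \langle C, a, b \rangle \in \mathbb{T}_{\Z_2}(X)$, so the operator identity $\prod_{x \in C} x^{a(x)} = (-1)^b I$ holds, and let $e = \{e_{C'}\}_{C' \in \mathcal{M}}$ be the $\R_{\ge 0}$-valued empirical model defined by $\rho$. Take any $s \in \mathcal{S}(C)$; applying the defining condition of $\mathcal{S}$ with the context $C' = C$ gives $s \in \supp(e_C)$, i.e.\ $e_C(s) > 0$. Because the operators in $C$ pairwise commute and are self-inverse, they are simultaneously diagonalizable, and $e_C(s) = \tr\!\big(\rho P_s\big)$, where $P_s = \prod_{x \in C} \tfrac{1}{2}\big(I + (-1)^{s(x)} x\big)$ is the projector onto their joint eigenspace on which each $x$ has eigenvalue $(-1)^{s(x)}$. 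From $e_C(s) > 0$ we get $P_s \ne 0$, so this eigenspace contains a nonzero vector $v$. On $v$, each $x$ acts as the scalar $(-1)^{s(x)}$, hence $\prod_{x \in C} x^{a(x)}$ acts as $(-1)^{\sum_{x \in C} a(x) s(x)}$; but by the operator identity it also acts as $(-1)^b$. Comparing these and using $v \ne 0$ yields $\sum_{x \in C} a(x) s(x) \equiv b \pmod 2$, i.e.\ $s \models \phi$. Since $s \in \mathcal{S}(C) = \mathcal{S}(V_\phi)$ was arbitrary, $\phi \in \mathbb{T}_{\Z_2}(\mathcal{S})$, completing the inclusion.

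The argument is largely bookkeeping rather than involving a genuine obstacle; the two points requiring care are (i) correctly unwinding the possibilistic definition of $\mathcal{S}(C)$ to extract the positive-probability statement $e_C(s) > 0$, which uses only the $C' = C$ instance of the defining condition and so does not rely on the full no-signaling structure, and (ii) keeping the correspondence between measurement outcomes in $\Z_2$ and eigenvalues $(-1)^{(\cdot)}$ consistent throughout, so that a multiplicative relation among the observables transports to the additive relation $\sum_{x \in C} a(x) s(x) = b$ over $\Z_2$. One should also note that self-inverseness of the observables is used only to make $x^{a(x)}$ well-defined and to fix the spectrum $\{+1, -1\}$, so the reasoning applies verbatim to the Pauli subsets treated in~\cite{kim2023stateindependentallversusnothingarguments} as well as to the more general scenarios considered here.
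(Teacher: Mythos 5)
Your proof is correct and is essentially the argument the paper has in mind: the paper omits the proof, deferring to~\cite{kim2023stateindependentallversusnothingarguments}, whose argument is precisely your inclusion $\mathbb{T}_{\Z_2}(X) \subseteq \mathbb{T}_{\Z_2}(\mathcal{S})$ established via a nonzero joint eigenvector in the support of the outcome distribution. Your observations that only the $C'=C$ instance of the support condition is needed and that self-inverseness merely fixes the spectrum are accurate and explain why the argument generalizes beyond Pauli observables, exactly as the paper claims.
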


\section{Contextuality of Quantum Error-Correcting Codes}\label{sec:QEC_contextuality}

The primary operations performed on a quantum system used for a quantum error-correcting code $\mathcal{Q}$ are the check measurements $\mathcal{C}$ whose outcomes give some information about errors to perform an appropriate correction operation.
In the case of a subspace stabilizer code, the check measurements $\mathcal{C}$ consist of repeated measurements of the Pauli observables that generate its stabilizer group $\mathcal{S}$ (see e.g., Section IV of~\cite{dennis2002topological}).
In contrast, for a subsystem stabilizer code, the check measurements $\mathcal{C}$ consist of measurements of the Pauli observables that generate its gauge group $\mathcal{G}$ in a particular sequence such that the eigenvalues of its stabilizers can be inferred from the check outcomes (see e.g.,~\cite{suchara2011constructions,higgott2021subsystem}).

As in other tasks in quantum information and computation, the statistics of the outcomes from the check measurements $\mathcal{C}$ of a quantum error-correcting code is where contextuality can manifest.
Thus we see that a quantum error-correcting code exhibits contextuality whenever its corresponding check measurements $\mathcal{C}$ exhibit contextual behaviors, as defined in the sheaf-theoretic framework of contextuality in Section~\ref{sec:prelims:sheaf_contextuality}.
In particular, we characterize the contextuality of quantum error-correcting codes in the stabilizer formalism (see Section~\ref{sec:prelims:quantum_codes}) by its check measurements $\mathcal{C}$ consisting of Pauli observables.

Our results regarding contextuality of quantum error-correcting codes in this section revolves around the partial closure of the check measurements $\mathcal{C}$ of that code (see Definition \ref{def:partial-closure} in Section~\ref{sec:equivalence_of_contextuality_notions_under_partial_closure}).
We explain why considering the partial closure of check measurements is operationally justified for quantum error-correcting codes in Remark~\ref{rem:partial_closure_motivation}.

We now formally define contextuality and strong contextuality for quantum error-correcting codes whose check measurements lie in $\mathcal{P}_n$.

\begin{definition}\label{def:QECC_contextuality}
    Let $\mathcal{C} \subseteq \mathcal{P}_n$ be the set of check measurements of a quantum error-correcting code, and $\mathcal{M}$ (resp.\ $\bar{\mathcal{M}}$) be the set of maximally commuting subsets of $\mathcal{C}$ (resp.\ $\bar{\mathcal{C}}$, where $\bar{\mathcal{C}}$ is the partial closure of $\mathcal{C}$ in $\mathcal{P}_n$). We say this quantum error-correcting code is \emph{contextual} (resp.\ \emph{contextual in a partial closure}) if the empirical model on $\langle \mathcal{C}, \mathcal{M}, \Z_2 \rangle$ (resp.\ $\langle \bar{\mathcal{C}}, \bar{\mathcal{M}}, \Z_2 \rangle$) defined by \emph{some} quantum state on the system is contextual. We can also say this code is \emph{strongly contextual} (resp.\ \emph{strongly contextual in a partial closure}) if the empirical model on $\langle \mathcal{C}, \mathcal{M}, \Z_2 \rangle$ (resp.\ $\langle \bar{\mathcal{C}}, \bar{\mathcal{M}}, \Z_2 \rangle$) defined by \emph{some} quantum state on the system is strongly contextual.
\end{definition}

\begin{remark}\label{rem:partial_closure_motivation}
    Performing any two commuting Pauli measurements $p$ and $q$ consecutively in any order is equivalent to performing the Pauli measurement $pq$ (or $qp$), in that we can infer the outcome from measuring $pq$ by measuring $p$ and then $q$ (or measuring $q$ and then $p$).
    Hence, if commuting measurements $p$ and $q$ are in the set of checks $\mathcal{C}$, we effectively also obtain the statistics of $pq$ from measuring all checks in $\mathcal{C}$.
    Thus, in determining which observables should be considered in a contextuality scenario for a given code with checks $\mathcal{C}$, it is natural to also include the product of any two commuting Pauli observables.
    This leads to the notion of the \textit{partial closure} $\Bar{\mathcal{C}}$ of check measurements $\mathcal{C}$ which contains all Pauli observables whose values can be inferred from $\mathcal{C}$ (by iteratively taking the product of any pair of commuting observables, starting from those in $\mathcal{C}$).
    The partial closure for Pauli measurements is defined formally and discussed in Definition \ref{def:partial-closure} in Section~\ref{sec:equivalence_of_contextuality_notions_under_partial_closure}.
\end{remark}

Recall from Section~\ref{sec:prelims:quantum_codes} that for an $\llbracket n,k,g,d\rrbracket$ subsystem qubit stabilizer code, its gauge group $\mathcal{G}$, stabilizer group $\mathcal{S}$, and its normalizer $N(\mathcal{S})$ can be described in terms of virtual Pauli operators $\{X_j', Z_j'\}_{1 \le j \le n}$ from $G_n$.
Its check measurements $\mathcal{C}$ can therefore be identified by the independent generators of $\mathcal{G}$, i.e., $\mathcal{C} = \{Z_1', \cdots, Z_s', X_{s+1}', Z_{s+1}', \cdots, X_{s+g}', Z_{s+g}'\}$ for some $g \ge 0$ with $s+g \le n$.

\begin{lemma} \label{lem:comp_graph_kl_iff_at_least_two_gauge_qubits}
The compatibility graph of the set $\mathcal{C} \subseteq \mathcal{P}_n$ of check measurements in the subsystem qubit stabilizer code has the Kirby--Love property if and only if $g \ge 2$.
\end{lemma}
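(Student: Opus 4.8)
The plan is to prove both implications by directly inspecting the compatibility graph of $\mathcal{C}$ using only the commutation data of the virtual Pauli operators recalled above: $X_j'$ and $Z_j'$ anticommute, while every other pair among $\{X_1',Z_1',\dots,X_n',Z_n'\}$ commutes. Since $\mathcal{C} = \{Z_1',\dots,Z_s',X_{s+1}',Z_{s+1}',\dots,X_{s+g}',Z_{s+g}'\}$, the edge structure of its compatibility graph is completely determined by these relations, so both directions reduce to combinatorics on a graph whose only non-edges come from the pairs $(X_j',Z_j')$ with $s+1\le j\le s+g$.

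For the ``if'' direction, suppose $g\ge 2$. Then $X_{s+1}', X_{s+2}', Z_{s+1}', Z_{s+2}'$ all lie in $\mathcal{C}$, and they are four distinct operators because they form part of a symplectic basis and are in particular linearly independent modulo phase in the construction of~\cite{poulin2005stabilizer}. Taking $a = X_{s+1}'$, $b = X_{s+2}'$, $c = Z_{s+2}'$, $d = Z_{s+1}'$, I would check that $\{a,b\}$ and $\{a,c\}$ are edges (distinct indices, hence commuting) while $\{a,d\}$ and $\{b,c\}$ are not edges (same index, hence anticommuting). By Definition~\ref{def:kirby-love-property} this already establishes the Kirby--Love property, the status of the remaining pairs $\{b,d\}$ and $\{c,d\}$ being immaterial; this is precisely the induced subgraph in Fig.~\ref{fig:kirby-love-property-definition}(c).

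For the ``only if'' direction I would argue the contrapositive: if $g\le 1$, then the compatibility graph of $\mathcal{C}$ has at most one non-edge. When $g=0$, $\mathcal{C}=\{Z_1',\dots,Z_s'\}$ is pairwise commuting, so the graph is complete; when $g=1$, the only anticommuting pair in $\mathcal{C}=\{Z_1',\dots,Z_s',X_{s+1}',Z_{s+1}'\}$ is $(X_{s+1}',Z_{s+1}')$, giving exactly one non-edge. On the other hand, the Kirby--Love property requires two \emph{distinct} non-edges $\{a,d\}$ and $\{b,c\}$: since $\{a,b\}$ and $\{a,c\}$ are edges we have $a\notin\{b,c\}$, so $\{a,d\}=\{b,c\}$ is impossible. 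A graph with at most one non-edge therefore cannot have the property, completing this direction.

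The argument is short, and I do not expect a substantive obstacle: the only points needing care are (i) confirming that the four operators $X_{s+1}', X_{s+2}', Z_{s+1}', Z_{s+2}'$ are genuinely distinct vertices, which follows from the independence built into the symplectic-basis construction, and (ii) the small counting observation that the Kirby--Love property forces at least two non-edges in the compatibility graph.
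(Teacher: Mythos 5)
Your proof is correct and follows essentially the same route as the paper's: the same witness $a=X_{s+1}'$, $b=X_{s+2}'$, $c=Z_{s+2}'$, $d=Z_{s+1}'$ for $g\ge 2$, and the same case analysis for $g\le 1$ (you merely make explicit the observation, left implicit in the paper, that the Kirby--Love property forces two distinct non-edges).
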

\begin{proof}
If $g=0$, all the operators in $\mathcal{C}$ commute with one another, so the compatibility graph clearly does not have the Kirby--Love property. If $g=1$, there is only one pair of anticommuting operators in $\mathcal{C}$, namely $X_{s+1}'$ and $Z_{s+1}'$, and so the compatibility graph again does not have the Kirby--Love property. If $g \ge 2$, it is straightforward to verify that the condition for Kirby--Love property of the compatibility graph of $\mathcal{C}$ in Definition \ref{def:kirby-love-property} is met by taking $a = X_{s+1}', b = X_{s+2}', c = Z_{s+2}', d = Z_{s+1}'$ (see Fig.\ \ref{fig:kirby-love-property-definition}(c)).
\end{proof}

\begin{theorem} \label{thm:contextuality_in_terms_of_gauge_qubits}
    Let $\mathcal{C} \subseteq \mathcal{P}_n$ be the set of check measurements in the subsystem qubit stabilizer code, and $\bar{\mathcal{M}}$ be the set of maximally commuting subsets of $\bar{\mathcal{C}}$, where $\bar{\mathcal{C}}$ is the partial closure of $\mathcal{C}$ in $\mathcal{P}_n$. Let  $\mathcal{S}_\rho$ be the empirical model on $\langle \bar{\mathcal{C}}, \bar{\mathcal{M}}, \Z_2 \rangle$ defined by some quantum state $\rho$. Then, $\mathcal{S}_\rho$ is noncontextual if $g < 2$, while $\mathcal{S}_\rho$ is strongly contextual if $g \ge 2$.
\end{theorem}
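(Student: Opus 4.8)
The plan is to treat the two regimes $g<2$ and $g\ge2$ separately, working directly with the explicit description of $\mathcal{C}$ in terms of the virtual Pauli operators $\{X_j',Z_j'\}_{1\le j\le n}$ and feeding the $g\ge2$ case into the AvN machinery of Fact~\ref{fact:avn-implies-strongly-contextual} and Theorem~\ref{thm:siavn-implies-avn}. Lemma~\ref{lem:comp_graph_kl_iff_at_least_two_gauge_qubits} already signals the dichotomy: the Kirby--Love $4$-cycle on $X_{s+1}',X_{s+2}',Z_{s+1}',Z_{s+2}'$ is precisely the obstruction that disappears for $g<2$.

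For $g<2$ I would first pin down the partial closure $\bar{\mathcal{C}}$ (Definition~\ref{def:partial-closure}) from relations (1)--(3). When $g=0$, $\mathcal{C}$ is a commuting set, so $\bar{\mathcal{C}}$ is (up to sign) an abelian subgroup of $\mathcal{P}_n$, its compatibility graph is complete, $\bar{\mathcal{M}}=\{\bar{\mathcal{C}}\}$, and the single context carries the tautological global section; hence $\mathcal{S}_\rho$ is noncontextual. When $g=1$, a short closure computation gives $\bar{\mathcal{C}}=\mathcal{S}'\cup X_{s+1}'\mathcal{S}'\cup Z_{s+1}'\mathcal{S}'$ with $\mathcal{S}'=\langle Z_1',\dots,Z_s'\rangle$: the closure never manufactures a $Y$-type operator on the gauge qubit, since $X_{s+1}'$ and $Z_{s+1}'$ anticommute and no available product places them in a commuting pair. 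Thus there are exactly two maximal contexts $A=\mathcal{S}'\cup X_{s+1}'\mathcal{S}'$ and $B=\mathcal{S}'\cup Z_{s+1}'\mathcal{S}'$ with $A\cap B=\mathcal{S}'$. For any $\rho$, the no-signaling condition gives a common marginal $e_{\mathcal{S}'}$ on $\mathcal{S}'$, and I would glue $e_A,e_B$ by the conditional-independence formula $d(u,v,w)=e_A(u,v)e_B(v,w)/e_{\mathcal{S}'}(v)$ (with $0/0:=0$), which is a probability distribution on the event sheaf over $\bar{\mathcal{C}}$ restricting to $e_A$ and $e_B$, hence a global section; so $\mathcal{S}_\rho$ is noncontextual.

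For $g\ge2$ I would exhibit a Mermin--Peres magic square inside $\bar{\mathcal{C}}$ on the gauge qubits $s+1,s+2$: take the nine observables $X_{s+1}',\ X_{s+2}',\ X_{s+1}'X_{s+2}',\ Z_{s+2}',\ Z_{s+1}',\ Z_{s+1}'Z_{s+2}',\ X_{s+1}'Z_{s+2}',\ Z_{s+1}'X_{s+2}',\ Y_{s+1}'Y_{s+2}'$. Each is a product of a commuting pair of operators already constructed, so this set $X'$ lies in $\bar{\mathcal{C}}$; the one subtlety is that $Y_{s+1}'Y_{s+2}'$ appears only after two closure steps, as $(X_{s+1}'Z_{s+2}')(Z_{s+1}'X_{s+2}')$, whose factors commute by (1)--(3). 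The same relations show that within $X'$ two observables commute iff they share a row or column, so the maximal cliques of $X'$ are exactly the three rows and three columns, and the standard parity count gives five defining product relations equal to $+I$ and one equal to $-I$; summing the row relations against the column relations yields $0=1$ in $\Z_2$, so $\mathbb{T}_{\Z_2}(X')$ is inconsistent and $X'$ is state-independently AvN. By Theorem~\ref{thm:siavn-implies-avn} the possibilistic empirical model on $\langle X',\mathcal{M}',\Z_2\rangle$ defined by $\rho$ is AvN, hence strongly contextual by Fact~\ref{fact:avn-implies-strongly-contextual}. Finally I would transfer this to $\bar{\mathcal{C}}$: if $\mathcal{S}_\rho$ had a global section $t$, then for each row/column $C'$ of the square, $C'$ is a clique of the compatibility graph of $\bar{\mathcal{C}}$ and so extends to some maximal context $C\in\bar{\mathcal{M}}$; since the marginal of $e_C$ onto $C'$ is exactly the outcome distribution of measuring $C'$ on $\rho$, $t|_{C'}$ lies in its support, and by condition~\ref{def:emp-model-e3} together with separatedness $t|_{X'}$ would be a global section of the sub-scenario's possibilistic model---a contradiction. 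Hence $\mathcal{S}_\rho$ is strongly contextual (and contextual, by Theorem~\ref{thm:strong-contexuality-implies-contextuality}).

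I expect the conceptual skeleton---magic square $\Rightarrow$ state-independent AvN $\Rightarrow$ strong contextuality---to be routine once one notes that the virtual operators obey the abstract Pauli relations. The real friction, and the likeliest source of error, is the bookkeeping: nailing down $\bar{\mathcal{C}}$ precisely in the $g=1$ case and checking the conditional-independence gluing is well defined, and the transfer step for $g\ge2$, which must extend each context of $X'$ individually (they need not sit inside a single maximal context of $\bar{\mathcal{C}}$) and transport supports via marginalization. An alternative to the $g<2$ argument would be to verify that the compatibility graph of $\bar{\mathcal{C}}$ inherits failure of the Kirby--Love property from $\mathcal{C}$ and invoke the general equivalence between that property and (strong) contextuality; I would still keep the direct gluing as the more self-contained route.
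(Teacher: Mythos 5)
Your argument is correct, but it is a genuinely different route from the paper's. The paper's proof of this theorem is a two-line deduction: Lemma~\ref{lem:comp_graph_kl_iff_at_least_two_gauge_qubits} (the compatibility graph of $\mathcal{C}$ has the Kirby--Love property iff $g\ge 2$) combined with the general equivalence, under partial closure, of the Kirby--Love property with contextuality and strong contextuality (Corollary~\ref{cor:strong-contextuality-equivalences}). You instead prove both directions by hand. Your $g<2$ gluing is precisely the two-context special case of the paper's Theorem~\ref{thm:contextual-implies-comp-graph-kl} (product/conditional-independence gluing over contexts that intersect only in the universal vertices $\mathcal{S}'$), so there you are re-deriving a general lemma in a concrete instance. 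Your $g\ge 2$ argument is where the real divergence lies: the paper routes through the abstract chain Kirby--Love $\Rightarrow$ KL-contextual $\Rightarrow$ state-independently AvN (Theorem~\ref{thm:siavn-iff-kl-contextual}, the linear-algebraic determining-tree argument that resolves the Kim--Abramsky conjecture) $\Rightarrow$ AvN $\Rightarrow$ strongly contextual, whereas you short-circuit the hardest link by exhibiting an explicit Mermin--Peres magic square on the two virtual gauge qubits inside $\bar{\mathcal{C}}$ and verifying the $0=1$ parity contradiction directly, then transferring strong contextuality of the sub-scenario to the full scenario by marginalizing supports. Your sign bookkeeping for the square (one column equal to $-I$, the rest $+I$) and the transfer step (extending each row/column to a maximal context of $\bar{\mathcal{M}}$ and using zerosumfreeness of $\R_{\ge 0}$ so that supports marginalize correctly) both check out. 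What your approach buys is a self-contained, elementary proof of this specific theorem that does not depend on Sections~\ref{sec:kirby_love} and \ref{sec:equivalence_of_contextuality_notions_under_partial_closure}; what it gives up is the reusable general equivalence that the paper needs anyway for its other results. One cosmetic point: in the transfer step you invoke condition~\ref{def:emp-model-e3} and separatedness, but what you actually use is just the support characterization of the associated possibilistic model, namely that $t|_C\in\supp(e_C)$ for every maximal context $C$ implies $t|_{C'}\in\supp(e_C|_{C'})$ for each row/column $C'\subseteq C$.
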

\begin{proof}
    This follows from Lemma \ref{lem:comp_graph_kl_iff_at_least_two_gauge_qubits} and the equivalence between the Kirby--Love property, (probabilistic) contextuality, and strong contextuality in a partial closure, which will be established later in Corollary \ref{cor:strong-contextuality-equivalences}.
\end{proof}

\begin{corollary} \label{cor:subsystem_code_contextual_iff_g_ge_2}
    The subsystem qubit stabilizer code is strongly contextual in a partial closure if and only if there are at least $2$ gauge qubits. It is noncontextual if and only if there are less than $2$ gauge qubits.
\end{corollary}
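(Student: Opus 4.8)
The plan is to read this corollary off Theorem~\ref{thm:contextuality_in_terms_of_gauge_qubits}; essentially all that remains is to unwind the ``some quantum state'' quantifier built into Definition~\ref{def:QECC_contextuality} and to invoke the implication (strongly contextual) $\Rightarrow$ (contextual) supplied by Theorem~\ref{thm:strong-contexuality-implies-contextuality}. I would split into the two regimes $g \ge 2$ and $g < 2$ and then combine them to obtain both biconditionals.

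For $g \ge 2$: Theorem~\ref{thm:contextuality_in_terms_of_gauge_qubits} gives that for \emph{every} state $\rho$ the empirical model $\mathcal{S}_\rho$ on $\langle \bar{\mathcal{C}}, \bar{\mathcal{M}}, \Z_2\rangle$ is strongly contextual. A fortiori there is at least one such state, so by Definition~\ref{def:QECC_contextuality} the code is strongly contextual in a partial closure; by Theorem~\ref{thm:strong-contexuality-implies-contextuality} it is then also contextual in a partial closure.

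For $g < 2$ (i.e.\ $g \in \{0,1\}$): Theorem~\ref{thm:contextuality_in_terms_of_gauge_qubits} gives that $\mathcal{S}_\rho$ is noncontextual for \emph{every} state $\rho$. Hence no state makes $\mathcal{S}_\rho$ contextual, and Definition~\ref{def:QECC_contextuality} then says the code fails to be contextual in a partial closure, i.e., it is noncontextual. Since, by Theorem~\ref{thm:strong-contexuality-implies-contextuality}, a strongly contextual empirical model would in particular be contextual, no state can make $\mathcal{S}_\rho$ strongly contextual either, so the code is also not strongly contextual in a partial closure.

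Combining the two cases yields both biconditionals: strongly contextual in a partial closure $\iff g \ge 2$, and noncontextual $\iff g < 2$ (the reverse implications coming from the contrapositives of the two regime statements). The only place demanding any care --- the mildest possible ``obstacle'' --- is checking that the existential quantifier in the definition of (strong) contextuality and the universal quantifier needed to certify noncontextuality are both already supplied by Theorem~\ref{thm:contextuality_in_terms_of_gauge_qubits}, which is in fact state-independent in each regime. Granting that, no further computation is needed here: the substantive work sits upstream in Lemma~\ref{lem:comp_graph_kl_iff_at_least_two_gauge_qubits} and in the equivalence results (Corollary~\ref{cor:strong-contextuality-equivalences}) that feed into Theorem~\ref{thm:contextuality_in_terms_of_gauge_qubits}.
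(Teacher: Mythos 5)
Your proposal is correct and follows exactly the paper's route: the paper's own proof is the one-line observation that the corollary follows from Theorem~\ref{thm:contextuality_in_terms_of_gauge_qubits} together with Definition~\ref{def:QECC_contextuality}, and your more careful unwinding of the existential/universal quantifiers (plus the use of Theorem~\ref{thm:strong-contexuality-implies-contextuality} to exclude strong contextuality when $g<2$) is just a fuller writing-out of that same argument. Nothing further is needed.
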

\begin{proof}
    This simply follows from Theorem \ref{thm:contextuality_in_terms_of_gauge_qubits} and the definition of contextuality of a quantum error-correcting code.
\end{proof}

This criterion establishes a sharp threshold at two gauge qubits for the emergence of strong contextuality in subsystem stabilizer codes under partial closure. Codes with zero or one gauge qubit admit a noncontextual hidden-variable model, allowing consistent classical value assignments to their check operators across all measurement contexts. In contrast, those with two or more gauge qubits exhibit strong contextuality, precluding such assignments and introducing a form of quantum nonclassicality essential for fault-tolerant universal computation, as demonstrated in code-switching protocols (Section~\ref{sec:contextuality_code_switching}).

\begin{example}
Stabilizer codes are noncontextual.
\end{example}
\begin{proof}
This follows from Corollary \ref{cor:subsystem_code_contextual_iff_g_ge_2}, since there are $g = 0$ gauge qubits in stabilizer codes.
\end{proof}

\begin{example} \label{eg:noncontextual_6113_code}
The $\llbracket 6,1,1,3 \rrbracket$ six-qubit error-correcting subsystem code \cite{shaw2008encoding} is noncontextual.
\end{example}

\begin{example}
The measurement set for the $\llbracket 9,1,4,3 \rrbracket$ Bacon--Shor code \cite{eczoo_bacon_shor_9} is strongly contextual in a partial closure.
\end{example}

\begin{example}
    When regarded as a subsystem code, the honeycomb code described in \cite{hastings2021dynamically} is strongly contextual in a partial closure, because there are $n - 1$ gauge qubits where $n \ge 3$ is the number of hexagonal plaquettes.
\end{example}

\section{Characterization of the Kirby--Love property}\label{sec:kirby_love}

We now give an alternative characterization of the Kirby--Love property. (This is related to the property described in \cite[][Theorem 3]{kirby2019contextuality}, but our proof is purely graph-theoretic and does not depend on contextuality arguments.)
We then use this characterization to establish that the graph corresponding to a measurement scenario possesses the Kirby--Love property whenever the empirical model on this scenario is strongly contextual (Theorem~\ref{thm:strongly-contextual-implies-comp-graph-kl}) or even merely contextual (Theorem~\ref{thm:contextual-implies-comp-graph-kl}).

\begin{lemma} \label{lem:disjoint-max-cliques-iff-three-vertex-property}
    Let $G$ be a graph. Then the maximal cliques of $G$ are disjoint if and only if $G$ does not contain vertices $a,b,c$ such that $\{a,b\}$ and $\{a,c\}$ are edges but $\{b,c\}$ is not an edge.
\end{lemma}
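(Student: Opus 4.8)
The plan is to prove both implications by contraposition, which recasts the lemma as the following equivalence: $G$ has two distinct maximal cliques sharing a common vertex if and only if $G$ contains vertices $a,b,c$ with $\{a,b\}$ and $\{a,c\}$ edges and $\{b,c\}$ a non-edge (i.e.\ an induced path on three vertices).

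For the direction starting from such a triple $a,b,c$, I would extend the edge $\{a,b\}$ to a maximal clique $K_1$ and the edge $\{a,c\}$ to a maximal clique $K_2$, using the fact that every clique is contained in some maximal clique. Both $K_1$ and $K_2$ contain $a$, so $K_1 \cap K_2 \neq \emptyset$. They must be distinct: if $K_1 = K_2$, then $b$ and $c$ would lie in a common clique and hence be adjacent, contradicting that $\{b,c\}$ is a non-edge. This exhibits two distinct maximal cliques meeting at $a$, so the maximal cliques of $G$ are not pairwise disjoint.

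For the converse, suppose $K_1 \neq K_2$ are maximal cliques sharing a vertex $a$. Maximality forces neither to contain the other, so I can pick $b \in K_1 \setminus K_2$. Since $b \notin K_2$ and $K_2$ is a maximal clique, $K_2 \cup \{b\}$ is not a clique, so there exists a vertex $c \in K_2$ that is not adjacent to $b$. Now $\{a,b\}$ is an edge (both lie in $K_1$), $\{a,c\}$ is an edge (both lie in $K_2$), and $\{b,c\}$ is a non-edge by the choice of $c$; moreover $b$ differs from $a$ and $c$ because $b \notin K_2$, and $c \neq a$ because $a$ is adjacent to $b$ while $c$ is not. Hence $a,b,c$ are three distinct vertices with the required adjacency pattern.

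The only subtle point — and the main thing to get right — is the selection of $c$ in the converse direction: one must take $c$ to be a non-neighbour of $b$ lying inside $K_2$ (whose existence is forced by the maximality of $K_2$), rather than an arbitrary element of $K_2 \setminus K_1$, since the latter need not be non-adjacent to $b$. Everything else is a routine verification that the chosen vertices are distinct and that the listed pairs are (non-)edges as claimed.
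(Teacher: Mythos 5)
Your proof is correct and is essentially the paper's argument recast in contrapositive form: in one direction you extend the two edges of the induced $P_3$ to maximal cliques that share $a$ yet must be distinct, and in the other you use maximality of $K_2$ to extract a non-neighbour of $b$ inside $K_2$ (where the paper instead notes that every neighbourhood is a clique, so $C_1\cup C_2$ would be a clique). Your distinctness checks are careful and correct, and the one implicit ingredient common to both proofs --- that every edge extends to a maximal clique --- is harmless here since all compatibility graphs in the paper are finite.
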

\begin{proof}
    Assume first that the maximal cliques of $G$ are disjoint, and let $a,b,c$ be any vertices such that $\{a,b\}$ and $\{a,c\}$ are edges. Then each of $\{a,b\}$ and $\{a,c\}$ belongs to a maximal clique, which must be the same one since they are not disjoint. In particular, $\{b,c\}$ is also an edge. This proves the ``only if'' direction.

    Assume instead that $G$ does not contain  vertices $a,b,c$ such that $\{a,b\}$ and $\{a,c\}$ are edges but $\{b,c\}$ is not an edge. It follows that the neighborhood $N(a)$ of any vertex $a$ in $G$ must form a clique. Consider any two maximal cliques $C_1, C_2$ of $G$ that are not disjoint, i.e., they share some vertex $a$. Then the clique $N(a)$ contains $(C_1 \cup C_2) \setminus \{a\}$, which must also thus be a clique. Then $C_1 \cup C_2$ is a clique containing both $C_1$ and $C_2$. By maximality, $C_1 = C_2$, as desired.
\end{proof}

\begin{lemma} \label{lem:kl-iff-not-disjoint-max-cliques}
    Let $G$ be a graph with vertex set $V$, and $U$ be the set of universal vertices in $G$, i.e., vertices in $G$ which are adjacent to all other vertices. Then $G$ has the Kirby--Love property if and only if the maximal cliques of the induced subgraph of $V \setminus U$ are not disjoint.
\end{lemma}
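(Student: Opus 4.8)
The plan is to deduce the statement directly from Lemma~\ref{lem:disjoint-max-cliques-iff-three-vertex-property}, which characterizes disjointness of the maximal cliques of a graph by the absence of a ``cherry'' --- a vertex adjacent to two mutually nonadjacent vertices. Writing $H \coloneqq G[V \setminus U]$ for the induced subgraph on the non-universal vertices, the claim reduces to showing that $G$ has the Kirby--Love property if and only if $H$ contains vertices $a,b,c$ with $\{a,b\}$ and $\{a,c\}$ edges but $\{b,c\}$ not an edge.

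For the forward direction, I would take a Kirby--Love witness in $G$: distinct vertices $a,b,c,d$ with $\{a,b\},\{a,c\}$ edges and $\{a,d\},\{b,c\}$ non-edges. The non-edge $\{a,d\}$ shows $a$ is not universal, and the non-edge $\{b,c\}$ shows $b$ and $c$ are not universal, so $a,b,c \in V \setminus U$; since $H$ is induced, the cherry $a,b,c$ survives in $H$, and Lemma~\ref{lem:disjoint-max-cliques-iff-three-vertex-property} then yields that the maximal cliques of $H$ are not disjoint. Conversely, if the maximal cliques of $H$ are not disjoint, Lemma~\ref{lem:disjoint-max-cliques-iff-three-vertex-property} supplies a cherry $a,b,c$ in $H$, and hence in $G$; because $a \notin U$ it has a non-neighbor $d \neq a$ in $G$, and since $\{a,b\}$ and $\{a,c\}$ are edges we automatically have $d \notin \{b,c\}$, so $a,b,c,d$ are distinct and form a Kirby--Love configuration.

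The argument is essentially immediate once Lemma~\ref{lem:disjoint-max-cliques-iff-three-vertex-property} is available; the only points requiring care are verifying that the three cherry vertices are genuinely non-universal (so that passing between $G$ and $H$ changes nothing) and tracking the distinctness of the four vertices. This is precisely where the choice of $U$ as the set of \emph{universal} vertices is forced: the extra vertex $d$ in the Kirby--Love configuration is exactly a witness that $a$ is not universal, so deleting the universal vertices is the correct reduction, and this small amount of bookkeeping is the main (and only) obstacle.
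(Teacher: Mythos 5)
Your proof is correct and follows essentially the same route as the paper's: both directions are obtained by combining Lemma~\ref{lem:disjoint-max-cliques-iff-three-vertex-property} with the observation that the non-edges $\{a,d\}$ and $\{b,c\}$ in a Kirby--Love configuration certify that $a,b,c$ are non-universal, and conversely that non-universality of $a$ supplies the fourth vertex $d$. Your extra bookkeeping on the distinctness of the four vertices is a harmless refinement the paper leaves implicit.
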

\begin{proof}
    Assume first that $G$ does not have the Kirby--Love property. We need to show that the maximal cliques of the induced subgraph of $V \setminus U$ are disjoint, or equivalently by Lemma \ref{lem:disjoint-max-cliques-iff-three-vertex-property} that it does not contain vertices $a,b,c$ such that $\{a,b\}$ and $\{a,c\}$ are edges but $\{b,c\}$ is not an edge. Indeed, suppose on the contrary there were three such vertices. Since $a \not\in U$, then there must be some vertex $d$ such that $\{a, d\}$ is not an edge. This means $G$ has the Kirby--Love property, a contradiction.

    Assume instead that $G$ has the Kirby--Love property, i.e., there are vertices $\{a,b,c,d\}$ such that $\{a,b\}$ and $\{a,c\}$ are edges but $\{a,d\}$ and $\{b,c\}$ are not edges. The latter condition implies that $a,b,c \in V \setminus U$. It then follows from Lemma \ref{lem:disjoint-max-cliques-iff-three-vertex-property} that the maximal cliques of the induced subgraph of $V \setminus U$ are not disjoint.
\end{proof}

The following lemma simplifies the calculation of the maximal cliques of a graph.

\begin{lemma} \label{lem:max-cliques-calculation}
    Let $G$ be a graph with vertex set $V$, and $U$ be the set of universal vertices in $G$. Let $\mathcal{M}$ be the set of maximal cliques of the induced subgraph of $V \setminus U$. Then the maximal cliques of $G$ are exactly the sets $U \cup C$ for $C \in \mathcal{M}$.
\end{lemma}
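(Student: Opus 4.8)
The plan is to verify the two inclusions directly from the definitions of clique and maximal clique, with no appeal to contextuality. First I would show that each set $U \cup C$ with $C \in \mathcal{M}$ is a maximal clique of $G$. That $U \cup C$ is a clique is immediate: $U$ is a clique since all of its vertices are universal and hence mutually adjacent, $C$ is a clique in the induced subgraph of $V \setminus U$ and therefore also a clique in $G$, and every vertex of $U$ is adjacent to every vertex of $C$ by universality. For maximality, suppose $v \in V$ is such that $U \cup C \cup \{v\}$ is a clique but $v \notin U \cup C$; then $v \in V \setminus U$, and since $v$ is adjacent to every vertex of $C$, the set $C \cup \{v\}$ is a clique of the induced subgraph of $V \setminus U$ strictly containing $C$, contradicting $C \in \mathcal{M}$. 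Hence $v \in U \cup C$, so $U \cup C$ is maximal.

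Conversely, I would take an arbitrary maximal clique $K$ of $G$ and show $K = U \cup C$ for some $C \in \mathcal{M}$. Since every vertex of $U$ is adjacent to every other vertex of $G$, in particular to every vertex of $K$, the set $K \cup U$ is again a clique, so maximality of $K$ forces $U \subseteq K$. Put $C \coloneqq K \setminus U$; this is a clique of the induced subgraph of $V \setminus U$. If $C$ were not maximal there, some vertex $v \in (V \setminus U) \setminus C$ would make $C \cup \{v\}$ a clique of that subgraph, and then $K \cup \{v\} = U \cup C \cup \{v\}$ would be a clique of $G$ strictly containing $K$ (using $v \notin U$, so $v \notin K = U \cup C$), contradicting maximality of $K$. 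Therefore $C \in \mathcal{M}$ and $K = U \cup C$, completing the proof.

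The argument is a short direct verification, so I do not foresee a genuine obstacle; the only point requiring a little care is the degenerate case $U = V$ (i.e.\ $G$ complete), where the induced subgraph of $V \setminus U$ has no vertices and its unique maximal clique is $\emptyset$, so the formula correctly recovers $V$ as the unique maximal clique of $G$. One should also phrase the maximality contradictions in terms of cliques \emph{strictly} containing the given one in both directions, to ensure the contradictions are valid.
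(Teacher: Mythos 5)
Your proof is correct and follows essentially the same direct verification as the paper: both directions are established by showing $U \cup C$ is a maximal clique and that any maximal clique $K$ of $G$ decomposes as $U \cup (K\setminus U)$ with $K \setminus U$ maximal in the induced subgraph. Your explicit treatment of the degenerate case $U = V$ matches a point the paper relegates to a footnote where the lemma is applied.
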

\begin{proof}    
    It is clear that $U \cup C$ is a clique for each $C \in \mathcal{M}$, since each vertex in $U$ is universal and $C$ is a clique. Each such $U \cup C$ is also maximal, because any vertex outside of it must lie in some $C' \in \mathcal{M}$ different from $C$ and so must not be adjacent to some vertex in $C$ by maximality of the clique $C$ in the induced subgraph of $V \setminus U$.

    It remains to be proven that these are all the maximal cliques in $G$. Let $K$ be any maximal clique in $G$. Then $K \setminus U$ is a clique in the induced subgraph of $V \setminus U$, so is contained in some $C$. Thus $K \subseteq U \cup C$, and equality holds by maximality of $K$.
\end{proof}

Finally, the theorem below proves that strong contextuality implies the Kirby--Love property of the compatibility graph.

\begin{theorem} \label{thm:strongly-contextual-implies-comp-graph-kl}
    The compatibility graph of any strongly contextual possibilistic empirical model on a quantum measurement scenario has the Kirby--Love property.
\end{theorem}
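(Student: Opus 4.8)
The plan is to argue by contraposition: assuming the compatibility graph $G$ of the scenario $\langle X, \mathcal{M}, O\rangle$ does \emph{not} have the Kirby--Love property, I will exhibit a global section of the empirical model $\mathcal{S}$, proving $\mathcal{S}$ is not strongly contextual. Write $V = X$ for the vertex set of $G$ and let $U \subseteq V$ be the set of universal vertices of $G$. By Lemma~\ref{lem:kl-iff-not-disjoint-max-cliques}, the assumption that $G$ lacks the Kirby--Love property is precisely the statement that the maximal cliques $\mathcal{M}_0$ of the induced subgraph on $V \setminus U$ are pairwise disjoint. By Lemma~\ref{lem:max-cliques-calculation}, the measurement cover is then $\mathcal{M} = \{\, U \cup C : C \in \mathcal{M}_0 \,\}$, and since the members of $\mathcal{M}_0$ are pairwise disjoint, any two distinct contexts $U \cup C$ and $U \cup C'$ intersect in exactly $U$. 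So the contexts form a ``book'' whose pages are glued along the common spine $U$.

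Next I would pin down a consistent outcome assignment on the spine $U$ and spread it to every page. Pick any $C_0 \in \mathcal{M}_0$; by property \ref{def:emp-model-e1} there is some $t \in \mathcal{S}(U \cup C_0)$, and since $\mathcal{S}$ is a subpresheaf of $\mathcal{E}$ its restriction $r \coloneqq t|_U$ lies in $\mathcal{S}(U)$. For each $C \in \mathcal{M}_0$ we have $U \subseteq U \cup C \in \mathcal{M}$, so property \ref{def:emp-model-e2} (surjectivity of restriction within a context) yields a section $s_C \in \mathcal{S}(U \cup C)$ with $s_C|_U = r$.

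Finally, the family $\{s_C\}_{C \in \mathcal{M}_0}$ is compatible in the sense of \ref{def:emp-model-e3}: for distinct $C, C' \in \mathcal{M}_0$ the overlap $(U\cup C) \cap (U \cup C')$ equals $U$, and $s_C|_U = r = s_{C'}|_U$. Hence \ref{def:emp-model-e3} provides a global section $s \in \mathcal{S}(X)$, so $\mathcal{S}(X) \ne \emptyset$ and $\mathcal{S}$ is not strongly contextual, completing the contrapositive. The degenerate cases ($\mathcal{M}_0 = \{C_0\}$, and in particular $V = U$ with $\mathcal{M}_0 = \{\emptyset\}$) are subsumed, since then $X = U \cup C_0$ is itself a context and $\mathcal{S}(X) \ne \emptyset$ directly.

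I expect the only genuine subtlety — the ``main obstacle'' — to be recognizing the correct structural reduction: that the failure of the Kirby--Love property forces all measurement contexts to share a single common block $U$ of universal observables, which is exactly the feature that lets a naive gluing succeed. Once Lemmas~\ref{lem:kl-iff-not-disjoint-max-cliques} and \ref{lem:max-cliques-calculation} are available, the empirical-model manipulations (restrict to $U$, re-extend through \ref{def:emp-model-e2}, glue through \ref{def:emp-model-e3}) are routine; the one point needing a moment's care is verifying $r \in \mathcal{S}(U)$ and that the constructed family is genuinely no-signaling on all pairwise overlaps, both of which follow at once from the subpresheaf property and the computed overlaps.
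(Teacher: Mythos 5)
Your proof is correct and follows essentially the same route as the paper's: it invokes Lemmas~\ref{lem:kl-iff-not-disjoint-max-cliques} and \ref{lem:max-cliques-calculation} to reduce to the ``contexts glued along the common spine $U$'' structure, then fixes a section on one context, propagates its restriction to $U$ via \ref{def:emp-model-e2}, and glues via \ref{def:emp-model-e3}. No gaps; the handling of the degenerate case $V=U$ matches the paper's footnoted remark.
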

\begin{proof}
    Let $\mathcal{S}$ be a possibilistic empirical model on a quantum measurement scenario $\langle X, \mathcal{M}, O \rangle$ whose compatibility graph $G$ does not have the Kirby--Love property. We need to show that $\mathcal{S}$ is not strongly contextual, i.e., $\mathcal{S}(X) \ne \emptyset$.

    Let $V$ be the vertex set of $G$, and $U$ be the set of universal vertices in $G$. Taking $\mathcal{M}'$ to be the (necessarily nonempty\footnote{If $V = U$, then $\mathcal{M}' = \{\emptyset\}$.}) set of maximal cliques of the induced subgraph of $V \setminus U$, we have \[\mathcal{M} = \{U \cup C : C \in \mathcal{M}'\},\] by Lemma \ref{lem:max-cliques-calculation}. As $G$ does not have the Kirby--Love property, Lemma \ref{lem:kl-iff-not-disjoint-max-cliques} says that the sets $C \in \mathcal{M}'$ are disjoint and so the intersection of any two distinct maximal contexts is $U$. (See Fig.\ \ref{fig:graph-structure-not-kirby-love} for a diagram of the structure of $G$.)

    Fix any $C_0 \in \mathcal{M}' \ne \emptyset$. Since $\mathcal{S}(U \cup C_0) \ne \emptyset$ by property \ref{def:emp-model-e1}, we may choose some $s_{C_0} \in \mathcal{S}(U \cup C_0)$. This restricts to $s_{C_0}|_{U} \in \mathcal{S}(U)$. For each $C \in \mathcal{M}$ distinct from $C_0$, we may choose some $s_C \in \mathcal{S}(U \cup C)$ such that $s_C|_{U} = s_{C_0}|_{U}$, by property \ref{def:emp-model-e2} which says that $\mathcal{S}(U \cup C) \to \mathcal{S}(U)$ is surjective. Thus, for any distinct $C, C' \in \mathcal{M}$,
    \[s_C|_{(U \cup C) \cap (U \cup C')} = s_C|_U = s_{C_0}|_U = s_{C'}|_U = s_{C'}|_{(U \cup C) \cap (U \cup C')}.\] The equality between the first and last terms is also (trivially) true if $C = C'$. Property \ref{def:emp-model-e3} then ensures the existence of some section $s \in \mathcal{S}(X)$, as desired.
\end{proof}

\begin{figure}
    \centering
    \begin{tikzpicture}[
            vertex/.style={circle, draw, minimum size=1.1cm}
        ]
            \node[vertex] (u) at (0,0) {$U$};

            \node[vertex] (c0) at (0,1.5) {$C_0$};

            \node[vertex] (c1) at (-1.5,0.75) {$C_1$};
            \node[vertex] (c2) at (-1.5,-0.75) {$C_2$};

            \node at (0,-1) {$\cdots$};

            \node[vertex] (cm-1) at (1.5,-0.75) {$C_{m-1}$};
            \node[vertex] (cm) at (1.5,0.75) {$C_m$};

            \draw[double, line width=0.5mm] (u) -- (c0);
            \draw[double, line width=0.5mm] (u) -- (c1);
            \draw[double, line width=0.5mm] (u) -- (c2);
            \draw[double, line width=0.5mm] (u) -- (cm-1);
            \draw[double, line width=0.5mm] (u) -- (cm);
        \end{tikzpicture}
    \caption{Structure of a graph $G$ that does not have the Kirby--Love property, as in the proof of Theorem \ref{thm:strongly-contextual-implies-comp-graph-kl}. Each circle represents a clique in the graph, while each thick double line connecting two cliques indicates that every vertex in one clique is connected to every vertex in the other (i.e., the subgraph induced by the two cliques is a clique). Here, $U$ is the set of universal vertices in $G$, and the maximal cliques are $U \cup C_i$ for $i=0, \cdots, m$. For $i \ne j$, any vertex in $C_i$ is not connected to any vertex in $C_j$.}
    \label{fig:graph-structure-not-kirby-love}
\end{figure}
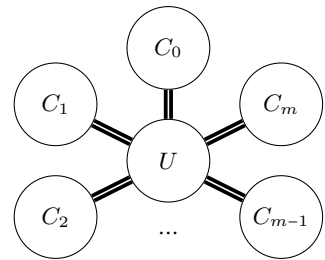

In fact, in the case where $R$ is a zerosumfree semifield\footnote{It is not hard to see that zerosumfree semifields are in fact exactly the semifields which are not fields.} (i.e., a nonzero semiring where all nonzero elements have a multiplicative inverse and the sum of any two nonzero elements is nonzero), we can say more using a similar line of argument. (For the reader unfamiliar with the concept of semirings or semifields, it will be sufficient to consider the case $R = \R_{\ge 0}$ in the proof below.)

\begin{theorem} \label{thm:contextual-implies-comp-graph-kl}
    If $R$ is a zerosumfree semifield, the compatibility graph of any contextual $R$-valued empirical model on a quantum measurement scenario has the Kirby--Love property. In particular, this applies to any contextual empirical model defined by some quantum state $\rho$.
\end{theorem}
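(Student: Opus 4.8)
The plan is to prove the contrapositive: if the compatibility graph $G$ of the quantum measurement scenario $\langle X, \mathcal{M}, O \rangle$ does \emph{not} have the Kirby--Love property, then every $R$-valued empirical model on it is noncontextual. Writing $V = X$ for the vertex set of $G$ and $U$ for its set of universal vertices, Lemmas~\ref{lem:kl-iff-not-disjoint-max-cliques} and~\ref{lem:max-cliques-calculation} tell us that the maximal cliques of the subgraph induced by $V \setminus U$ form a nonempty pairwise-disjoint family $\mathcal{M}' = \{C_0, \dots, C_m\}$ whose members cover $V \setminus U$, the measurement cover is $\mathcal{M} = \{U \cup C_i : 0 \le i \le m\}$, and any two distinct contexts intersect exactly in $U$. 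In particular $X = U \sqcup C_0 \sqcup \cdots \sqcup C_m$, so $\mathcal{E}(X) = O^X$ is identified with the set of tuples $(s_U, s_{C_0}, \dots, s_{C_m})$.

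Given a no-signaling $R$-valued empirical model $e = \{e_{U \cup C_i}\}_{0 \le i \le m}$, I would first set $e_U \coloneqq e_{U \cup C_i}|_U \in \mathcal{D}_R \mathcal{E}(U)$, which is independent of $i$ by the no-signaling condition. I would then glue the data along the common ``separator'' $U$ by a conditional-independence product: define $d \in \mathcal{D}_R\mathcal{E}(X)$ by
\[ d(s_U, s_{C_0}, \dots, s_{C_m}) = e_U(s_U)^{-m} \prod_{i=0}^m e_{U \cup C_i}(s_U, s_{C_i}) \]
whenever $e_U(s_U) \ne 0$ (using that nonzero elements of the semifield $R$ are invertible), and $d(s) = 0$ otherwise. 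Intuitively, $d$ gives $s_U$ the weight $e_U(s_U)$ and then treats the blocks $s_{C_i}$ as conditionally independent, each distributed by $e_{U \cup C_i}(s_U, -)/e_U(s_U)$.

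Next I would verify the two required properties. That $d$ is an honest $R$-distribution follows by summing over the blocks $s_{C_i}$ one at a time: since $\sum_{s_{C_i}} e_{U \cup C_i}(s_U, s_{C_i}) = e_U(s_U)$ (marginalization), each inner sum collapses to $1$ on the support of $e_U$, leaving $\sum_{s_U} e_U(s_U) = 1$. That $d|_{U \cup C_j} = e_{U \cup C_j}$ for each $j$ follows similarly: fixing $(t_U, t_{C_j})$ and summing $d$ over the remaining blocks $(s_{C_i})_{i \ne j}$, the same collapse leaves exactly $e_{U \cup C_j}(t_U, t_{C_j})$. Hence $d$ witnesses noncontextuality of $e$, proving the contrapositive. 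The final assertion is then immediate, since an empirical model defined by a quantum state is $\R_{\ge 0}$-valued and $\R_{\ge 0}$ is a zerosumfree semifield.

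The only delicate point — and the sole place the zerosumfree hypothesis is used — is the handling of zero-weight sections: when $e_U(s_U) = 0$ the displayed formula has a $0^{-m}$ indeterminacy. Here I would invoke zerosumfreeness to conclude that $\sum_{s_{C_i}} e_{U \cup C_i}(s_U, s_{C_i}) = e_U(s_U) = 0$ forces every summand $e_{U \cup C_i}(s_U, s_{C_i})$ to vanish, so that declaring $d(s) = 0$ is consistent with both the normalization and the marginalization computations above (and the same fact is what makes those block-sums respect supports). I expect the main obstacle to be doing this zero-case bookkeeping cleanly, so that every sum-splitting step is justified; the algebraic heart of the argument — conditional-independence gluing across a shared separator, which for $R = \B$ specializes to the possibilistic gluing underlying Theorem~\ref{thm:strongly-contextual-implies-comp-graph-kl} — is routine once the graph structure supplied by Lemmas~\ref{lem:kl-iff-not-disjoint-max-cliques} and~\ref{lem:max-cliques-calculation} is in hand.
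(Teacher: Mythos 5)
Your proposal is correct and follows essentially the same route as the paper: the contrapositive via Lemmas~\ref{lem:kl-iff-not-disjoint-max-cliques} and~\ref{lem:max-cliques-calculation}, the common marginal $e_U(s_U)$ (the paper's $d_x$), and the identical conditional-independence gluing formula $d = e_U(s_U)^{-(|\mathcal{M}|-1)}\prod_C e_C(s_U, s_C)$ with the zero case handled by zerosumfreeness. The verification of normalization and marginalization, and the role of the semifield and zerosumfree hypotheses, match the paper's argument exactly.
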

\begin{proof}
    The second part of the theorem follows from the first, because the semiring $\R_{\ge 0}$ is a zerosumfree semifield. We now prove the first part of the theorem.
    
    Let $R$ be a zerosumfree semifield and $e = \{e_C\}_{C \in \mathcal{M}}$ be an $R$-valued empirical model on a quantum measurement scenario $\langle X, \mathcal{M}, O \rangle$ whose compatibility graph $G$ does not have the Kirby--Love property. We need to show that $e$ is noncontextual, i.e., that there is some global section $d \in \mathcal{D}_R\mathcal{E}(X)$ satisfying $d|_C = e_C$ for all $C \in \mathcal{M}$.

    As in the proof of Theorem \ref{thm:strongly-contextual-implies-comp-graph-kl}, we take $V$ to be the vertex set of $G$, $U$ to be the set of universal vertices in $G$, and $\mathcal{M}'$ to be the set of maximal cliques of the induced subgraph of $V \setminus U$, giving us \[\mathcal{M} = \{U \cup C' : C' \in \mathcal{M}'\},\] with the sets $C' \in \mathcal{M}'$ disjoint and so the intersection of any two distinct maximal contexts is $U$. For the rest of this proof, for each $C \in \mathcal{M}$, we will write $C'$ to denote the maximal context $C \setminus U$ in $\mathcal{M}'$. For each $x \in O^U$ and $y_C \in O^{C'}$, we will also use $(x,y_C) \in O^C$ to denote the unique element $g \in O^C$ such that $g|_{U} = x$ and $g|_{C'} = y_C$, and $(x, \{y_C\}_{C \in \mathcal{M}})$ to denote the unique element $h \in O^X$ such that $h|_{U} = x$ and $h|_{C'} = y_C$ for all $C \in \mathcal{M}$.

    For \emph{distinct} $C_1, C_2 \in \mathcal{M}$ and $x \in O^U$, the condition that $e_{C_1}|_{C_1 \cap C_2} = e_{C_2}|_{C_1 \cap C_2}$ translates to saying
    \[\sum_{y \in O^{C_1'}} e_{C_1}(x, y) = \sum_{y \in O^{C_2'}} e_{C_2}(x, y).\]
    Thus, for each fixed $x \in O^U$, the sum
    \[\sum_{y \in O^{C'}} e_{C}(x, y)\]
    is the same for all $C \in \mathcal{M}$, and we let this common value be $d_x$. Since each $e_C$ is an $R$-distribution, we have $\sum_{x \in O^U} d_x = 1$.
    
    Now, we can define a function $d$ on $O^X$ by setting, for each $x \in O^U$ and $y_C \in O^{C'}$,
    \[d(x, \{y_{C}\}_{C \in \mathcal{M}}) = d_x^{-(|\mathcal{M}|-1)} \prod_{\substack{C \in \mathcal{M}}} e_C(x, y_C)\]
    if $d_x \ne 0$ (so $d_x$ is invertible, by the assumption that $R$ is a semifield), or $d(x, \{y_{C}\}_{C \in \mathcal{M}}) = 0$ if $d_x = 0$.

    We claim that $d$ is an $R$-distribution on $\mathcal{E}(X)$ such that $d|_C = e_C$ for all $C \in \mathcal{M}$. To this end, we fix some $C \in \mathcal{M}$, $x \in O^U$ and $y_C \in O^{C'}$. Then, if $d_x \ne 0$,
    \begin{align*}
        & \sum_{\substack{K \in \mathcal{M} \setminus \{C\} \\ y_K \in O^{K'}}} d(x,y_C, \{y_{K}\}_{K \in \mathcal{M} \setminus \{C\}}) \\
        = \,\, & \sum_{\substack{K \in \mathcal{M} \setminus \{C\} \\ y_K \in O^{K'}}} \left( d_x^{-(|\mathcal{M}|-1)} \prod_{\substack{L \in \mathcal{M}}} e_L(x, y_L) \right) \\
        = \,\, & d_x^{-(|\mathcal{M}|-1)} e_C(x,y_C) \sum_{\substack{K \in \mathcal{M} \setminus \{C\} \\ y_K \in O^{K'}}} \left(  \prod_{\substack{L \in \mathcal{M} \setminus \{C\} }} e_L(x, y_L) \right) \\
        = \,\, & d_x^{-(|\mathcal{M}|-1)} e_C(x,y_C) \prod_{\substack{K \in \mathcal{M} \setminus \{C\} }} \left(  \sum_{\substack{y_K \in O^{K'}}}  e_K(x, y_K) \right) \\
        = \,\, & d_x^{-(|\mathcal{M}|-1)} e_C(x,y_C) \prod_{\substack{K \in \mathcal{M} \setminus \{C\} }} d_x \\
        = \,\, & e_C(x,y_C).
    \end{align*}
    If $d_x = 0$, this equality
    \begin{equation} \label{eq:pf-thm-contextual-implies-comp-graph-kl}
        \sum_{\substack{K \in \mathcal{M} \setminus \{C\} \\ y_K \in O^{K'}}} d(x,y_C, \{y_{K}\}_{K \in \mathcal{M} \setminus \{C\}}) = e_C(x,y_C)
    \end{equation}
    is still true (with both sides equal to zero), by using the definition of $d$ and also the fact that $R$ is a \emph{zerosumfree} semiring. (The latter implies that $e_C(x, y_C)$, which appears as one of the summands in the definition of $d_x$, must be zero.)

    Summing Equation~\eqref{eq:pf-thm-contextual-implies-comp-graph-kl} over all $x \in O^U$ and $y_C \in O^{C'}$ gives $1$ since $e_C$ is an $R$-distribution, so $d$ is indeed an $R$-distribution. Since Equation~\eqref{eq:pf-thm-contextual-implies-comp-graph-kl} holds for all $x \in O^U$ and $y_C \in O^{C'}$, then we also have $d|_C = e_C$. Thus $d \in \mathcal{D}_R\mathcal{E}(X)$ is a global section that witnesses the noncontextuality of $e$.
\end{proof}

\begin{remark}
    Theorem~\ref{thm:contextual-implies-comp-graph-kl} actually holds for all semifields (including fields) and rings. To see this for general semifields, we may modify the proof above by defining $d(x, \{y_{C}\}_{C \in \mathcal{M}})$ differently if $d_x = 0$. Instead of setting its value to zero, we can fix distinguished elements $y_C^\ast \in O^{C'}$ and define $d(x, \{y_{C}\}_{C \in \mathcal{M}})$ to be:
    \begin{itemize}
        \item $0$, if $y_C \ne y_C^\ast$ for at least two distinct $C \in \mathcal{M}$,
        \item $e_{C_0}(x, y_{C_0})$, if $y_{C_0} \ne y_{C_0}^\ast$ but $y_{C} = y_{C}^\ast$ for all $C \ne C_0$ (for some $C_0 \in \mathcal{M}$), and
        \item $\sum_{C \in \mathcal{M}} e_C(x, y_C^\ast)$, if $y_C = y_C^\ast$ for all $C \in \mathcal{M}$.
    \end{itemize}
    (This definition coincides with the one in the proof above, if $R$ is a zerosumfree semifield.) It is straightforward to check that Equation~\eqref{eq:pf-thm-contextual-implies-comp-graph-kl} continues to hold under this definition.
    
    For general rings, we can modify the proof by using this modified definition of $d(x, \{y_{C}\}_{C \in \mathcal{M}})$ for any $x \in O^U$ (not just the ones with $d_x = 0$), except that we set \[d(x, \{y_{C}\}_{C \in \mathcal{M}}) = \sum_{C \in \mathcal{M}} e_C(x, y_C^\ast) - (|\mathcal{M}|-1)d_x\] (instead of $\sum_{C \in \mathcal{M}} e_C(x, y_C^\ast)$) in the case where $y_C = y_C^\ast$ for all $C \in \mathcal{M}$. Again, a similar check would show that Equation~\eqref{eq:pf-thm-contextual-implies-comp-graph-kl} holds.
\end{remark}

\section{Equivalence of notions of contextuality under partial closure} \label{sec:equivalence_of_contextuality_notions_under_partial_closure}

\begin{figure*}
    \centering
    \begin{tabular}{c|c}
    Empirical model $\mathcal{S}_\rho$ on $\<X,\mathcal{M},\mathbb{Z}_2\>$ & Empirical model $\mathcal{S}_\rho$ on $\<\Bar{X},\Bar{\mathcal{M}},\mathbb{Z}_2\>$ (under partial closure) \\
    \hline
    \begin{tikzcd}
        |[alias=SIavn]| \text{State-independent AvN (Definition~\ref{def:state_independent_AvN_partial_closure})}
        \arrow[Rightarrow, to=dependentavn, "\text{Theorem~\ref{thm:siavn-implies-avn}}"] \\
        |[alias=dependentavn]| \text{State-dependent AvN (Definition~\ref{def:possibilistic_model_AvN})} 
        \arrow[Rightarrow, to=strong, "\text{Fact~\ref{fact:avn-implies-strongly-contextual}}"] \\
        |[alias=strong]| \text{Strongly contextual (Definition~\ref{def:quantum_measurement_scenario_contextuality}}) 
        \arrow[Rightarrow, to=contextual, "\text{Theorem \ref{thm:strong-contexuality-implies-contextuality}}"] \\
        |[alias=contextual]| \text{Contextual (Definition~\ref{def:quantum_measurement_scenario_contextuality})} 
        \arrow[Rightarrow, to=KL, "\text{Theorem~\ref{thm:contextual-implies-comp-graph-kl}, Fact~\ref{fact:kl-contextual-iff-comp-graph-kl}}"] \\
        |[alias=KL]| \text{KL-contextual (Definition~\ref{def:KL_contextual})} 
    \end{tikzcd}
    &
    \begin{tikzcd}
        |[alias=SIavn]| \begin{array}{c}
            \text{State-independent AvN} \\
            \text{(Definition~\ref{def:state_independent_AvN_partial_closure})} 
        \end{array}
        \arrow[Rightarrow, to=dependentavn, "\text{Theorem~\ref{thm:siavn-implies-avn}}"] 
        \arrow[Leftrightarrow, to=KL, "\text{Theorem~\ref{thm:siavn-iff-kl-contextual}}"] 
        \arrow[Leftrightarrow, to=KS, "\text{Theorem~\ref{thm:ks-contextuality-iff-siavn}}"]
        && |[alias=KS]| \begin{array}{c}
        \text{Kochen--Specker} \\
            \text{(Definition~\ref{def:kochen_specker})} 
        \end{array} \\
        |[alias=dependentavn]| \begin{array}{c}
        \text{State-dependent AvN} \\
            \text{(Definition~\ref{def:possibilistic_model_AvN})} 
        \end{array}
        \arrow[Rightarrow, to=strong, "\text{Fact~\ref{fact:avn-implies-strongly-contextual}}"]
        && |[alias=KL]| \begin{array}{c}
            \text{KL-contextual} \\
            \text{(Definition~\ref{def:KL_contextual})} 
        \end{array} \\
        |[alias=strong]| \begin{array}{c}
        \text{Strongly contextual} \\
            \text{(Definition~\ref{def:quantum_measurement_scenario_contextuality})} 
        \end{array}
        \arrow[Rightarrow, to=contextual, "\text{Theorem \ref{thm:strong-contexuality-implies-contextuality}}"] 
        && |[alias=contextual]| \begin{array}{c}
        \text{Contextual} \\
            \text{(Definition~\ref{def:quantum_measurement_scenario_contextuality})} 
        \end{array}
        \arrow[Rightarrow, to=KL, "\text{Theorem~\ref{thm:contextual-implies-comp-graph-kl}, Fact~\ref{fact:kl-contextual-iff-comp-graph-kl}}"] \\
    \end{tikzcd}
    \end{tabular}
    \caption{
    Left: Relationship between different definitions of contextuality for an empirical model $\mathcal{S}_\rho$ on a quantum measurement scenario $\<X,\mathcal{M},\mathbb{Z}_2\>$, for an arbitrary set of observables $X$.
    Right: Different definitions of contextuality are equivalent under partial closure, i.e., are equivalent for an empirical model $\mathcal{S}_\rho$ on a quantum measurement scenario $\<\Bar{X},\Bar{\mathcal{M}},\mathbb{Z}_2\>$, where $\Bar{X}$ is a set of observables under partial closure (see Corollary~\ref{cor:strong-contextuality-equivalences}).
    }
    \label{fig:contextuality_relationships}
\end{figure*}
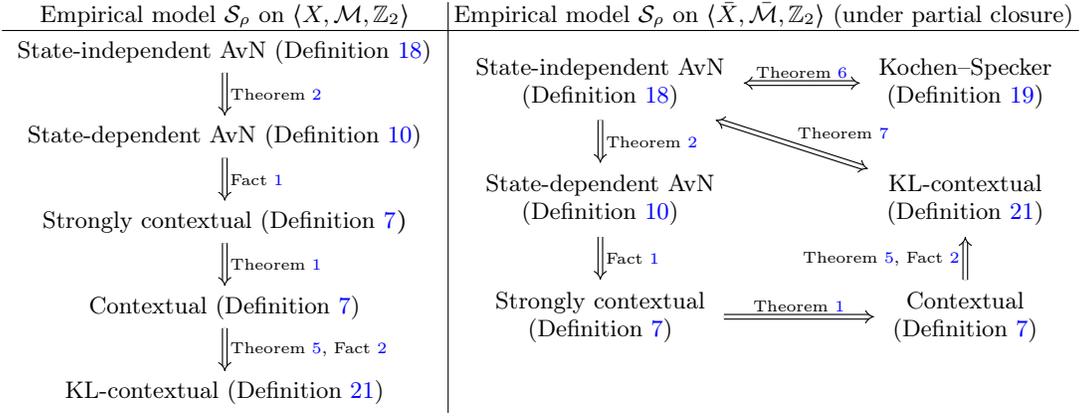

Now we introduce and formalize the notion of a \emph{partial group} in general, and then discuss partial groups in the context of Pauli operators (see Appendix \ref{sec:primer_partial_groups} for an informal overview tailored to this work). As we have discussed, the \emph{partial closure} of a given set of Pauli measurements intuitively corresponds to the largest set of Pauli measurements whose outcomes can be inferred from the given set, and this can be obtained by incrementally taking products of commuting Pauli operators in that set. The resulting set is a \emph{partial group} of Pauli operators, which roughly means that taking the product of any two \emph{commuting} Pauli operators in this set yields a Pauli operator that is still in the set. We will then proceed to consider contextuality scenarios with a set of Pauli measurements in a partial closure and show that the notion of contextuality in this scenario is rather unambiguous.
Particularly, we show in Corollary~\ref{cor:strong-contextuality-equivalences} that the three main notions of contextuality introduced in Section~\ref{sec:prelims}, namely sheaf-theoretic strong contextuality, graph-based Kirby--Love property, and AvN property, are equivalent under partial closure.
To prove the equivalences in Corollary~\ref{cor:strong-contextuality-equivalences}, we use the relationship between the Kirby--Love property and sheaf-theoretic contextuality in Theorem~\ref{thm:contextual-implies-comp-graph-kl} as well as equivalences between other notions of contextuality that have been studied in the literature, namely state-independent AvN property, Kochen--Specker type contextuality, and KL-contextuality, which we show in Section~\ref{sec:equivalence_of_contextuality_notions_under_partial_closure:main}.
The relationship between these different definitions of contextuality in general is summarized in the left-hand side of Fig.~\ref{fig:contextuality_relationships}.
The right-hand side of Fig.~\ref{fig:contextuality_relationships} illustrates the equivalence between all of these definitions whenever the quantum measurement scenario is closed under taking partial closure.

\subsection{Partial Group of Pauli Operators} \label{sec:partial_group_pauli_operators}

For any set $S$, we denote the free monoid on $S$ by $S^*$ and the multiplication (i.e., concatenation) operation on $S^*$ by $\circ$. Observe that we may naturally regard $S^*$ as the union $\bigcup_{i \ge 0} S^i$, and in this way identify elements of $S$ with words of length $1$ in $S^*$. Note also that any involution $f$ on $S$ naturally extends to an involution on $S^*$ given by
\[s_1\circ \cdots \circ s_r \mapsto f(s_r) \circ \cdots \circ f(s_1)\]
for all $s_1, \cdots, s_r \in S$.

We now recall the definition of a partial group introduced in \cite{chermak2013fusion}.

\begin{definition}[\cite{chermak2013fusion}, Definition 2.1] \label{def:partial-group}
    A \emph{partial group} consists of a nonempty set $\mathcal{P}$ together with the data of a set $\mathbf{D} \subseteq \mathcal{P}^*$, a product operation $\Pi \colon \mathbf{D} \to \mathcal{P}$ and an inversion operation $(\blank)^{-1}$ on $\mathcal{P}$ (and its natural extension to $\mathcal{P}^*$) satisfying the following conditions:
    \begin{enumerate}
        \item $\mathcal{P} \subseteq \mathbf{D}$,
        \item $u, v \in \mathbf{D}$ for all $u, v \in \mathcal{P}^*$ such that $u \circ v \in \mathbf{D}$,
        \item $\Pi$ restricts to the identity map on $\mathcal{P}$,
        \item $u \circ (\Pi (v)) \circ w \in \mathbf{D}$ and $\Pi (u \circ v \circ w) = \Pi (u \circ (\Pi v) \circ w) $ for all $u, v, w \in \mathcal{P}^*$ such that $u \circ v \circ w \in \mathbf{D}$, 
        \item $(\blank)^{-1}$ is an involution on $\mathcal{P}$, and
        \item $u^{-1} \circ u \in \mathbf{D}$ and $\Pi (u^{-1} \circ u) = \mathbf{1}$ for all $u \in \mathbf{D}$, where $\mathbf{1}$ denotes the image of the empty word under $\Pi$ (also called the \emph{identity element}).
    \end{enumerate}
    We will call the elements in $\mathbf{D}$ \emph{allowed words} and $\mathbf{D}$ the \emph{set of allowed words}. For $a, b \in \mathcal{P}$ such that $a \circ b \in \mathbf{D}$, we write $a \cdot b = \Pi(a \circ b)$. This means $\cdot$ is a partial binary operation from $\mathcal{P} \times \mathcal{P}$ to $\mathcal{P}$ with domain $(\mathcal{P} \times \mathcal{P}) \cap \mathbf{D}$.
\end{definition}

Informally, a partial group is almost like a group, except the product $\Pi$ is only defined for certain ``allowed words'' $\mathbf{D} \subseteq \mathcal{P}^*$ (finite sequences from $\mathcal{P}$). Many results about groups carry over to partial groups with appropriate modifications. We refer the reader to \cite[][Section 2]{chermak2013fusion} for a list of such results, which we will use without comment for the rest of this paper. As an example, we show that a modified version of associativity holds in partial groups.

\begin{lemma} \label{lem:partial-binary-operation-associative}
    Let $\mathcal{P}$ be a partial group with $\mathbf{D}$ the set of allowed words.
    Then the partial binary operation $\cdot$ on $(\mathcal{P} \times \mathcal{P}) \cap \mathbf{D}$ is associative, i.e.,
    \[(u \cdot v) \cdot w = u \cdot (v \cdot w)\]
    whenever $u, v, w \in \mathcal{P}$ satisfies $u \circ v \circ w \in \mathbf{D}$.
\end{lemma}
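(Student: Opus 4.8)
The plan is to derive associativity from a double application of condition (4) in Definition~\ref{def:partial-group}, which is exactly the rule $\Pi(p \circ q \circ r) = \Pi(p \circ (\Pi q) \circ r)$ for replacing a subword by its product. First I would record that all the words appearing are allowed. From $u \circ v \circ w \in \mathbf{D}$, condition (2) (closure under passing to prefixes and suffixes of allowed words) gives $u \circ v \in \mathbf{D}$ and $v \circ w \in \mathbf{D}$; hence $u \cdot v = \Pi(u \circ v)$ and $v \cdot w = \Pi(v \circ w)$ are defined elements of $\mathcal{P}$, so both sides of the claimed identity at least make sense once we know the outer products are defined too.

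Next I would apply condition (4) twice. Taking the middle subword to be $u \circ v$ (with empty prefix and suffix $w$), condition (4) gives $(u \cdot v) \circ w \in \mathbf{D}$ and $\Pi(u \circ v \circ w) = \Pi\big((u \cdot v) \circ w\big)$; since $(u \cdot v) \circ w$ is a two-letter word in $\mathbf{D}$ over $\mathcal{P}$, the right-hand side is by definition $(u \cdot v) \cdot w$. Symmetrically, taking the middle subword to be $v \circ w$ (with prefix $u$ and empty suffix) gives $u \circ (v \cdot w) \in \mathbf{D}$ and $\Pi(u \circ v \circ w) = \Pi\big(u \circ (v \cdot w)\big) = u \cdot (v \cdot w)$. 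Comparing the two computations through the common value $\Pi(u \circ v \circ w)$ yields $(u \cdot v) \cdot w = u \cdot (v \cdot w)$, and along the way establishes that both sides are defined.

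I do not expect a genuine obstacle here: the argument is essentially a bookkeeping exercise, where the only care needed is to ensure every word fed to $\Pi$ lies in $\mathbf{D}$ (guaranteed by conditions (1) and (2)) and to use that $\Pi$ restricted to two-letter allowed words is precisely the partial binary operation $\cdot$. The subtlest point is simply invoking condition (2) in the right way---once to detach $w$ and once to split $v$ from $w$---so that the relevant subwords $u \circ v$ and $v \circ w$ are certified as allowed before condition (4) is applied.
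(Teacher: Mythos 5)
Your proof is correct. The difference from the paper is one of presentation rather than substance: the paper disposes of the lemma in one line by citing Lemma~2.2(b) of Chermak's paper on partial groups, whereas you reprove the needed fact directly from the axioms of Definition~\ref{def:partial-group}. Your two applications of condition~(4) --- once with empty prefix and middle subword $u \circ v$, once with prefix $u$ and middle subword $v \circ w$ --- together with condition~(2) to certify that $u \circ v$ and $v \circ w$ lie in $\mathbf{D}$, is exactly the standard derivation underlying the cited result, and it correctly establishes both that the two bracketings are defined and that they share the common value $\Pi(u \circ v \circ w)$. What your version buys is self-containedness: a reader need not consult the external reference to see why the substitution axiom forces associativity of the induced partial binary operation. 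What the paper's version buys is brevity and the implicit reassurance that the fact is part of the established theory of partial groups rather than something special to this setting.
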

\begin{proof}
    This is a consequence of Lemma 2.2(b) of \cite{chermak2013fusion}.
\end{proof}

\begin{definition} \label{def:abelian-partial-group}
    Let $\mathcal{P}$ be a partial group, with  set of allowed words $\mathbf{D}$ and product $\Pi$. We say $\mathcal{P}$ is an \emph{abelian partial group} if $v \circ u \in \mathbf{D}$ and $\Pi(u\circ v) = \Pi(v \circ u)$ for all $u, v \in \mathcal{P}^*$ satisfying $u \circ v \in \mathbf{D}$.
\end{definition}

\begin{remark}
    The notion of an abelian partial group was also introduced in \cite{kim2023stateindependentallversusnothingarguments}. Our definition is slightly different; in particular, we require the existence of inverses (as is usual for the notion of a group) but do not impose reflexivity (i.e., $u \in \mathbf{D}$ need not imply $u \circ u \in \mathbf{D}$) or closure of $\mathbf{D}$ under $\Pi$ (i.e., $a \circ b \in \mathbf{D}$ and $c \in \mathbf{D}$ need not imply $\Pi(a \circ b) \circ c \in \mathbf{D}$).
\end{remark}

\begin{example}[\cite{chermak2013fusion}, Example 2.4(1)] \label{eg:groups-are-partial-groups}
    Given a group $G$, we may form a partial group by setting $\mathcal{P} = G$, $\mathbf{D} = \mathcal{P}^*$, and $\Pi$ and $(\blank)^{-1}$ to be the regular (multivariable) product and inverse operations on $G$. Conversely, given a partial group with set of allowed words $\mathbf{D} = \mathcal{P}^*$, product $\Pi$ and inverse $(\blank)^{-1}$, the set $\mathcal{P}$ with the binary operation $\cdot$ induced from $\Pi$ forms a group with the inverse operation given by $(\blank)^{-1}$.

    These two operations of transforming a group $G$ into a partial group $\mathcal{P}$ with $\mathbf{D} = \mathcal{P}^*$ and vice versa are inverse to each other, so groups (resp.\ abelian groups) are exactly partial groups (resp.\ abelian partial groups) where all words are allowed.
\end{example}

\begin{example} \label{eg:pauli-partial-group}
    An abelian partial group may be constructed by taking:
    \begin{itemize}
        \item $\mathcal{P} = G_n$ to be the Pauli $n$-group as a set (i.e., the set of operators of the form $\alpha P_1 \otimes \cdots \otimes P_n$ with $\alpha \in \{\pm 1, \pm i\}$ and $P_i \in \{I, X, Y, Z\}$),
        \item $\mathbf{D}$ to be the set of concatenations of (zero or more) pairwise commuting operators in $G_n$ (under the regular group operation on $G_n$),
        \item $\Pi$ to be the restriction of the regular (multivariable) group operation on $G_n$ to $\mathbf{D}$, and
        \item $(\blank)^{-1}$ to be the regular inversion operation on $G_n$.
    \end{itemize}
    It is straightforward to check that all the conditions of an abelian partial group holds in this setup.
\end{example}

Following \cite{chermak2013fusion}, we also have the notion of a \emph{partial subgroup}.

\begin{definition}
    Let $\mathcal{P}$ be a partial group, with set of allowed words $\mathbf{D}$, product $\Pi$ and inverse $(\blank)^{-1}$. We say a nonempty subset $\mathcal{Q}$ of $\mathcal{P}$ is a \emph{partial subgroup} of $\mathcal{P}$ if $\mathcal{Q}$ is closed under inversion (i.e., $u \in \mathcal{Q}$ implies $u^{-1} \in \mathcal{Q}$) and closed with respect to products (i.e., $u \in \mathcal{Q}^* \cap \mathbf{D}$ implies $\Pi (u) \in \mathcal{Q}$). (In this case, $\mathcal{Q}$ is a partial group with set of allowed words being $\mathcal{Q}^* \cap \mathbf{D}$ and the product and inverse operations being the restriction of the same operations on $\mathcal{P}$ to $\mathcal{Q}$.)
\end{definition}

\begin{example} \label{eg:pauli-realphase-partial-group}
    Let $\mathcal{P} = G_n$ be Pauli $n$-group viewed as a partial group as described in Example \ref{eg:pauli-partial-group}, and let $\mathcal{P}_n \subseteq G_n$ be the set of $n$-qubit Pauli operators along with a phase of $\pm 1$. It is not hard to see that $\mathcal{P}_n$ is closed under both inversion and (commutative) products, so it is a partial subgroup of $G_n$.
\end{example}

It is clear from the definition of a partial subgroup that the intersection of partial subgroups of a partial group $\mathcal{P}$ is still a partial subgroup of $\mathcal{P}$ and that $\mathcal{P}$ is a partial subgroup of itself. Consequently, the following definition makes sense.

\begin{definition} \label{def:partial-closure}
    Let $\mathcal{P}$ be a partial group and $S \subseteq \mathcal{P}$. The \emph{partial subgroup $\bar{S}$ of $\mathcal{P}$ generated by $S$}, also known as the \emph{partial closure} $\bar{S}$ of $S$ in $\mathcal{P}$, is the smallest partial subgroup of $\mathcal{P}$ that contains $S$. Equivalently, it is the intersection of all partial subgroups of $\mathcal{P}$ containing $S$.
\end{definition}

\begin{remark} \label{remark:constructive-def-partial-closure}
    The abelian partial group $\mathcal{P} = \mathcal{P}_n$ of $n$-qubit Pauli operators with phase $\pm 1$ as described in Example \ref{eg:pauli-realphase-partial-group} (along with the induced partial binary operation) is also an abelian partial group under the definition in \cite{kim2023stateindependentallversusnothingarguments}. In fact, for any subset $S \subseteq \mathcal{P}$, our definition of the partial closure of $S$ in $\mathcal{P}$ coincides with the one in \cite{kim2023stateindependentallversusnothingarguments}, since every element in $\mathcal{P}$ being self-inverse means the closure-under-inversion requirement is satisfied for any subset of $\mathcal{P}$. We can constructively define $\bar{S}$ from $S$ by iteratively constructing a sequence of sets $S_0 \subseteq S_1 \subseteq S_2 \subseteq \cdots$ such that $S_0 = S$ and, for $i \ge 1$, $S_{i}$ is the set of products of (any number of) elements in $S_{i-1}$, i.e., $S_i$ is the image under $\Pi$ of $\mathbf{D} \cap S_{i-1}^*$ (with $\Pi, \mathbf{D}$ as defined in Example \ref{eg:pauli-partial-group}). This sequence must stabilize after at most $|\mathcal{P}_n|$ steps (since each $S_i$ is a subset of $\mathcal{P}_n$), and the set it stabilizes to will be $\bar{S}$. For an example of such a construction, see Appendix \ref{sec:primer_partial_subgroups_closure}.
\end{remark}

\begin{definition}[\cite{chermak2013fusion}, Definition 3.1] \label{def:partial-group-hom}
    Suppose that, for $i = 1,2$, $\mathcal{P}_i$ is a partial group with set of allowed words $\mathbf{D}_i$ and product $\Pi_i$. Let $\beta \colon \mathcal{P}_1 \to \mathcal{P}_2$ be a mapping and $\beta^* \colon \mathcal{P}_1^* \to \mathcal{P}_2^*$ be the induced mapping. Then $\beta$ is a \emph{partial group homomorphism} if
    \begin{enumerate}
        \item $\beta^*(\mathbf{D}_1) \subseteq \mathbf{D}_2$, and
        \item $\beta(\Pi_1(w)) = \Pi_2(\beta^*(w))$ for all $w \in \mathbf{D}_1$.
    \end{enumerate}
\end{definition}

\begin{remark} \label{rem:partial-group-hom-equiv-def}
    In the case where $\mathcal{P}_2$ above is a group (i.e., $\mathbf{D}_2 = \mathcal{P}_2^*$) with binary operation $\cdot$, the first condition is automatically satisfied and we can replace the second condition with
    \[\beta(\Pi_1(u \circ v)) = \beta(u) \cdot \beta(v).\]
    for all $u, v \in \mathcal{P}_1$ such that $u \circ v \in \mathbf{D}_1$. To see this, we can first apply the above relation to $u = v = \mathbf{1}_1$ (identity element of $\mathcal{P}_1$) to get $\beta(\mathbf{1}_1) = \beta(\mathbf{1}_1) \cdot \beta(\mathbf{1}_1)$, giving us $\beta(\mathbf{1}_1) = \mathbf{1}_2$ (identity element of $\mathcal{P}_2$). This establishes the second condition for the case where $w$ is the empty word, and the general case can then be easily established by induction on the length of $w$.
\end{remark}

\subsection{Equivalence Between Notions of Contextuality Under Partial Closure}\label{sec:equivalence_of_contextuality_notions_under_partial_closure:main}

Following \cite{kim2023stateindependentallversusnothingarguments}, we now extend the definition of state-independent AvN for subsets of operators in $\mathcal{P}_n$ to the partial closure, and state the definition of Kochen--Specker type contextuality.

\begin{definition}\label{def:state_independent_AvN_partial_closure}
    A subset $X \subseteq \mathcal{P}_n$ is \emph{state-independently AvN in a partial closure} if $\bar{X}$ is state-independently AvN, i.e., if $\mathbb{T}_{\Z_2}(\bar{X})$ is inconsistent. (Here, $\bar{X}$ is the partial closure of $X$ in $\mathcal{P}_n$.)
\end{definition}

\begin{definition}\label{def:kochen_specker}
    A set $X \subseteq \mathcal{P}_n$ has Kochen--Specker type contextuality if there is no partial group homomorphism $\lambda \colon \bar{X} \to \{\pm 1\}$ that satisfies $\lambda(-I) = -1$ whenever $-I \in \bar{X}$. Here, the partial closure of $X$ is taken in $\mathcal{P}_n$, and $\{\pm 1\}$ has the partial group structure induced from the multiplicative group structure on $\{\pm 1\}$. By Remark \ref{rem:partial-group-hom-equiv-def}, this is equivalent to the nonexistence of any mapping $\lambda \colon \bar{X} \to \{\pm 1\}$ satisfying $\lambda(x \cdot y) = \lambda(x)\lambda(y)$ for all commuting $x, y \in \bar{X}$, and additionally satisfying $\lambda(-I) = -1$ if $-I \in \bar{X}$.
\end{definition}

The following result is stated in \cite{kim2023stateindependentallversusnothingarguments}, although a proof is not given. We provide the proof below.

\begin{theorem}[\cite{kim2023stateindependentallversusnothingarguments}, Corollary 3.10] \label{thm:ks-contextuality-iff-siavn}
    A set $X \subseteq \mathcal{P}_n$ has Kochen--Specker type contextuality if and only if $X$ is state-independently AvN in a partial closure.
\end{theorem}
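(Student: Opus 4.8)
The plan is to reduce the biconditional to a single explicit bijection. Both sides of the theorem assert that a certain kind of object fails to exist: ``Kochen--Specker type contextuality'' says there is no map $\lambda \colon \bar X \to \{\pm 1\}$ which is multiplicative on commuting pairs and sends $-I$ to $-1$ whenever $-I \in \bar X$ (call such a $\lambda$ a \emph{valuation}), while ``state-independently AvN in a partial closure'' says there is no section $s \in \mathcal{E}(\bar X) = \Z_2^{\bar X}$ satisfying every equation of $\mathbb{T}_{\Z_2}(\bar X)$. The correspondence $\lambda \leftrightarrow s$ given by $\lambda(x) = (-1)^{s(x)}$ is plainly a bijection between all functions $\bar X \to \{\pm 1\}$ and all functions $\bar X \to \Z_2$, so it suffices to show it restricts to a bijection between valuations and solutions of $\mathbb{T}_{\Z_2}(\bar X)$; equivalently, that $\lambda$ is a valuation if and only if the corresponding $s$ satisfies $\mathbb{T}_{\Z_2}(\bar X)$. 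Before starting I would record the standing facts that make this well-posed: $\bar X \subseteq \mathcal{P}_n$ is finite, every element of $\bar X$ is self-inverse (since $p^2 = I$ for all $p \in \mathcal{P}_n$), $\bar X$ is closed under products of commuting elements (it is a partial subgroup), and $I$, together with $-I$ when it lies in $\bar X$, commutes with everything and hence belongs to every maximal commuting context of $\bar X$.

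For the forward direction I would invoke the partial-group-homomorphism form of the Kochen--Specker condition (Definition~\ref{def:kochen_specker}): a valuation $\lambda$ extends to a partial group homomorphism $\bar X \to \{\pm 1\}$, so by condition (2) of Definition~\ref{def:partial-group-hom}, $\lambda(\Pi(w))$ equals the product of $\lambda$ applied to the letters of $w$, for every allowed word $w$ of $\bar X$. Given $\phi = \langle C, a, b \rangle \in \mathbb{T}_{\Z_2}(\bar X)$, the set $C_1 = a^{-1}(1) \subseteq C$ is pairwise commuting, so its product in $\bar X$ is defined and, as an operator, equals $\prod_{x \in C} x^{a(x)} = (-1)^b I \in \bar X$. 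Applying $\lambda$ gives $\prod_{x \in C_1} \lambda(x) = \lambda((-1)^b I)$, which is $1$ if $b = 0$ (using $\lambda(I) = \lambda(I)^2 = 1$) and $-1$ if $b = 1$ (using the valuation property $\lambda(-I) = -1$); translating back through $\lambda = (-1)^s$, this reads $\sum_{x \in C} a(x) s(x) = b$ in $\Z_2$, i.e.\ $s \models \phi$. Hence $s$ satisfies $\mathbb{T}_{\Z_2}(\bar X)$.

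For the converse I would start from an $s$ satisfying $\mathbb{T}_{\Z_2}(\bar X)$, set $\lambda = (-1)^s$, and verify the multiplicative (commuting-pairs) form of being a valuation, which is equivalent to the homomorphism form by Remark~\ref{rem:partial-group-hom-equiv-def}. The condition $\lambda(-I) = -1$ when $-I \in \bar X$ holds because $\langle C, \one_{\{-I\}}, 1\rangle \in \mathbb{T}_{\Z_2}(\bar X)$ for any maximal context $C \ni -I$, forcing $s(-I) = 1$. For multiplicativity, given commuting $u, v \in \bar X$ put $w = u \cdot v \in \bar X$; then $u, v, w$ pairwise commute (a product of Paulis each commuting with a fixed Pauli commutes with it), so they lie in a common maximal context $C$, and $uvw = w^2 = I$ since $w \in \mathcal{P}_n$. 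When $u, v, w$ are distinct, $\langle C, \one_{\{u,v,w\}}, 0 \rangle \in \mathbb{T}_{\Z_2}(\bar X)$ gives $s(u) + s(v) + s(w) = 0$, i.e.\ $\lambda(w) = \lambda(u)\lambda(v)$; the degenerate cases ($u = v$, which forces $w = I$; or $w \in \{u, v\}$, which forces the other factor to be $I$) all collapse to the single identity $\lambda(I) = 1$, itself coming from $\langle C, \one_{\{I\}}, 0\rangle \in \mathbb{T}_{\Z_2}(\bar X)$, which gives $s(I) = 0$.

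I do not expect a deep obstacle; the effort is in making the partial-group bookkeeping airtight. The one genuinely substantive observation is that condition (2) in the definition of a partial group homomorphism already packages exactly the ``multiplicativity over arbitrary commuting families'' needed in the forward direction, so no separate induction is required there. In the converse direction the only care needed is the enumeration of coincidences among $u$, $v$, $w = uv$, each of which reduces to $\lambda(I) = 1$; beyond that, everything is a matter of matching the definitions of $\mathbb{T}_{\Z_2}(\bar X)$ and of a valuation and tracking where the ``self-inverse in $\mathcal{P}_n$'' and partial-subgroup-closure properties are used.
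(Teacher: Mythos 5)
Your proposal is correct and follows essentially the same route as the paper's proof: both directions hinge on the bijection $\lambda(x) = (-1)^{s(x)}$ between valuations and solutions of $\mathbb{T}_{\Z_2}(\bar X)$, with the forward direction applying the partial-homomorphism property to each relation $\prod_{x\in C} x^{a(x)} = (-1)^b I$ and the converse using the three-element relation $u\cdot v\cdot (uv)^{-1} = I$ inside a common context. Your extra bookkeeping (deriving $\lambda(I)=1$ explicitly and enumerating the coincidences among $u$, $v$, $uv$) only tightens steps the paper treats as immediate.
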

\begin{proof}
    Suppose first that $X$ does not have Kochen--Specker type contextuality, so there is some partial homomorphism $\lambda \colon \bar{X} \to \{\pm 1\}$ such that $\lambda(-I) = -1$ if $-I \in \bar{X}$. Define $s \colon \bar{X} \to \Z_2$ to be the unique function satisfying $\lambda(x) = (-1)^{s(x)}$. Any equation $\phi \in \mathbb{T}_{\Z_2}(\bar{X})$ is of the form $\phi = \langle C, a, b \rangle$ (for $C \in \bar{\mathcal{M}}$, $a \colon C \to \Z_2$, $b \in \Z_2$), where $\bar{\mathcal{M}}$ is the measurement cover associated to $\bar{X}$ and
    \[\prod_{x \in C} x^{a(x)} = (-1)^b I.\]
    Applying $\lambda$ to this and using the properties of a partial group homomorphism gives us
    \[\prod_{x\in C} \lambda(x)^{a(x)} = (-1)^b,\] or equivalently,
    \[\sum_{x \in C} a(x) s(x) = b,\]
    i.e., $s \models \phi$. As this holds for all $\phi \in \mathbb{T}_{\Z_2}(\bar{X})$, we see that $s$ satisfies $\mathbb{T}_{\Z_2}(\bar{X})$, so $X$ is not state-independently AvN in a partial closure.

    Conversely, suppose that $X$ is not state-independently AvN in a partial closure, so there is some $s \in \mathcal{S}(\bar{X})$ satisfying $\mathbb{T}_{\Z_2}(\bar{X})$. Define $\lambda \colon \bar{X} \to \{\pm 1\}$ by $\lambda(x) = (-1)^{s(x)}$. If $-I \in \bar{X}$, then the equation $-I = (-1)^1 I$ holds in any context containing $-I$ (which must exist), so the equation $s(-I) = 1$ must hold, which means that $\lambda(-I) = -1$. We claim that $\lambda$ is a partial group homomorphism. Take any operators $a,b \in \bar{X}$ which commute, and let $c = a \cdot b$; we need to show that $\lambda (c) = \lambda(a) \lambda(b)$, or equivalently $\lambda(a) \lambda(b) \lambda(c^{-1}) = 1$. Since $a,b,c^{-1}$ clearly commute with each other, they must belong to some common context and we have the equation $a \cdot b \cdot c^{-1} = I$ which holds in the space of observables. As $s \in \mathbb{T}_{\Z_2} (\bar{X})$, we get $s(a) + s(b) + s(c^{-1}) = 0$. It immediately follows that $\lambda(a) \lambda(b) \lambda(c^{-1}) = 1$, as desired. Therefore, $X$ does not have Kochen--Specker type contextuality.
\end{proof}

As noted in \cite{kim2023stateindependentallversusnothingarguments}, there is an alternative definition of contextuality proposed by Kirby and Love in \cite{kirby2019contextuality}.

\begin{definition}[\cite{kirby2019contextuality}, Definitions 1, 2]
    A \emph{determining tree} for $x \in \mathcal{P}_n$ over a set $X \subseteq \mathcal{P}_n$ is a rooted tree whose vertices are elements of $\mathcal{P}_n$ and whose leaves are elements of $X$ such that
    \begin{enumerate}
        \item the root is $x$,
        \item the children of every parent are pairwise commuting operators, and
        \item every parent is the operator product of its children.
    \end{enumerate}
    (It is easy to see that if a determining tree for $x$ over $X$ exists, then $x \in \bar{X}$.)
    
    For a determining tree $\tau$, the \emph{determining set} $D(\tau)$ is the set containing one copy of each operator with odd multiplicity as a leaf in $\tau$.
\end{definition}

\begin{definition}[\cite{kirby2019contextuality}, Definition 3]\label{def:KL_contextual}
    We say $X \subseteq \mathcal{P}_n$ is \emph{KL-contextual} if there exist $x \in \mathcal{P}_n$ and determining trees $\tau_x$ and $\tau_{-x}$ for $x$ and $-x$ over $X$ (respectively) such that $D(\tau_x) = D(\tau_{-x})$.
\end{definition}

\begin{figure}
    \centering
    \begin{tikzpicture}[
        level distance=1cm,
        level 1/.style={sibling distance=2cm}, 
        level 2/.style={sibling distance=1cm}
    ]
        \node (left_root) {$Y_1 Y_2$}
            child {node {$X_1 Z_2$}
            child {node {$X_1$}}
            child {node {$Z_2$}}
            }
            child {node {$Z_1 X_2$}
            child {node {$Z_1$}}
            child {node {$X_2$}}
            };
        
        \node[xshift=4cm] (right_root) {$-Y_1 Y_2$}
            child {node {$X_1 X_2$}
            child {node {$X_1$}}
            child {node {$X_2$}}
            }
            child {node {$Z_1 Z_2$}
            child {node {$Z_1$}}
            child {node {$Z_2$}}
            };
    \end{tikzpicture}
        \caption{Determining trees for $\pm Y_1Y_2$ over $\{X_1, X_2, Z_1, Z_2\}$.}
    \label{fig:determining-tree-example-simple}
\end{figure}
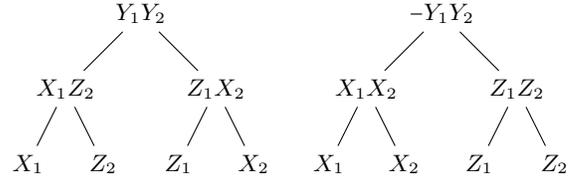

We have the following characterization of KL-contextuality in terms of the Kirby--Love property of the compatibility graph, as proven in \cite{kirby2019contextuality}.

\begin{fact}[\cite{kirby2019contextuality}, Theorem 2] \label{fact:kl-contextual-iff-comp-graph-kl}
    A set $X \subseteq \mathcal{P}_n$ is KL-contextual if and only if its compatibility graph has the Kirby--Love property.
\end{fact}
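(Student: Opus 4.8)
The plan is to establish both directions, following Kirby and Love but organizing the argument around the graph-structure lemmas (Lemmas~\ref{lem:disjoint-max-cliques-iff-three-vertex-property}--\ref{lem:max-cliques-calculation}) proved above.

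For the ``if'' direction---the compatibility graph having the Kirby--Love property implies $X$ is KL-contextual---I would begin with a witnessing quadruple $a,b,c,d\in X$ for which $a$ commutes with $b$ and with $c$, $a$ anticommutes with $d$, and $b$ anticommutes with $c$. The goal is to exhibit some $x\in\mathcal{P}_n$ and determining trees $\tau_x,\tau_{-x}$ over $X$ with $D(\tau_x)=D(\tau_{-x})$. The mechanism, abstracted from the prototype $\{X_1,X_2,Z_1,Z_2\}$ in Fig.~\ref{fig:determining-tree-example-simple}, is that the anticommuting pair $\{b,c\}$ supplies the sign: two trees whose leaf multisets agree but which pair $b$ and $c$ in ``opposite orders'' at some internal node yield roots differing by a factor $-1$, hence equal determining sets. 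In the cleanest case, where $d$ also commutes with both $b$ and $c$, one checks that $ab$ commutes with $cd$ and $ac$ commutes with $bd$ (each pair has exactly two anticommuting constituent sub-pairs), so one may take $x\propto abcd$ with $\tau_x$ grouped as $((ab)(cd))$ and $\tau_{-x}$ as $((ac)(bd))$; the sign flip is precisely the transposition of $b$ and $c$. When $d$ fails to commute with $b$ or with $c$, the pairing of $b$ against $c$ cannot be routed so directly, and one instead inserts an extra occurrence of an operator (such as $c$ or $d$) so that it cancels out of the determining set while keeping every intermediate node a product of commuting operators; the possibilities are finite and can be enumerated.

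For the ``only if'' direction I would argue the contrapositive. If the compatibility graph has no Kirby--Love property, then by Lemmas~\ref{lem:kl-iff-not-disjoint-max-cliques} and \ref{lem:max-cliques-calculation}, writing $U$ for the set of universal vertices, $X$ decomposes as $U\sqcup C_1\sqcup\cdots\sqcup C_m$ where $U$ commutes with everything, each $C_i$ is internally commuting, and any operator in $C_i$ anticommutes with any operator in $C_j$ for $i\neq j$. The target is to show that for any $x\in\mathcal{P}_n$ admitting a determining tree over $X$, the sign of $x$ is forced by the tree's determining set alone, so that $x$ and $-x$ can never arise from trees sharing a determining set. I would do this by attaching to each determining tree a $\mathbb{Z}_2$-valued ``block-crossing parity'' and proving, by induction on the tree and using partial-group associativity (Lemma~\ref{lem:partial-binary-operation-associative}), that the determining set together with this parity pins down the root, including its sign; since this parity is itself a function of the determining set whenever the leaves respect the block structure, it follows that $x$ and $-x$ cannot have a common determining set.

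The main obstacle, in both directions, is the sign bookkeeping. The operator computed by a determining tree is \emph{not} in general the in-order product of its leaves: at an internal node one multiplies commuting \emph{composite} operators whose underlying leaves need not pairwise commute, so the sign cannot be read off from a permutation parity of the leaves alone. Making the case analysis exhaustive in the ``if'' direction, and making the inductive invariant tight enough to pin the sign in the ``only if'' direction, is where the real effort lies. As the statement is precisely \cite[Theorem~2]{kirby2019contextuality}, I would ultimately rely on that reference for the complete argument.
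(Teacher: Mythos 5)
The paper gives no proof of this statement at all: it is imported as a \emph{Fact} from \cite{kirby2019contextuality} (their Theorem~2), so there is nothing in-paper to compare your argument against, and your closing decision to defer to that reference is exactly what the authors do. As for the sketch itself, the ``if'' direction is sound. In your clean case the trees $((ab)(cd))$ and $((ac)(bd))$ do evaluate to $\pm abcd$ with identical leaf multisets, and the remaining cases do close with slightly deeper or padded trees: when $d$ anticommutes with both $b$ and $c$, the nodes $acd$ and $abd$ are valid and one checks $(acd)\cdot b = abcd$ while $(abd)\cdot c = -abcd$; when $d$ anticommutes with exactly one of $b,c$ (say $b$), the padded tree $(abd)\cdot(cdb) = -ac$ against the single node $ac$ gives determining set $\{a,c\}$ for both. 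So the enumeration you wave at genuinely terminates.

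The ``only if'' direction is where your proposal has a real soft spot. The ``block-crossing parity'' is not, as described, a function of the determining set, and it is not even a clean function of the block counts: the sign of a tree's root relative to the block-sorted product of its leaves depends on the tree shape through the commutation constraints imposed at every internal node, and natural candidate formulas such as $(-1)^{\sum_{i<j} n_i n_j/2}$ already fail on tiny examples (a node with one leaf in $C_1$ and two equal leaves in $C_2$ evaluates to $+$ the sorted product, not $-$). Making an inductive sign invariant airtight here is harder than the proposal suggests. The cleaner route --- and the one that matches both Kirby--Love's own argument and, structurally, the converse direction of Theorem~\ref{thm:siavn-iff-kl-contextual} in this paper --- is to construct a single multiplicative assignment $\lambda\colon \bar{X}\to\{\pm 1\}$ with $\lambda(-I)=-1$ when $-I\in\bar{X}$ (possible because, after removing the universal part $U$, the blocks $C_i$ meet only inside $\langle U\rangle$, so eigenvalue assignments on common eigenstates of each context can be glued along $U$), and then evaluate $\lambda$ up any determining tree: multiplicativity on commuting products gives $\lambda(\mathrm{root})=\prod_{p\in D(\tau)}\lambda(p)$, since even-multiplicity leaves contribute $\lambda(p)^2=1$, so equal determining sets force equal roots. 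I would replace the parity-invariant paragraph with this value-assignment argument, or simply cite \cite{kirby2019contextuality} for both directions, as the paper does.
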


\begin{example} \label{eg:kl-contextual-example-simple}
    Consider the subset of two-qubit Pauli operators given by $\{X_1, X_2, Z_1, Z_2\} \subseteq \mathcal{P}_2$. Its compatibility graph has the Kirby--Love property (see Fig.\ \ref{fig:kirby-love-property-definition}(b)). By Fact \ref{fact:kl-contextual-iff-comp-graph-kl}, this set of Pauli operators must be KL-contextual. Indeed, we can construct determining trees for both $Y_1Y_2$ and $-Y_1Y_2$ with the same determining set $\{X_1, X_2, Z_1, Z_2\}$, as shown in Fig.\ \ref{fig:determining-tree-example-simple} \cite[Fig.\ 1]{kirby2019contextuality}.
\end{example}

The following theorem is recorded in \cite{kim2023stateindependentallversusnothingarguments}, although the forward direction of the theorem was left as a conjecture. We complete the proof below.

\begin{theorem}[\cite{kim2023stateindependentallversusnothingarguments}, Theorem 3.12] \label{thm:siavn-iff-kl-contextual}
    A set $X \subseteq \mathcal{P}_n$ is state-independently AvN in a partial closure if and only if $X$ is KL-contextual.
\end{theorem}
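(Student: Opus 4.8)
The plan is to prove the two implications separately, using the equivalence between KL-contextuality and the Kirby--Love property of the compatibility graph (Fact~\ref{fact:kl-contextual-iff-comp-graph-kl}) together with the chain of implications already assembled in this section.

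\emph{KL-contextual implies state-independently AvN in a partial closure.} Suppose $X$ is KL-contextual, witnessed by $x \in \mathcal{P}_n$ and determining trees $\tau_x, \tau_{-x}$ over $X$ with $D(\tau_x) = D(\tau_{-x})$. Then $-I \in \bar{X}$, since $x$ and $-x$ commute, $x \cdot (-x) = -I$, and $\bar{X}$ is closed under products of commuting elements. By Theorem~\ref{thm:ks-contextuality-iff-siavn} it suffices to show $X$ has Kochen--Specker type contextuality, i.e., that there is no partial group homomorphism $\lambda \colon \bar{X} \to \{\pm 1\}$ with $\lambda(-I) = -1$. The crux is: for any such $\lambda$ and any vertex $v$ of a determining tree $\tau$ over $X$, writing $\tau_v$ for the subtree rooted at $v$, one has $\lambda(v) = \prod_{p \in D(\tau_v)} \lambda(p)$. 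This follows by induction on the height of $v$ --- the homomorphism property gives $\lambda(v) = \prod_i \lambda(c_i)$ for the pairwise commuting children $c_i$ of $v$, and since $\lambda(p)^2 = 1$ for Pauli $p$, leaves of even multiplicity cancel out. Applying this to the roots of $\tau_x$ and $\tau_{-x}$ and using $D(\tau_x) = D(\tau_{-x})$ gives $\lambda(x) = \lambda(-x) = \lambda(-I)\lambda(x) = -\lambda(x)$, a contradiction.

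\emph{State-independently AvN in a partial closure implies KL-contextual} (the direction left open in \cite{kim2023stateindependentallversusnothingarguments}). If $X$ is state-independently AvN in a partial closure, then $\bar{X}$ is state-independently AvN, so by Theorem~\ref{thm:siavn-implies-avn} the possibilistic empirical model on $\langle \bar{X}, \bar{\mathcal{M}}, \Z_2 \rangle$ defined by any quantum state $\rho$ (say, the maximally mixed state) is AvN, hence strongly contextual by Fact~\ref{fact:avn-implies-strongly-contextual}; the $\R_{\ge 0}$-valued empirical model defined by $\rho$ is then strongly contextual and therefore contextual by Theorem~\ref{thm:strong-contexuality-implies-contextuality}. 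Theorem~\ref{thm:contextual-implies-comp-graph-kl} then shows the compatibility graph of $\bar{X}$ has the Kirby--Love property, so $\bar{X}$ is KL-contextual by Fact~\ref{fact:kl-contextual-iff-comp-graph-kl}. It remains to descend from $\bar{X}$ to $X$: I would show that $\bar{X}$ KL-contextual implies $X$ KL-contextual by a grafting argument. Given determining trees $\tau_x, \tau_{-x}$ for $\pm x$ over $\bar{X}$ with $D(\tau_x) = D(\tau_{-x})$, fix for each $\ell \in \bar{X}$ a determining tree $\tau_\ell$ for $\ell$ over $X$ (which exists since $\bar{X}$ is obtained by iteratively multiplying commuting elements of $X$; cf.\ Remark~\ref{remark:constructive-def-partial-closure}), and replace every leaf $\ell$ of $\tau_x$ and $\tau_{-x}$ by the corresponding $\tau_\ell$. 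The grafted trees are valid determining trees for $\pm x$ over $X$, and since a leaf-label of even multiplicity contributes evenly to all multiplicities after grafting, the multiplicity modulo $2$ of each $p \in X$ in the grafted version of $\tau_x$ (resp.\ $\tau_{-x}$) depends only on $D(\tau_x)$ (resp.\ $D(\tau_{-x})$); as these coincide and the same $\tau_\ell$ are used on both sides, the two new determining sets coincide, so $X$ is KL-contextual.

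I expect the descent step to be the main obstacle --- one must verify that grafting yields legitimate determining trees (siblings still pairwise commute, each parent is still the product of its children) and that determining sets are preserved modulo $2$. The determining-tree propagation lemma in the first direction calls for similar but lighter bookkeeping, while the implication chain and the appeal to Theorem~\ref{thm:ks-contextuality-iff-siavn} are routine given the results already established.
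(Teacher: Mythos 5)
Your proposal is correct, and the easy direction (KL-contextual $\Rightarrow$ state-independently AvN in a partial closure) matches the paper's argument in substance: both reduce to Kochen--Specker contextuality via Theorem~\ref{thm:ks-contextuality-iff-siavn} and derive a parity contradiction from a hypothetical homomorphism $\lambda$; you prove a leaf-propagation lemma on the pair of trees $\tau_x,\tau_{-x}$ where the paper instead invokes \cite[Corollary 3.2]{kirby2019contextuality} to pass to a single determining tree for $-I$ with empty determining set --- a cosmetic difference. The hard direction, however, is where you genuinely diverge. The paper gives a self-contained combinatorial argument: it encodes $\mathbb{T}_{\Z_2}(\bar{X})$ as a matrix equation $A\mathbf{s}=\mathbf{b}$ over $\Z_2$, extracts from its inconsistency a vector $\mathbf{y}$ in the left null space of $A$ with $\mathbf{y}\cdot\mathbf{b}=1$, and uses the corresponding subset of equations to build an explicit depth-$2$ determining tree for $-I$ over $\bar{X}$ in which every operator appears with even multiplicity, which it then grafts down to $X$. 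You instead route through the quantum/sheaf-theoretic side: state-independent AvN of $\bar{X}$ $\Rightarrow$ AvN for any state (Theorem~\ref{thm:siavn-implies-avn}) $\Rightarrow$ strong contextuality (Fact~\ref{fact:avn-implies-strongly-contextual}) $\Rightarrow$ contextuality (Theorem~\ref{thm:strong-contexuality-implies-contextuality}) $\Rightarrow$ Kirby--Love property of the compatibility graph of $\bar{X}$ (Theorem~\ref{thm:contextual-implies-comp-graph-kl}) $\Rightarrow$ KL-contextuality of $\bar{X}$, followed by your grafting descent to $X$. This is valid and non-circular, since none of the cited results depend on the present theorem, and your descent step --- including the mod-$2$ multiplicity bookkeeping and the observation that every element of $\bar{X}$ admits a determining tree over $X$ --- is exactly the grafting device the paper itself uses at the end of its forward direction. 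What each approach buys: the paper's proof is constructive (it hands you the witnessing tree directly from a $\Z_2$-linear-algebra certificate) and independent of the contextuality machinery, which keeps Theorem~\ref{thm:siavn-iff-kl-contextual} usable as a free-standing ingredient of Corollary~\ref{cor:strong-contextuality-equivalences}; your proof is shorter given the surrounding results but leans on Theorem~\ref{thm:contextual-implies-comp-graph-kl}, which is the heaviest technical result in the paper, and requires instantiating a quantum state as an intermediary for what is ultimately a purely algebraic statement.
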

\begin{proof}
    We first prove the forward direction. Assume that $X$ is state-independently AvN in a partial closure, i.e., $\mathbb{T}_{\Z_2}(\bar{X})$ is inconsistent. Note that all equations in $\mathbb{T}_{\Z_2}(\bar{X})$ are of the form
    \[\sum_{x \in \bar{X}} a(x)s(x) = b,\]
    where each $a(x), b \in \Z_2$ and $s(x)$ are the unknowns which must take values in $\Z_2$. Since $\bar{X}$ is a finite set (say of size $n$) and thus the set of equations in $\mathbb{T}_{\Z_2}(\bar{X})$ is also finite (say of size $m$), we may label the equations $E_1, \cdots, E_m$ and the elements $x_1, \cdots, x_n$, and write the equation $E_i$ as
    \[\sum_{j=1}^n a_{ij} s_j = b_i\]
    for $a_{ij}, b_i \in \Z_2$ and the unknowns $s_j$ (representing $s(x_j)$) taking values in $\Z_2$. This may be represented by the matrix equation $A\mathbf{s} = \mathbf{b}$, with $A = (a_{ij})$ an $m \times n$ matrix over $\Z_2$ (which is a field), $\mathbf{s} = (s_j) \in \Z_2^n$ and $\mathbf{b} = (b_i) \in \Z_2^m$. The inconsistency of this system implies that $\mathbf{b}$ is not in the column space of $A$. The column space of $A$ is the orthogonal complement of the null space of $A^\top$, so there must exist some $\mathbf{y} = (y_i) \in \Z_2^m$ in the null space of $A^\top$ such that $\mathbf{y} \cdot \mathbf{b} \ne 0$, i.e., $\mathbf{y} \cdot \mathbf{b} = 1$.

    Let $K = \{i : y_i = 1\} \subseteq \{1, \cdots, m\}$ be the set of indices $i$ such that the $i$th entry of $\mathbf{y}$ is $1$. As $\mathbf{y}$ is in the null space of $A^\top$, i.e., $\mathbf{y}^\top A = \mathbf{0}$, the sum of the rows of $A$ whose index is in $K$ must be equal to the zero vector. In other words, for each $j \in \{1, \cdots, n \}$, there are an even number of equations $E_i$ (for $i \in K$) where $a_{ij} = 1$. Meanwhile, the equation $\mathbf{y} \cdot \mathbf{b} = 1$ says that $\sum_{i \in K} b_i = 1$. Now, we can use each equation $E_i$ to construct a depth-$1$ determining tree $\tau_i$ for $(-1)^{b_i} I$ over $\bar{X}$, by setting the leaves to be the $x_j$ for which $a_{ij} = 1$: this is a valid determining tree, as equation $E_i$ occurring in $\mathbb{T}_{\Z_2}(\bar{X})$ means \[\prod_{j: a_{ij} = 1} x_j = (-1)^{b_i} I.\] We can then construct a depth-$2$ determining tree of $-I$ over $\bar{X}$, by taking the $|K|$ children of the root $-I$ to be $(-1)^{b_i} I$ over all $i \in K$ and the children of each of these $(-1)^{b_i} I$ to be the $x_j$ such that $a_{ij} = 1$ (i.e., each of the $|K|$ children is the root of the subtree $\tau_i$): this is a valid determining tree because $\sum_{i \in K} b_i = 1$ implies that the product of the $(-1)^{b_i} I$ (over $i \in K$) is indeed $-I$. In this way, we get a determining tree $\sigma$ of $-I$ over $\bar{X}$ where every $x_j \in \bar{X}$ appears as a leaf an even number of times. Finally, we note that every $x_j \in \bar{X}$ in fact has a determining tree $\tau_j'$ over $X$ (by the constructive definition of $\bar{X}$ given in Remark \ref{remark:constructive-def-partial-closure}), so we may extend each leaf $x_j$ of $\sigma$ by the tree $\tau_j'$ to get a determining tree $\sigma'$ of $-I$ over $X$ where each $x \in X$ appears as a leaf of $\sigma'$ with even multiplicity, i.e., $D(\sigma') = \emptyset$. By \cite[][Corollary 3.2]{kirby2019contextuality}, this implies that $X$ is KL-contextual.

    We now establish the converse direction of the theorem. Assume $X$ is KL-contextual. By Theorem \ref{thm:ks-contextuality-iff-siavn}, it suffices to prove that $X$ has Kochen--Specker type contextuality. Suppose on the contrary that this is not true, so there is some partial group homomorphism $\lambda: \bar{X} \to \{\pm 1\}$ such that $\lambda(-I) = -1$ if $-I \in \bar{X}$. Since $X$ is KL-contextual, then there is a determining tree $\tau$ for $-I$ over $\bar{X}$ whose determining set is empty, by \cite[][Corollary 3.2]{kirby2019contextuality}. It follows from $\lambda$ being a partial group homomorphism that the product of $\lambda(y)$ over all leaves $y$ of $\tau$ is equal to $\lambda(-I) = -1$. However, the determining set of $\tau$ is empty, so every $x \in X$ appears as a leaf of $\tau$ an even number of times, which means that the product of the $\lambda(y)$ must be $1$ (considering that $1^2 = (-1)^2 = 1$). We arrive at a contradiction, so it must have been that $X$ has Kochen--Specker contextuality, as desired.
\end{proof}

\begin{corollary} \label{cor:strong-contextuality-equivalences}
    Let $X \subseteq \mathcal{P}_n$, $\bar{X}$ be its partial closure in $\mathcal{P}_n$, and $\bar{\mathcal{M}}$ be the set of maximally commuting subsets of $\bar{X}$. Let $\mathcal{S}_{\rho}$ be the empirical model on $\langle \bar{X}, \bar{\mathcal{M}}, \Z_2 \rangle$ defined by the quantum state $\rho$. The following are equivalent:
    \begin{enumerate}[label=(\arabic*)]
        \item The compatibility graph of $X$ has the Kirby--Love property. \label{cor-item:equiv-comp-graph-kl-regular}
        \item The compatibility graph of $\bar{X}$ has the Kirby--Love property. \label{cor-item:equiv-comp-graph-kl-closure}
        \item $X$ is KL-contextual. \label{cor-item:equiv-kl-contextual-regular}
        \item $\bar{X}$ is KL-contextual. \label{cor-item:equiv-kl-contextual-closure}
        \item $X$ is state-independently AvN in a partial closure. \label{cor-item:equiv-siavn-closure-regular}
        \item $\bar{X}$ is state-independently AvN in a partial closure. \label{cor-item:equiv-siavn-closure-closure}
        \item $X$ has Kochen--Specker contextuality. \label{cor-item:equiv-ks-contextuality-regular}
        \item $\bar{X}$ has Kochen--Specker contextuality. \label{cor-item:equiv-ks-contextuality-closure}
        \item $\mathcal{S}_{\rho}$ is AvN for all quantum states $\rho$. \label{cor-item:equiv-avn-all-rho}
        \item $\mathcal{S}_{\rho}$ is AvN for some quantum state $\rho$. \label{cor-item:equiv-avn-some-rho}
        \item $\mathcal{S}_{\rho}$ is strongly contextual for all quantum states $\rho$. \label{cor-item:equiv-strongly-contextual-all-rho}
        \item \label{cor-item:equiv-strongly-contextual-some-rho} $\mathcal{S}_{\rho}$ is strongly contextual for some quantum state $\rho$. 
        \item $\mathcal{S}_{\rho}$ is contextual for all quantum states $\rho$. \label{cor-item:equiv-contextual-all-rho}
        \item $\mathcal{S}_{\rho}$ is contextual for some quantum state $\rho$. \label{cor-item:equiv-contextual-some-rho}
    \end{enumerate}
\end{corollary}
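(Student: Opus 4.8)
The plan is that essentially all the substance has already been established in the preceding results; the corollary is an assembly of those equivalence theorems together with one structural observation, the \emph{idempotency of partial closure}. So the first step is to record that $\overline{\bar X} = \bar X$: by Definition~\ref{def:partial-closure} the partial closure $\bar X$ is itself a partial subgroup of $\mathcal{P}_n$, hence the smallest partial subgroup containing $\bar X$ is $\bar X$ itself. Unwinding Definitions~\ref{def:state_independent_AvN_partial_closure} and~\ref{def:kochen_specker} with this in hand, statements (5) and (6) are literally the same assertion (``$\mathbb{T}_{\Z_2}(\bar X)$ is inconsistent''), and likewise (7) and (8) are the same assertion.

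Next I would verify that the eight ``combinatorial'' statements (1)--(8) are all equivalent, purely by citing earlier results: Fact~\ref{fact:kl-contextual-iff-comp-graph-kl} gives (1)$\Leftrightarrow$(3) and, applied to $\bar X$, (2)$\Leftrightarrow$(4); Theorem~\ref{thm:siavn-iff-kl-contextual} gives (3)$\Leftrightarrow$(5) and, applied to $\bar X$ (together with idempotency, which identifies ``$\bar X$ is state-independently AvN in a partial closure'' with (6)), (4)$\Leftrightarrow$(6); and Theorem~\ref{thm:ks-contextuality-iff-siavn} gives (5)$\Leftrightarrow$(7) and (6)$\Leftrightarrow$(8). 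Combining these with (5)$=$(6) from the first step closes up this block. This is also how one obtains the less obvious equivalences---e.g.\ that the compatibility graph of $X$ has the Kirby--Love property iff that of $\bar X$ does---since those are routed through state-independent AvN in a partial closure rather than argued directly on graphs.

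Finally I would connect this block to the state-dependent statements (9)--(14). Theorem~\ref{thm:siavn-implies-avn}, applied to $\bar X$ (whose elements lie in $\mathcal{P}_n$ and hence have eigenvalues $\pm 1$), turns (6) into (9): every $\mathcal{S}_\rho$ is AvN. Within (9)--(14), ``for all $\rho$'' trivially implies ``for some $\rho$'' since $\C^{2^n}$ carries at least one state, giving (9)$\Rightarrow$(10), (11)$\Rightarrow$(12), (13)$\Rightarrow$(14); Fact~\ref{fact:avn-implies-strongly-contextual} gives (9)$\Rightarrow$(11) and (10)$\Rightarrow$(12); Theorem~\ref{thm:strong-contexuality-implies-contextuality} gives (11)$\Rightarrow$(13) and (12)$\Rightarrow$(14). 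Then Theorem~\ref{thm:contextual-implies-comp-graph-kl} (with $R = \R_{\ge 0}$, a zerosumfree semifield, and noting that the compatibility graph of $\mathcal{S}_\rho$ is exactly that of $\bar X$) gives (14)$\Rightarrow$(2). Tracing the arrows: (2)$\Leftrightarrow$(6)$\Rightarrow$(9); from (9) one reaches each of (10)--(14); each of (9)--(14) implies (14); and (14)$\Rightarrow$(2). Hence all fourteen statements are equivalent.

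I do not expect a genuine obstacle: the argument is bookkeeping over previously established results. The closest thing to a subtlety is threefold: (i) keeping straight which cited result is being applied to $X$ versus to $\bar X$; (ii) invoking idempotency of partial closure to collapse the ``$X$'' and ``$\bar X$'' forms of the state-independent AvN and Kochen--Specker conditions; and (iii) matching the empirical model in Theorem~\ref{thm:contextual-implies-comp-graph-kl}, whose underlying observable set is $\bar X$, to statement (2) rather than to (1).
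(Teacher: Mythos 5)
Your proposal is correct and follows essentially the same route as the paper: both assemble the equivalences from Fact~\ref{fact:kl-contextual-iff-comp-graph-kl}, Theorems~\ref{thm:siavn-iff-kl-contextual}, \ref{thm:ks-contextuality-iff-siavn}, \ref{thm:siavn-implies-avn}, Fact~\ref{fact:avn-implies-strongly-contextual}, Theorem~\ref{thm:strong-contexuality-implies-contextuality}, and close the cycle with Theorem~\ref{thm:contextual-implies-comp-graph-kl}, using idempotency of partial closure to identify the $X$ and $\bar X$ forms of (5)--(8). Your only departure is routing the final implication as (14)$\Rightarrow$(2) rather than the paper's (14)$\Rightarrow$(1), which is in fact the more precise reading of Theorem~\ref{thm:contextual-implies-comp-graph-kl} since the empirical model lives on $\bar X$; the two are interchangeable given (1)$\Leftrightarrow$(2).
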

\begin{proof}
    The following implications are sufficient to prove the corollary:
    \begin{itemize}
        \item {[\ref{cor-item:equiv-comp-graph-kl-regular} $\iff$ \ref{cor-item:equiv-kl-contextual-regular}]} and {[\ref{cor-item:equiv-comp-graph-kl-closure} $\iff$ \ref{cor-item:equiv-kl-contextual-closure}]}: These follow from Fact \ref{fact:kl-contextual-iff-comp-graph-kl}.
        \item {[\ref{cor-item:equiv-kl-contextual-regular} $\iff$ \ref{cor-item:equiv-siavn-closure-regular}]} and {[\ref{cor-item:equiv-kl-contextual-closure} $\iff$ \ref{cor-item:equiv-siavn-closure-closure}]}: These follow from Theorem \ref{thm:siavn-iff-kl-contextual}.
        \item {[\ref{cor-item:equiv-siavn-closure-regular} $\iff$ \ref{cor-item:equiv-ks-contextuality-regular}]} and {[\ref{cor-item:equiv-siavn-closure-closure} $\iff$ \ref{cor-item:equiv-ks-contextuality-closure}]}: These follow from Theorem \ref{thm:ks-contextuality-iff-siavn}.
        \item \ref{cor-item:equiv-siavn-closure-regular} $\iff$ \ref{cor-item:equiv-siavn-closure-closure}: This is clear from the definitions, since the partial closure of $\bar{X}$ is $\bar{X}$.
        \item \ref{cor-item:equiv-siavn-closure-regular} $\implies$ \ref{cor-item:equiv-avn-all-rho}: This follows from Theorem \ref{thm:siavn-implies-avn}.
        \item {[\ref{cor-item:equiv-avn-all-rho} $\implies$ \ref{cor-item:equiv-strongly-contextual-all-rho}]} and {[\ref{cor-item:equiv-avn-some-rho} $\implies$ \ref{cor-item:equiv-strongly-contextual-some-rho}]}: These follow from Fact \ref{fact:avn-implies-strongly-contextual}.
        \item {[\ref{cor-item:equiv-strongly-contextual-all-rho} $\implies$ \ref{cor-item:equiv-contextual-all-rho}]} and {[\ref{cor-item:equiv-strongly-contextual-some-rho} $\implies$ \ref{cor-item:equiv-contextual-some-rho}]}: These follow from Theorem \ref{thm:strong-contexuality-implies-contextuality}.
        \item {[\ref{cor-item:equiv-avn-all-rho} $\implies$ \ref{cor-item:equiv-avn-some-rho}]} and {[\ref{cor-item:equiv-strongly-contextual-all-rho} $\implies$ \ref{cor-item:equiv-strongly-contextual-some-rho}]} and {[\ref{cor-item:equiv-contextual-all-rho} $\implies$ \ref{cor-item:equiv-contextual-some-rho}]}: These are obvious.
        \item \ref{cor-item:equiv-contextual-some-rho} $\implies$ \ref{cor-item:equiv-comp-graph-kl-regular}: This follows from Theorem \ref{thm:contextual-implies-comp-graph-kl}.
    \end{itemize}
\end{proof}

It is conjectured in \cite[][Conjecture 3.14]{kim2023stateindependentallversusnothingarguments} that any measurement cover $\mathcal{M}$ that realizes a contextual empirical model for some quantum state is state-independently AvN in a partial closure. We resolve this in the corollary below.

\begin{corollary} \label{cor:strongly-contextual-implies-siavn-closure}
    Let $X \subseteq \mathcal{P}_n$, and $\mathcal{M}$ be the set of maximally commuting subsets of $X$. Let $\mathcal{S}$ be an empirical model on $\langle X, \mathcal{M}, \Z_2 \rangle$ defined by some quantum state. If $\mathcal{S}$ is contextual, then $X$ is state-independently AvN in a partial closure.
\end{corollary}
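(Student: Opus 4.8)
The plan is to obtain this as an immediate composite of results already assembled in the excerpt, so the ``proof'' is really a matter of lining up the right implications rather than new work. The key observation is that the conjecture of Kim and Abramsky, phrased in our language, is exactly the implication ``$\mathcal{S}$ contextual $\implies$ $X$ state-independently AvN in a partial closure'', and both of these endpoints have already been tied to the Kirby--Love property of the compatibility graph of $X$.

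First I would invoke Theorem~\ref{thm:contextual-implies-comp-graph-kl} with $R = \R_{\ge 0}$ (a zerosumfree semifield): since $\mathcal{S}$ is a contextual empirical model defined by a quantum state on $\langle X, \mathcal{M}, \Z_2 \rangle$, its compatibility graph --- which is just the compatibility graph of $X$ --- has the Kirby--Love property. This is precisely item~\ref{cor-item:equiv-comp-graph-kl-regular} of Corollary~\ref{cor:strong-contextuality-equivalences}. Next I would pass along the equivalence \ref{cor-item:equiv-comp-graph-kl-regular} $\iff$ \ref{cor-item:equiv-siavn-closure-regular} established there, namely the two-step chain: \ref{cor-item:equiv-comp-graph-kl-regular} $\iff$ \ref{cor-item:equiv-kl-contextual-regular} by Fact~\ref{fact:kl-contextual-iff-comp-graph-kl} ($X$ is KL-contextual iff its compatibility graph has the Kirby--Love property), and \ref{cor-item:equiv-kl-contextual-regular} $\iff$ \ref{cor-item:equiv-siavn-closure-regular} by Theorem~\ref{thm:siavn-iff-kl-contextual}. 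Concluding, $X$ is state-independently AvN in a partial closure, as claimed.

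One point I would be careful to flag is that items~\ref{cor-item:equiv-comp-graph-kl-regular} and \ref{cor-item:equiv-siavn-closure-regular} of Corollary~\ref{cor:strong-contextuality-equivalences} are statements purely about the set $X$ (and its partial closure $\bar{X}$): they make no reference to the scenario $\langle \bar{X}, \bar{\mathcal{M}}, \Z_2\rangle$ or to the state $\rho$ appearing in that corollary's setup, so their equivalence may be applied verbatim here, even though the corollary itself is stated with the partial-closure scenario in mind. Likewise, Theorem~\ref{thm:contextual-implies-comp-graph-kl} is stated for an arbitrary quantum measurement scenario, so it applies directly to $\langle X, \mathcal{M}, \Z_2 \rangle$ without any closure assumption.

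The substantive obstacle here has already been dealt with in the earlier sections rather than in this corollary: it is the forward direction of Theorem~\ref{thm:siavn-iff-kl-contextual} (the implication that was left as a conjecture in \cite{kim2023stateindependentallversusnothingarguments}, proved via the matrix/null-space argument producing a determining tree of $-I$ with empty determining set) together with the ``contextual $\implies$ Kirby--Love'' direction of Theorem~\ref{thm:contextual-implies-comp-graph-kl} (proved by the explicit construction of a global $R$-distribution when the compatibility graph decomposes as a star of cliques around its universal vertices). Once those are in hand, the present corollary is a one-line assembly, and I would write it as such.
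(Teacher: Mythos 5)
Your proof is correct and follows exactly the paper's own argument: apply Theorem~\ref{thm:contextual-implies-comp-graph-kl} to get the Kirby--Love property for the compatibility graph of $X$, then pass through the equivalence of items~\ref{cor-item:equiv-comp-graph-kl-regular} and \ref{cor-item:equiv-siavn-closure-regular} in Corollary~\ref{cor:strong-contextuality-equivalences} (via Fact~\ref{fact:kl-contextual-iff-comp-graph-kl} and Theorem~\ref{thm:siavn-iff-kl-contextual}). Your observation that these items depend only on $X$ and not on the partial-closure scenario or the state is a correct and worthwhile clarification, but the route is the same as the paper's.
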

\begin{proof}
    Suppose $\mathcal{S}$ is contextual. By Theorem \ref{thm:contextual-implies-comp-graph-kl}, the compatibility graph of $X$ has the Kirby--Love property. By the equivalence of \ref{cor-item:equiv-comp-graph-kl-regular} and \ref{cor-item:equiv-siavn-closure-regular} in Corollary \ref{cor:strong-contextuality-equivalences}, we see that $X$ is state-independently AvN in a partial closure.
\end{proof}

\section{Contextuality in Code-Switching Protocols}\label{sec:contextuality_code_switching}

Code-switching is a protocol that enables the transfer of logical information encoded in one stabilizer code $\mathcal{Q}_1$, with stabilizer group $\mathcal{S}_1$, into another stabilizer code $\mathcal{Q}_2$, with stabilizer group $\mathcal{S}_2$, and vice versa. One can transition between the two codes by performing specific measurements, followed by unitary correction operations based on their outcomes. This protocol can be viewed as a single subsystem code, where the stabilizer group is $\mathcal{S}_1 \cap \mathcal{S}_2$ and the gauge group $\mathcal{G}$ contains both $\mathcal{S}_1$ and $\mathcal{S}_2$ as subgroups.

In this section, we will first recall the formalism of code-switching protocols as presented in~\cite{butt2024fault}, and formally define what it means for a code-switching protocol to be contextual. We then examine a few examples of code-switching protocols from~\cite{anderson2014fault,butt2024fault,bravyi2015doubled} and show that, by analyzing the structure of their underlying gauge groups, these protocols exhibit strong contextuality in a partial closure (Corollary~\ref{cor:code_switching_contextuality}, Remark~\ref{rem:RM_code_switching}, Remark~\ref{rem:10qubit_code_switching}, Corollary~\ref{cor:doubled_color_code_contextuality}). Notably, in each of these protocols, the two stabilizer codes, $\mathcal{Q}_1$ and $\mathcal{Q}_2$, together admit a universal transversal gate set. Due to the strong contextuality of these protocols in a partial closure, we see this as an evidence of how contextuality may be a valuable resource for achieving universal quantum computation via code-switching and transversal gates. This motivates the central question we investigate in this section: must any code-switching protocol that achieves a universal transversal gate set be strongly contextual in a partial closure? As we will show, it is likely that the answer is yes. To support this conjecture, we prove in Theorem~\ref{thm:code-switching-contextual} that, under certain mild assumptions on the two stabilizer codes, it must be that the subsystem code associated to the code-switching protocol has at least $3$ gauge qubits and thus is strongly contextual in a partial closure.

\subsection{Code-switching protocols} \label{sec:code_switching_intro}

As discussed in~\cite{butt2024fault}, a code-switching protocol between two stabilizer codes is possible if both stabilizer codes can be obtained from the same subsystem stabilizer code by fixing appropriate gauge qubits. We recall the formalism of this protocol and slightly generalize it to allow for code-switching between two \emph{subsystem} stabilizer codes, as seen in~\cite{bravyi2015doubled}.

Suppose $\mathcal{Q}$ is an $\llbracket n,k,g,d \rrbracket$ subsystem stabilizer code. There are $n$ pairs of ``virtual'' Pauli $X$ and $Z$ operators $\{X_j', Z_j'\}_{1 \le j \le n}$ from $\mathcal{P}_n$ (satisfying the usual Pauli relations between them) such that the stabilizer group is $\mathcal{S} = \langle Z_1', \cdots, Z_s' \rangle$ for some $s \ge 0$ and the gauge group is $\mathcal{G} = \langle iI, \mathcal{S}, X_{s+1}', Z_{s+1}', \cdots, X_{s+g}', Z_{s+g}'\rangle$. We can form subsystem codes $\mathcal{Q}_1$ and $\mathcal{Q}_2$ by fixing a subset of the gauge operators that are not also stabilizers. Formally, we choose subsets of indices $I_1, I_2 \subseteq \{s+1, \cdots, s+g\}$. We can then define a subsystem stabilizer code $\mathcal{Q}_1$ with stabilizer group $\mathcal{S}_1 = \langle \mathcal{S}, Z_j' :  j \in I_1\rangle$ and gauge group $\mathcal{G}_1 = \langle iI, \mathcal{S}_1, X_j', Z_j' :  j \in \{s+1, \cdots, s+g\} \setminus I_1\rangle$. We can also similarly define a subsystem stabilizer code $\mathcal{Q}_2$ with stabilizer group $\mathcal{S}_2 = \langle \mathcal{S}, X_j' :  j \in I_2\rangle$ and gauge group $\mathcal{G}_2 = \langle iI, \mathcal{S}_2, X_j', Z_j' :  j \in \{s+1, \cdots, s+g\} \setminus I_2\rangle$. (For code-switching between \emph{stabilizer codes} $\mathcal{Q}_1$ and $\mathcal{Q}_2$, $I_1 = I_2 = \{s+1, \cdots, s+g\}$.) Code-switching from $\mathcal{Q}_1$ to $\mathcal{Q}_2$ can be carried out by measuring the stabilizers $X_j'$ for $j \in I_2$ and performing any corrections if necessary. Code-switching from $\mathcal{Q}_2$ to $\mathcal{Q}_1$ can be performed in similar way, except we measure $Z_j'$ for $j \in I_1$ instead. This defines a code-switching protocol between subsystem stabilizer codes $\mathcal{Q}_1$ and $\mathcal{Q}_2$, which are both derived from the same parent subsystem code $\mathcal{Q}$. Note that $\mathcal{S} = \mathcal{S}_1 \cap \mathcal{S}_2$.

It now makes sense to discuss contextuality of a code-switching protocol under the framework we have developed.

\begin{definition} \label{def:code-switching-contextual}
    Consider a code-switching protocol between subsystem stabilizer codes $\mathcal{Q}_1$ and $\mathcal{Q}_2$, both of which are derived from the same parent subsystem code $\mathcal{Q}$. We say this code-switching protocol is \emph{contextual} (resp.\ \emph{contextual in a partial closure}, resp.\ \emph{strongly contextual}, resp.\ \emph{strongly contextual in a partial closure}) if the underlying quantum error-correcting code $\mathcal{Q}$ has the same property.    
\end{definition}

\subsection{Example 1: Code-Switching Between $\llbracket7,1,3\rrbracket$ and $\llbracket15,1,3\rrbracket$ Color Code} \label{sec:code_switching_examples}

We first study the code-switching protocol between the (extended) $\llbracket7,1,3\rrbracket$ color code and $\llbracket15,1,3\rrbracket$ color code, as presented in~\cite[Section IV.A]{butt2024fault}, and show that it is strongly contextual in a partial closure (Corollary~\ref{cor:code_switching_contextuality}). To analyze this, we consider the $15$-qubit subsystem code in which code-switching between these two codes takes place.
In particular, we show that this subsystem code has $3$ gauge qubits (Lemma~\ref{lem:gauge_qubits_of_code_switching}), which implies that it is strongly contextual in a partial closure (by Corollary~\ref{cor:subsystem_code_contextual_iff_g_ge_2}).

This protocol involves a 15-qubit subsystem code with gauge group $\mathcal{G}$ and stabilizer group $\mathcal{S}$. 
These qubits are arranged on the vertices of a partitioned tetrahedron as illustrated in Fig.~\ref{fig:code_switching}.
Fourteen of the 15 qubits are placed on the surface of a tetrahedron: one on each of its four vertices (qubits 0, 4, 6, 10), one on the midpoint of each of the six edges connecting the vertices (qubits 1, 3, 5, 7, 8, 9, 12), and one on the middle of each of its four faces (qubits 2, 11, 12, 14).
The last qubit is placed on the middle of the interior of the tetrahedron (qubit 13).
The tetrahedron is then partitioned into four three-dimensional cells (hexahedra), each of which has eight vertices and one of four distinct colors assigned to it: red, green, blue, or yellow (e.g., the red cell has vertices 0, 1, 2, 3, 7, 12, 13, 14). The $\llbracket 15,1,3 \rrbracket$ code involves all the $15$ qubits. While the $\llbracket 7,1,3 \rrbracket$ color code is supported only on qubits 0--6, we may extend it to a $\llbracket 15,1,3 \rrbracket$ code by including qubits 7--14 as eight additional physical qubits encoding zero logical qubits. The system involving these latter eight qubits is referred to as the \emph{bulk} in~\cite{butt2024fault}. Each weight-4 $X$- or $Z$-operator whose support is a face of the yellow cell is a stabilizer of the bulk and of the extended $\llbracket 7,1,3 \rrbracket$ color code. These operators generate the stabilizer group of the bulk and we will call these the \emph{bulk stabilizers}.

\begin{figure}
    \centering
    \includegraphics[width=1\columnwidth]{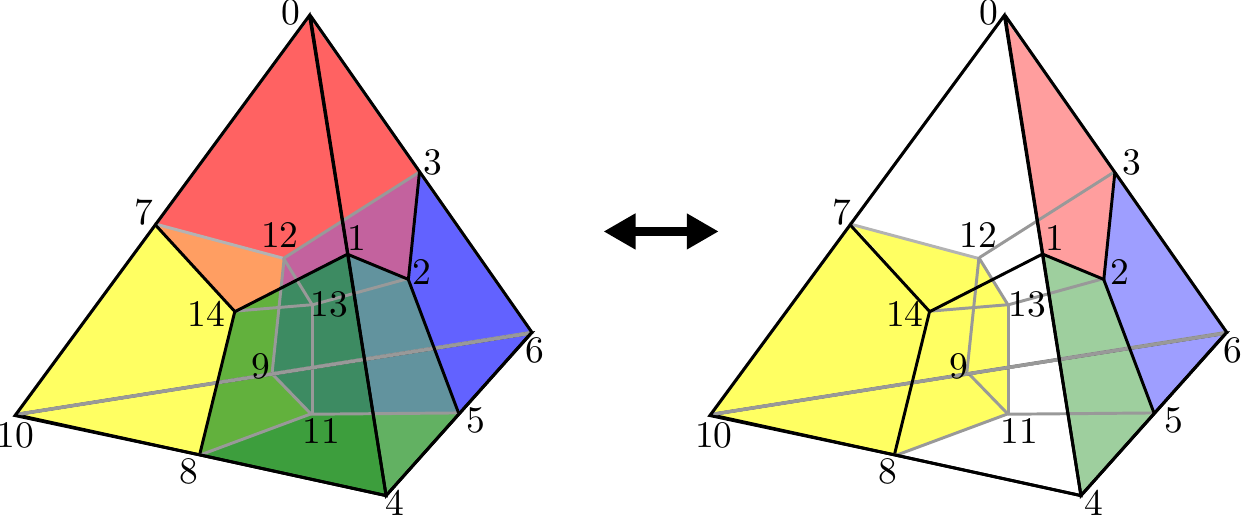}
    \caption{Code-switching protocol between $\llbracket7,1,3\rrbracket$ code and $\llbracket15,1,3\rrbracket$ code using 15 qubits on the vertices of partitioned tetrahedron, which we show to be contextual (Corollary~\ref{cor:code_switching_contextuality}).
    The $\llbracket7,1,3\rrbracket$ code is supported only on the 7 qubits on the right-hand side face of the tetrahedron (qubits 0--6), which are decoupled from the other 8 qubits (qubits 7--14) after switching from the $\llbracket15,1,3\rrbracket$ code.
    Switching back from the $\llbracket7,1,3\rrbracket$ code to the $\llbracket15,1,3\rrbracket$ code couples back qubits 0--6 with the other 8 qubits.}
    \label{fig:code_switching}
\end{figure}

The stabilizer group of the extended $\llbracket7,1,3\rrbracket$ code is generated by the bulk stabilizers as well as the red, green, and blue weight-4 $X$- and $Z$-stabilizers on the surface with qubits 0--6 in its support (which we call the \emph{Steane surface}). (Note that we can replace the latter with weight-8 $X$- and $Z$-stabilizers acting on the red, green, and blue cells, due to the presence of the bulk stabilizers in the generating set.)
The stabilizer group of the $\llbracket15,1,3\rrbracket$ code is generated by the weight-8 $X$- and $Z$-operators acting on the qubits of each of the four cells, together with all weight-4 $Z$-stabilizers on all the faces of the tetrahedron. This includes faces on the surface of the tetrahedron (e.g., $Z_0Z_7Z_{13}Z_{14}$) and faces in the interior that intersect two cells, labeled by $RG, RB, RY, GB, GY, BY$ (e.g., $Z_{RG} = Z_1Z_2Z_{13}Z_{14}$).

Switching from the $\llbracket15,1,3\rrbracket$ code to the $\llbracket7,1,3\rrbracket$ code involves measuring the red, green, and blue weight-4 $X$-stabilizers on the Steane surface, i.e., $X_R \coloneqq X_0X_1X_2X_3, X_G \coloneqq X_1X_2X_4X_5, X_B \coloneqq X_2X_3X_5X_6$.
Conditioned on the measurement outcomes, we perform a correction operation by applying zero or more 4-qubit $Z$ unitaries on the intersecting faces connecting the yellow cell and the surface, i.e., $Z_{RG} \coloneqq Z_1Z_2Z_{13}Z_{14}, Z_{RB} \coloneqq Z_2Z_3Z_{12}Z_{13}, Z_{GB} \coloneqq Z_2Z_5Z_{11}Z_{13}$.
For colors $\{c,c'c''\}=\{R,G,B\}$, a correction unitary on the face that intersects color-$c$ cell and color-$c'$ cell is being applied whenever measurement $X_{c''}$ gives a $-1$ outcome, e.g., $Z_{RG}$ is applied if measurement outcome of $X_B$ is $-1$.
On the other hand, switching from the $\llbracket7,1,3\rrbracket$ code to the $\llbracket15,1,3\rrbracket$ code involves measuring $Z_{RG}, Z_{RB}, Z_{GB}$.
Zero or more correction unitaries among $X_R, X_G , X_B$ are then applied according the following rule: apply $X_{c''}$ if the measurement outcome of $Z_{cc'}$ is $-1$, for colors $\{c,c'c''\}=\{R,G,B\}$.

\begin{lemma}\label{lem:gauge_qubits_of_code_switching}
    The subsystem code for the code-switching protocol between the $\llbracket7,1,3\rrbracket$ code and the $\llbracket15,1,3\rrbracket$ code (as described above) has $g=3$ gauge qubits.
\end{lemma}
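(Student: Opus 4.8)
The plan is to compute $g$ by a dimension count in the binary symplectic representation of the $15$-qubit Pauli group, $\mathcal{P}_{15}/\langle iI\rangle \cong \mathbb{F}_2^{30}$, and then to split that count along the CSS structure of the two codes. Write $V_1, V_2 \subseteq \mathbb{F}_2^{30}$ for the isotropic subspaces corresponding to the stabilizer group $\mathcal{S}_1$ of the extended $\llbracket 7,1,3\rrbracket$ code and the stabilizer group $\mathcal{S}_2$ of the $\llbracket 15,1,3\rrbracket$ code; both have dimension $15-1 = 14$ since each encodes one logical qubit. By the formalism of Section~\ref{sec:code_switching_intro}, the parent subsystem code has stabilizer group $\mathcal{S} = \mathcal{S}_1 \cap \mathcal{S}_2$ and gauge group $\mathcal{G} = \langle iI, \mathcal{S}_1, \mathcal{S}_2\rangle$. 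Since $\mathcal{S}\subseteq\mathcal{P}_{15}$ has only real phases and $-I\notin\mathcal{S}$, one has $\mathcal{S}\cap\langle iI\rangle = \{I\}$, so $|\langle\mathcal{S},iI\rangle| = 4\cdot 2^{\dim(V_1\cap V_2)}$, while $|\mathcal{G}| = 4\cdot 2^{\dim(V_1+V_2)}$; combined with $\dim(V_1\cap V_2) = 28 - \dim(V_1+V_2)$, Definition~\ref{def:subsystem_code} gives
\[g = \tfrac{1}{2}\log\frac{|\mathcal{G}|}{|\langle\mathcal{S},iI\rangle|} = \tfrac{1}{2}\bigl(\dim(V_1+V_2) - \dim(V_1\cap V_2)\bigr) = \dim(V_1+V_2) - 14.\]
Thus it suffices to prove $\dim(V_1 + V_2) = 17$.

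Every stabilizer generator named in the text is a pure $X$-type or pure $Z$-type Pauli, so $V_i = V_i^X \oplus V_i^Z$ and $\dim(V_1+V_2) = \dim(V_1^X + V_2^X) + \dim(V_1^Z + V_2^Z)$. For the $Z$-part, I would observe that every $Z$-type generator of $\mathcal{S}_1$ — the weight-$4$ bulk $Z$-stabilizers (supported on faces of the yellow cell) together with the red, green and blue Steane-surface weight-$4$ $Z$-stabilizers — is one of the weight-$4$ $Z$-operators on a face of the tetrahedron, hence lies in $\mathcal{S}_2$; therefore $V_1^Z \subseteq V_2^Z$ and $\dim(V_1^Z + V_2^Z) = \dim V_2^Z$. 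Counting the generators of $\mathcal{S}_2$ shows $\dim V_2^X = 4$ (the four weight-$8$ cell $X$-operators, checked to be independent over $\mathbb{F}_2$), hence $\dim V_2^Z = 14 - 4 = 10$. For the $X$-part, I would show $V_2^X \subseteq V_1^X$: the red, green and blue weight-$8$ cell $X$-operators each equal the product of the corresponding Steane-surface weight-$4$ $X$-stabilizer with a bulk $X$-stabilizer (this is precisely the remark made in the text), and the remaining yellow cell $X$-operator is likewise a stabilizer of the extended Steane code — for example because the product of all four cell $X$-operators is a product of bulk stabilizers, or by reading it off the incidence data in Fig.~\ref{fig:code_switching}. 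Since $\dim V_1^X = 7$ (the four bulk $X$-stabilizers and the three Steane-surface $X$-stabilizers, checked independent), this gives $\dim(V_1^X + V_2^X) = \dim V_1^X = 7$. Combining, $\dim(V_1+V_2) = 7 + 10 = 17$, so $g = 3$.

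The main obstacle is the last claim in the $X$-part: verifying that the yellow-cell weight-$8$ $X$-operator is a stabilizer of the extended $\llbracket 7,1,3\rrbracket$ code, which is a question about the qubit–face–cell incidence structure in Fig.~\ref{fig:code_switching} rather than abstract algebra; the remaining independence statements ($\dim V_2^X = 4$, $\dim V_1^X = 7$) are finite $\mathbb{F}_2$ rank computations. As a cross-check, and an alternative route to the lower bound $g \ge 3$, one can exhibit three gauge-qubit pairs explicitly inside $\mathcal{G}$ modulo $\langle\mathcal{S},iI\rangle$ — for instance $(X_R, Z_{GB})$, $(X_G, Z_{RB})$, $(X_B, Z_{RG})$, which can be checked to anticommute within each pair but commute across pairs, with none of the six operators lying in $\mathcal{S}_1\cap\mathcal{S}_2$ — and then the matching bound $g\le 3$ follows from the dimension count above.
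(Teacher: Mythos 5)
Your proof is correct and lands on the same numbers as the paper's ($\dim(V_1+V_2)=17$, $\dim(V_1\cap V_2)=11$, $g=3$), but the bookkeeping is organized differently in a way that is worth noting. The paper works directly with minimal generating sets: it exhibits $11$ independent generators of $\mathcal{S}=\mathcal{S}_1\cap\mathcal{S}_2$ and $17$ of $\mathcal{G}$ up to phase, which requires explicitly resolving the redundancy among the six yellow-cell face operators (opposing faces multiply to the weight-$8$ cell operator) in \emph{both} groups. Your route uses the symplectic identity $g=\dim(V_1+V_2)-14$ together with the CSS split and the two containments $V_1^Z\subseteq V_2^Z$ and $V_2^X\subseteq V_1^X$, which collapses the whole computation to $\dim V_1^X+\dim V_2^Z=7+10$ and avoids ever writing down a generating set for the intersection $\mathcal{S}_1\cap\mathcal{S}_2$. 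That is a cleaner reduction, at the cost of two points you leave implicit: (i) identifying the image of $\mathcal{S}_1\cap\mathcal{S}_2$ with $V_1\cap V_2$ requires that the two lifts of a common symplectic vector carry the same sign, which holds here because both codes are CSS with all-positive stabilizers but deserves a sentence; (ii) $\dim V_1^X=7$ still secretly uses the opposing-faces relation to see that the six bulk $X$-face operators span only a $4$-dimensional space. One genuine slip: your first suggested justification that the yellow-cell $X$-operator lies in $\mathcal{S}_1$ --- ``the product of all four cell $X$-operators is a product of bulk stabilizers'' --- is false (that product is supported on the four tetrahedron vertices and four face centers, which include qubits outside the bulk). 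The claim itself is fine, but the correct reason is the one the paper uses and your second alternative gestures at: the yellow-cell operator is the product of two disjoint opposing bulk face operators. Your closing cross-check with the pairs $(X_R,Z_{GB})$, $(X_G,Z_{RB})$, $(X_B,Z_{RG})$ is valid and is a stabilizer-equivalent variant of the pairs $(X_{BY},Z_{RG})$, $(X_{GY},Z_{RB})$, $(X_{RY},Z_{GB})$ that the paper records.
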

\begin{proof}
    The stabilizer group $\mathcal{S}$ of the subsystem code, which is the intersection of the stabilizer groups of the extended $\llbracket7,1,3\rrbracket$ code and the $\llbracket15,1,3\rrbracket$ code, is generated by the four weight-8 $X$-type and four weight-8 $Z$-type operators acting on each cell as well as the $Z$-type bulk stabilizers. The $Z$-type bulk stabilizers are not independent. Any of the three pairs of face operators on opposing faces of the yellow cell (namely, the pair $Z_{RY}, Z_8Z_9Z_{10}Z_{11}$, the pair $Z_{GY}, Z_7Z_9Z_{10}Z_{12}$, and the pair $Z_{BY},Z_7Z_8Z_{10}Z_{14}$) has both operators supported on disjoint qubits, and their product is the $Z$-type weight-$8$ stabilizer acting on the entire cell. This means that the $Z$-type bulk stabilizers may be generated by $Z_{RY}, Z_{GY}, Z_{BY}$ along with the weight $Z$-type weight-$8$ stabilizer acting on the entire cell. Thus, $\mathcal{S}$ may be generated by $11$ independent generators, namely the four $X$-type and four $Z$-type weight-$8$ operators acting on the qubits of each cell, as well as the three $Z$-type face operators $Z_{RY}, Z_{GY}, Z_{BY}$.

    The gauge group $\mathcal{G}$ is generated, up to phase (i.e., with the addition of $iI$), by the stabilizers for each of the two codes. This means it is generated up to phase by the weight-8 $X$- and $Z$-operators on each cell, the weight-4 $Z$-type operators on all the faces of all the cells, and the weight-4 $X$-type operators on the Steane surface and on the bulk. As in the preceding paragraph, these gauge operators are not independent (even if we restrict to considering only the $Z$- or $X$-type ones to avoid anticommutativity issues). Following the same argument as before, the $Z$-type gauge operators listed above may be independently generated by the four weight-8 $Z$-operators on each cell, together with the six weight-4 $Z$-operators on faces in the interior of the tetrahedron (namely, $Z_{RG}, Z_{RB}, Z_{RY}, Z_{GB}, Z_{GY}, Z_{BY}$). The $X$-type gauge operators may be independently generated by the four weight-8 $X$-operators on each cell, together with the three weight-4 $X$-operators on faces in the interior of the tetrahedron that are adjacent to the yellow cell (namely, $X_{RY}, X_{GY}, X_{BY}$).

    Thus, the minimal number of Pauli operators needed to generate $\mathcal{G}$ up to phase is $17$, while that needed to generate $\mathcal{S}$ is $11$. Hence, the number of gauge qubits is $g = \frac{1}{2}(17-11)=3$. (The three pairs of anticommuting gauge operators are in fact $(Z_{RG}, X_{BY})$, $(Z_{RB}, X_{GY})$, and $(Z_{GB}, X_{RY})$.)

\end{proof}

\begin{corollary}\label{cor:code_switching_contextuality}
    The code-switching protocol between the $\llbracket7,1,3\rrbracket$ code and the $\llbracket15,1,3\rrbracket$ code is strongly contextual in a partial closure.
\end{corollary}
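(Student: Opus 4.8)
The plan is to obtain this as an immediate consequence of the gauge-qubit count in Lemma~\ref{lem:gauge_qubits_of_code_switching} together with the general criterion of Corollary~\ref{cor:subsystem_code_contextual_iff_g_ge_2}; no new technical machinery is required, so the proof is essentially a bookkeeping step that chains together three prior results.

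Concretely, I would first recall that, by the construction in Section~\ref{sec:code_switching_intro}, the code-switching protocol between the $\llbracket 7,1,3\rrbracket$ code and the $\llbracket 15,1,3\rrbracket$ code is associated with a single $15$-qubit subsystem qubit stabilizer code $\mathcal{Q}$, whose stabilizer group is $\mathcal{S} = \mathcal{S}_1 \cap \mathcal{S}_2$ and whose gauge group is $\mathcal{G} = \langle \mathcal{S}_1, \mathcal{S}_2, iI \rangle$. It is worth one sentence to note that this $\mathcal{Q}$ genuinely fits Definition~\ref{def:subsystem_code} (in particular that $Z(\mathcal{G}) = \langle \mathcal{S}, iI\rangle$), so that the virtual-Pauli description and the parameter $g$ are defined for it and the hypotheses of Corollary~\ref{cor:subsystem_code_contextual_iff_g_ge_2} apply verbatim. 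Then: Lemma~\ref{lem:gauge_qubits_of_code_switching} gives $g = 3$ gauge qubits for $\mathcal{Q}$; since $3 \ge 2$, Corollary~\ref{cor:subsystem_code_contextual_iff_g_ge_2} yields that $\mathcal{Q}$ is strongly contextual in a partial closure; and finally Definition~\ref{def:code-switching-contextual} states that a code-switching protocol is strongly contextual in a partial closure precisely when its underlying subsystem code $\mathcal{Q}$ has that property. Concatenating these three implications gives the claim.

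There is no genuine obstacle within this corollary itself — all the substance lies upstream. The real work is in Lemma~\ref{lem:gauge_qubits_of_code_switching} (enumerating independent $X$- and $Z$-type generators of $\mathcal{S}$ and of $\mathcal{G}$ on the partitioned tetrahedron and correctly accounting for the dependencies among the bulk stabilizers to arrive at $g=3$), and in Corollary~\ref{cor:subsystem_code_contextual_iff_g_ge_2}, whose proof rests on the Kirby--Love characterization of Lemma~\ref{lem:comp_graph_kl_iff_at_least_two_gauge_qubits} and the web of equivalences in Corollary~\ref{cor:strong-contextuality-equivalences}. Thus the only care needed in writing the proof of Corollary~\ref{cor:code_switching_contextuality} is to spell out the logical chain ``gauge-qubit count $\Rightarrow$ strong contextuality of $\mathcal{Q}$ in a partial closure $\Rightarrow$ strong contextuality of the protocol in a partial closure'' explicitly, and to confirm that the notions of $g$, partial closure, and contextuality of a protocol are being used consistently with their definitions for this particular $15$-qubit code.
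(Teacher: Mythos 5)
Your proposal is correct and matches the paper's proof exactly: the paper likewise derives the corollary by chaining Lemma~\ref{lem:gauge_qubits_of_code_switching} ($g=3$), Corollary~\ref{cor:subsystem_code_contextual_iff_g_ge_2}, and Definition~\ref{def:code-switching-contextual}. Your additional remark that the real work lies upstream in the gauge-qubit count is an accurate assessment.
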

\begin{proof}
    This follows from Lemma~\ref{lem:gauge_qubits_of_code_switching} and Corollary~\ref{cor:subsystem_code_contextual_iff_g_ge_2} (and Definition~\ref{def:code-switching-contextual}).
\end{proof}

\begin{remark}\label{rem:RM_code_switching}
    The $\llbracket7,1,3\rrbracket$ code and the $\llbracket15,1,3\rrbracket$ code discussed in this section are examples of quantum Reed--Muller codes. The fault-tolerant conversion between quantum Reed--Muller codes is discussed in \cite{anderson2014fault}. Specifically, they described a method to fault-tolerantly convert between two quantum Reed--Muller codes $\mathrm{QRM}(m)$ and $\mathrm{QRM}(m+1)$ (for $m \ge 3$), and explained that their protocol can be described in terms of a subsystem code with $m$ gauge qubits. Since $m \ge 3$, their code-switching protocol is also always contextual in a partial closure by Corollary~\ref{cor:subsystem_code_contextual_iff_g_ge_2}. For $m=3$, this coincides with the switching between the $\llbracket7,1,3\rrbracket$ code (namely, $\mathrm{QRM}(3)$) and the $\llbracket15,1,3\rrbracket$ code (namely, $\mathrm{QRM}(4)$) described earlier.
\end{remark}

\begin{remark}\label{rem:10qubit_code_switching}
    A fault-tolerant code-switching protocol between a $\llbracket 7,1,3 \rrbracket$ code and a $\llbracket 10,1,2 \rrbracket$ code (which is derived from the $\llbracket 15,1,3 \rrbracket$ code by \emph{code morphing}) is also described in \cite[Section VI]{butt2024fault}. As explained in \cite[Section VI.D]{butt2024fault}, this can also be viewed as a subsystem code whose gauge group can be generated up to phase by $12$ Pauli operators (but no less) and whose stabilizer group has rank $6$. Thus, there are $g = \frac{1}{2}(12-6)=3$ gauge qubits, which implies that this code-switching protocol is also strongly contextual in a partial closure by Corollary~\ref{cor:subsystem_code_contextual_iff_g_ge_2}.
\end{remark}

\subsection{Example 2: Code-Switching Between Doubled Color Codes}

The protocol described above is a specific instance of a broader strategy to achieve universal fault-tolerant computation by switching between codes with complementary transversal gate sets and thus allowing us to bypass the Eastin--Knill theorem~\cite{eastin2009restrictions}. Bravyi and Cross introduced three such families of protocols based on a recursive construction called \emph{doubled color codes}~\cite{bravyi2015doubled}. For any desired odd distance $d=2t+1$ (for $t \ge 1$), these protocols provide a method for switching between a ``$C$-code'' with transversal Clifford gates and a ``$T$-code'' with a transversal $T$ gate. Following~\cite{bravyi2015doubled}, we will refer to the underlying subsystem code defining this code-switching protocol as the \emph{base code}. We show that the base codes (and hence the protocols) for all three families are strongly contextual.

The protocols in all three families follow a similar structure, which we will now describe. Each base code is an $\llbracket n_t, 1, g_t, d_t = 2t+1\rrbracket$ subsystem code, with both $n_t$ and $d_t$ odd, and the single logical $X$- and $Z$-operators represented by the transversal $X^{\otimes n}$- and $Z^{\otimes n}$-operators acting on all the qubits. To define the stabilizer and gauge groups, we introduce some additional notation from~\cite{bravyi2015doubled}. For a binary vector $f \in \F_2^n$, let $X(f)$ be the tensor product of $X$-operators on the qubits in the support of $f$. We use $Z(f)$ analogously. For two linear subspaces $\mathcal{A}, \mathcal{B} \subseteq \mathbb{F}_2^n$, $\mathrm{CSS}(\mathcal{A}, \mathcal{B})$ denotes the subgroup of the Pauli group $G_n$ generated by $\{X(a) : a \in \mathcal{A}\}$ and $\{Z(b) : b \in \mathcal{B}\}$. For a subspace $\mathcal{W} \subseteq \F_2^n$, we also define $\dot{\mathcal{W}} \subseteq \mathcal{W}^\perp$ as the subspace of all \emph{even-weight} vectors orthogonal to $\mathcal{W}$.

Each code-switching protocol is defined by a pair of nested subspaces $\mathcal{W}_T \subseteq \mathcal{W}_C$ (whose elements all have even weight) that satisfy the inclusion relation $\mathcal{W}_T \subseteq \mathcal{W}_C \subseteq \dot{\mathcal{W}}_C \subseteq \dot{\mathcal{W}}_T$. The protocol switches between two codes:
\begin{itemize}
    \item the $T$-code, which is a regular CSS (stabilizer) code with stabilizer group $\mathrm{CSS}(\mathcal{W}_T, \dot{\mathcal{W}}_T)$ and gauge group generated by $iI$ and $\mathrm{CSS}(\mathcal{W}_T, \dot{\mathcal{W}}_T)$, and
    \item the $C$-code, which is a subsystem code with stabilizer group $\mathrm{CSS}(\mathcal{W}_C, \mathcal{W}_C)$ and gauge group generated by $iI$ and $\mathrm{CSS}(\dot{\mathcal{W}}_C, \dot{\mathcal{W}}_C)$.
\end{itemize}
The base code is therefore the subsystem code whose stabilizer group is $\mathcal{S} = \mathrm{CSS}(\mathcal{W}_T, \mathcal{W}_C)$ (the intersection of the two stabilizer groups), and whose gauge group $\mathcal{G}$ is generated by $iI$ and $\mathrm{CSS}(\dot{\mathcal{W}}_C, \dot{\mathcal{W}}_T)$.

The three families of doubled color codes are constructed iteratively, starting with a basic construction whose check measurements may be nonlocal, and progressively adding ancillary qubits and additional gauge check measurements to replace all the nonlocal checks with local ones:
\begin{itemize}
    \item \textbf{Family 1 (Basic construction)}: This family is derived from the standard 2D color codes~\cite{bombin2006topological} on hexagonal lattices. Each color code lattice $\Lambda_t$ has $m_t = 3t^2+3t+1$ sites, for $t \ge 1$. The face operators of this lattice span a subspace $\mathcal{S}_t \subseteq \mathbb{F}_2^{m_t}$ of dimension $(m_t-1)/2$. Let $n_t = 1 + 2\sum_{j=1}^t m_j = 2t^3 + 6t^2 + 6t + 1$; this is the total number of physical qubits in the base code. The defining subspaces for the protocol, $\mathcal{W}_T = \mathcal{T}_t$ and $\mathcal{W}_C = \mathcal{C}_t$ (in $\F_2^{n_t}$), are constructed recursively from the subspaces $\{\mathcal{S}_j\}_{j \le t}$ via a ``doubling map.'' In this construction, we have $\mathcal{C}_t = \dot{\mathcal{C}}_t$, so the $C$-code is also a stabilizer code. The resulting code-switching protocol involves nonlocal check operators.

    \item \textbf{Family 2 (Intermediate construction):} This family extends the first by adding ancillary qubits and new local gauge generators to decompose the nonlocal checks. A total of $\sum_{j=2}^t 2j = t^2 + t - 2$ physical qubits are added in this construction\footnote{This is initially given as $\sum_{j=1}^t 2j = t^2 + t$ in~Equation (85) of \cite{bravyi2015doubled}, but it was later noted at the very end of Section 12 in the paper that this number can be reduced by $2$.}, so the total number of physical qubits becomes $n_t = 2t^3 + 7t^2 + 7t - 1$. The new defining subspaces, $\mathcal{W}_T = \mathcal{U}_t$ and $\mathcal{W}_C =\mathcal{D}_t$, are obtained from their respective counterparts $\mathcal{T}_t$ and $\mathcal{C}_t$ by appending zeros to the vectors, so there is no change to the dimensions of these subspaces from their counterparts. The resulting protocol involves mostly local checks, with the exception of some nonlocal checks of weight two.

    \item \textbf{Family 3 (Final construction):} A final ``subdivision gadget'' adds more ancillas to make all gauge generators spatially local on a 2D lattice. This yields the final subspaces, $\mathcal{W}_T = \mathcal{V}_t$ and $\mathcal{W}_C = \mathcal{F}_t$, which again have the same dimensions as $\mathcal{U}_t$ and $\mathcal{D}_t$ respectively. A total of $\sum_{j=2}^t (2j-2) = t^2 - t$ physical qubits are added at this final step, bringing the total number of physical qubits to $n_t = 2t^3 + 8t^2 + 6t - 1$.
\end{itemize}

\begin{lemma}\label{lem:gauge_qubits_of_doubled_color_code}
    The base code for the $t$-th member of the family for the basic, intermediate and final construction has $\tfrac{1}{2}(t^3 + 3t^2 + 2t)$, $\tfrac{1}{2}(t^3 + 5t^2 + 4t - 4)$ and $\tfrac{1}{2}(t^3 + 7t^2 + 2t - 4)$ gauge qubits respectively.
\end{lemma}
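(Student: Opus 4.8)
The plan is to reduce everything to the dimensions of the two defining subspaces $\mathcal{W}_T$ and $\mathcal{W}_C$ of each base code, and then read those dimensions off from the doubling construction of~\cite{bravyi2015doubled}. The first step is the general identity: for a base code as in Section~\ref{sec:code_switching_intro}, with stabilizer group $\mathcal{S} = \mathrm{CSS}(\mathcal{W}_T,\mathcal{W}_C)$ and gauge group $\mathcal{G} = \langle iI, \mathrm{CSS}(\dot{\mathcal{W}}_C, \dot{\mathcal{W}}_T)\rangle$, one has
\[ g \;=\; n - 1 - \dim\mathcal{W}_T - \dim\mathcal{W}_C . \]
I would establish this in one of two equivalent ways. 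Either: each base code encodes $k=1$ logical qubit (as recorded in Section~\ref{sec:code_switching_intro}), and the nesting $\mathcal{W}_T \subseteq \mathcal{W}_C \subseteq \dot{\mathcal{W}}_C \subseteq \dot{\mathcal{W}}_T$ together with a routine independence check shows $\log|\mathcal{S}| = \dim\mathcal{W}_T + \dim\mathcal{W}_C$ and $-I\notin\mathcal{S}$, so the identity $k = n - g - \log|\mathcal{S}|$ of Definition~\ref{def:subsystem_code} gives the claim. Or: directly, $2g = \log|\mathcal{G}/\langle\mathcal{S},iI\rangle| = (\dim\dot{\mathcal{W}}_C + \dim\dot{\mathcal{W}}_T) - (\dim\mathcal{W}_T + \dim\mathcal{W}_C)$. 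Both routes need only the observation that $\dim\dot{\mathcal{W}} = n - 1 - \dim\mathcal{W}$ for any $\mathcal{W}\subseteq\F_2^n$ spanned by even-weight vectors: the all-ones vector lies in $\mathcal{W}^\perp$ but is odd-weight (since $n_t$ is odd in all three constructions), so it is precisely the one direction in $\mathcal{W}^\perp$ excised by the even-weight condition.

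The second step computes $\dim\mathcal{W}_T + \dim\mathcal{W}_C$, and the key point is that this sum is the \emph{same} for all three families. Indeed, the intermediate and final constructions obtain their defining subspaces $\mathcal{U}_t,\mathcal{D}_t$ and $\mathcal{V}_t,\mathcal{F}_t$ from the basic ones $\mathcal{T}_t,\mathcal{C}_t$ purely by appending zero coordinates, which does not change a subspace's dimension; only $n_t$ grows. For the basic construction, the self-duality $\mathcal{C}_t = \dot{\mathcal{C}}_t$ and the dimension identity above give $\dim\mathcal{C}_t = (n_t^{\mathrm{basic}}-1)/2 = \sum_{j=1}^t m_j = t^3+3t^2+3t$, while unwinding the recursive doubling map applied to the color-code face spaces $\mathcal{S}_1,\dots,\mathcal{S}_t$ in~\cite{bravyi2015doubled} gives $\dim\mathcal{T}_t = \sum_{j=1}^t(\dim\mathcal{S}_j + 1) = \sum_{j=1}^t \tfrac{m_j+1}{2} = \tfrac12(t^3+3t^2+4t)$ (for $t=1$ this is $4$, matching the four $X$-type stabilizer generators of the $\llbracket 15,1,3\rrbracket$ Reed--Muller code, consistent with Lemma~\ref{lem:gauge_qubits_of_code_switching}). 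Hence $\dim\mathcal{W}_T + \dim\mathcal{W}_C = \tfrac12(3t^3+9t^2+10t)$ for every member of all three families.

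The third step is to substitute. Plugging the three values $n_t = 2t^3+6t^2+6t+1$, $n_t = 2t^3+7t^2+7t-1$, $n_t = 2t^3+8t^2+6t-1$ into $g = n_t - 1 - \tfrac12(3t^3+9t^2+10t)$ and simplifying yields $\tfrac12(t^3+3t^2+2t)$, $\tfrac12(t^3+5t^2+4t-4)$, and $\tfrac12(t^3+7t^2+2t-4)$ respectively, which is the claim; this is routine polynomial arithmetic.

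I expect the main obstacle to be the bookkeeping in the second step, specifically pinning down $\dim\mathcal{T}_t$ from the recursive doubling construction of~\cite{bravyi2015doubled}: one must verify that each doubling step enlarges the $T$-type subspace by exactly the stabilizer space of the next color-code lattice $\Lambda_j$ plus one further generator (the $+1$ reflecting the logical operator of that constituent code). Everything else — the self-duality of $\mathcal{C}_t$, dimension-invariance under appending zeros, the counts $m_j = 3j^2+3j+1$ and $\dim\mathcal{S}_j = (m_j-1)/2$, the three formulas for $n_t$, and the general subsystem-code identity — is either stated above or immediate.
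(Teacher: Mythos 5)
Your proposal is correct and follows essentially the same route as the paper: both derive $g = \dim\mathcal{H} - \dim\mathcal{W}_T - \dim\mathcal{W}_C = n_t - 1 - \dim\mathcal{W}_T - \dim\mathcal{W}_C$ from the orthogonality relation on the even-weight subspace, observe that zero-padding leaves the subspace dimensions unchanged across the three families, and extract $\dim\mathcal{T}_t = t + \sum_{j=1}^t \dim\mathcal{S}_j$ from the Bravyi--Cross doubling recursion. The only (cosmetic) difference is that you obtain $\dim\mathcal{C}_t = (n_t-1)/2$ from the self-duality $\mathcal{C}_t = \dot{\mathcal{C}}_t$ rather than from Equation~(81) of~\cite{bravyi2015doubled}, and you substitute each $n_t$ directly instead of adding the increments family by family; both yield the same values.
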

\begin{proof}
    Let $\mathcal{H} \subseteq \F_2^{n_t}$ be the subspace of even-weight vectors, so $\dim \mathcal{H} = n_t - 1$. Since $\mathcal{W}_T$ and $\mathcal{W}_C$ are both contained in $\mathcal{H}$, then $\dot{\mathcal{W}}_T$ and $\dot{\mathcal{W}}_C$ are the orthogonal complements of $\mathcal{W}_T$ and $\mathcal{W}_C$ in $\mathcal{H}$ respectively. As the (regular) inner product is nondegenerate on $\mathcal{H}$ (this requires that $n_t$ is odd, so the all-ones vector is not in $\mathcal{H}$),
    \begin{equation} \label{eq:orthogonality_color_code}
        \dim \mathcal{H} = \dim \mathcal{W}_T + \dim \dot{\mathcal{W}}_T = \dim \mathcal{W}_C + \dim \dot{\mathcal{W}}_C.
    \end{equation}
    The rank $s$ of the stabilizer group $\mathcal{S} = \mathrm{CSS}(\mathcal{W}_T, \mathcal{W}_C)$ is $\dim \mathcal{W}_T + \dim \mathcal{W}_C$. The gauge group $\mathcal{G}$ is generated by $iI$ and $\mathrm{CSS}(\dot{\mathcal{W}}_C, \dot{\mathcal{W}}_T)$, so by Equation~\eqref{eq:orthogonality_color_code}, there are $g_t = \tfrac{1}{2}\left((\dim \dot{\mathcal{W}}_C + \dim \dot{\mathcal{W}}_T) - (\dim \mathcal{W}_T + \dim \mathcal{W}_C)\right) = \dim \mathcal{H} - \dim \mathcal{W}_C - \dim \mathcal{W}_T$ gauge qubits. We now specialize to each of the three constructions.

    For the basic construction, the subspace $\mathcal{T}_t$ has dimension $t + \sum_{j=1}^t \dim \mathcal{S}_j$, by Equation~(71) of \cite{bravyi2015doubled}. Since $\dim \mathcal{S}_j = \tfrac{1}{2}(m_t - 1) = \tfrac{3}{2}(t^2 + t)$, then using the formula for sums of consecutive integers and squares, we have $\dim \mathcal{T}_t = \tfrac{1}{2}(t^3 + 3t^2 + 4t)$. Similarly, Equation~(81) of \cite{bravyi2015doubled} implies that $\dim \mathcal{C}_t = t + 2\sum_{j=1}^t \dim \mathcal{S}_j = t^3 + 3t^2 + 3t$. Therefore,
    \begin{align*}
        g_t &= \dim \mathcal{H} - \dim \mathcal{C}_t - \dim \mathcal{T}_t \\
        & = (2t^3 + 6t^2 + 6t) - (t^3 + 3t^2 + 3t) - \tfrac{1}{2}(t^3 + 3t^2 + 4t) \\
        &= \tfrac{1}{2}(t^3 + 3t^2 + 2t).
    \end{align*}

    For the intermediate construction, the value of $n_t$ (and hence of $\dim \mathcal{H}$) is increased by $t^2 + t - 2$ compared to the basic construction, while the dimensions of the other two subspaces $\mathcal{D}_t$ and $\mathcal{U}_t$ remain the same as their counterparts $\mathcal{C}_t$ and $\mathcal{T}_t$. Thus, the number of gauge qubits is increased by $t^2 + t - 2$, giving us $g_t = \tfrac{1}{2}(t^3 + 3t^2 + 2t) + t^2 + t - 2 = \tfrac{1}{2}(t^3 + 5t^2 + 4t - 4)$.

    For the final construction, the value of $n_t$ (and hence of $\dim \mathcal{H}$) is increased by $t^2 - t$ compared to the basic construction, while the dimensions of the other two subspaces $\mathcal{F}_t$ and $\mathcal{V}_t$ remain the same as their counterparts $\mathcal{D}_t$ and $\mathcal{U}_t$. Thus, the number of gauge qubits is increased by $t^2 - t$, giving us $g_t = \tfrac{1}{2}(t^3 + 5t^2 + 4t - 4) + t^2 - t = \tfrac{1}{2}(t^3 + 7t^2 + 2t - 4)$.
\end{proof}

\begin{corollary}\label{cor:doubled_color_code_contextuality}
    All the code-switching protocols in any of the three families described above are strongly contextual in a partial closure.
\end{corollary}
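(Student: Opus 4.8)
The plan is to combine the gauge-qubit counts established in Lemma~\ref{lem:gauge_qubits_of_doubled_color_code} with the contextuality criterion of Corollary~\ref{cor:subsystem_code_contextual_iff_g_ge_2}, routed through Definition~\ref{def:code-switching-contextual}, which reduces strong contextuality in a partial closure of a code-switching protocol to the same property for its base (parent) subsystem code. So the only substantive point to verify is that the number of gauge qubits $g_t$ of each base code is at least $2$ for every admissible parameter $t \ge 1$ (recall $d = 2t+1$).

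First I would record the three expressions from Lemma~\ref{lem:gauge_qubits_of_doubled_color_code}: for the basic construction, $g_t = \tfrac{1}{2}(t^3+3t^2+2t) = \tfrac{1}{2}t(t+1)(t+2)$; for the intermediate construction, $g_t = \tfrac{1}{2}(t^3+5t^2+4t-4)$; and for the final construction, $g_t = \tfrac{1}{2}(t^3+7t^2+2t-4)$. Each is a cubic in $t$ with positive leading coefficient, and one checks that the derivative of the cubic inside each bracket is positive on $t \ge 1$ (e.g.\ $3t^2+10t+4 > 0$ for the intermediate family, and similarly for the other two), so each $g_t$ is strictly increasing in $t$ over the admissible range. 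Evaluating at the smallest admissible value $t = 1$ yields $g_1 = 3$ in all three cases. Hence $g_t \ge 3 \ge 2$ for every $t \ge 1$ and for each of the three families.

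Applying Corollary~\ref{cor:subsystem_code_contextual_iff_g_ge_2}, each base code is therefore strongly contextual in a partial closure, and by Definition~\ref{def:code-switching-contextual} the associated code-switching protocol inherits this property. I do not anticipate any real obstacle in this argument: once Lemma~\ref{lem:gauge_qubits_of_doubled_color_code} is in hand, all that remains is the elementary positivity check on three cubics. The genuine content of the result is carried entirely by Lemma~\ref{lem:gauge_qubits_of_doubled_color_code} (tracking subspace dimensions through the recursive doubling and subdivision constructions of Bravyi and Cross) together with the equivalence chain underlying Corollary~\ref{cor:subsystem_code_contextual_iff_g_ge_2}.
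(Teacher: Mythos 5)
Your proposal is correct and follows essentially the same route as the paper: both reduce the claim to the gauge-qubit counts of Lemma~\ref{lem:gauge_qubits_of_doubled_color_code}, observe that each cubic is minimized at $t=1$ where $g_1=3\ge 2$ in all three families, and conclude via Corollary~\ref{cor:subsystem_code_contextual_iff_g_ge_2} and Definition~\ref{def:code-switching-contextual}. The only difference is that you make the monotonicity check explicit via derivatives, which the paper leaves implicit.
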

\begin{proof}
    From Lemma~\ref{lem:gauge_qubits_of_doubled_color_code}, we see that (for $t \ge 1$) the number of gauge qubits in the base code for the basic, intermediate, and final constructions must satisfy $g_t = \tfrac{1}{2}(t^3 + 3t^2 + 2t) \ge \tfrac{1}{2}(1+3+2) = 3$, $g_t = \tfrac{1}{2}(t^3 + 5t^2 + 4t - 4) \ge \tfrac{1}{2}(1+5+4-4) = 3$, and $g_t = \tfrac{1}{2}(t^3 + 7t^2 + 2t - 4) \ge \tfrac{1}{2}(1+7+2-4)=3$ respectively. The statement of the corollary then follows from Corollary~\ref{cor:subsystem_code_contextual_iff_g_ge_2} (and Definition~\ref{def:code-switching-contextual}).
\end{proof}

\subsection{Necessity of Strong Contextuality for Universal Transversal Gates from Code-Switching}

We have seen that the code-switching protocols discussed in this section have all been strongly contextual in a partial closure (Corollary~\ref{cor:code_switching_contextuality}, Remark~\ref{rem:RM_code_switching}, Remark~\ref{rem:10qubit_code_switching}, Corollary~\ref{cor:doubled_color_code_contextuality}). This certainly need not be the case for all code-switching protocols, since we can always begin with a subsystem code with $g = 1$ (which is noncontextual, by Corollary~\ref{cor:subsystem_code_contextual_iff_g_ge_2}) and fix the gauge qubit in two different ways to get the two stabilizer codes for a noncontextual code-switching protocol. What appears to be unique to each of the protocols we analyzed is that they involve switching between a pair of subsystem codes that together admit a universal transversal gate set. This suggests a possible broader principle: any code-switching protocol that achieves transversal universality must exhibit strong contextuality in its partial closure. We formalize this as the following conjecture, which may be viewed as a no-go conjecture for noncontextual routes to achieving transversal universality via code-switching.

\begin{conjecture} \label{conj:code-switching-strongly-contextual-partial-closure}
    If a fault-tolerant code-switching protocol between subsystem stabilizer codes $\mathcal{Q}_1$ and $\mathcal{Q}_2$ is such that the union of the set of  transversal gates for $\mathcal{Q}_1$ and that for $\mathcal{Q}_2$ forms a universal gate set, then the code-switching protocol must be strongly contextual in a partial closure.
\end{conjecture}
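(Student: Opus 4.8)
The plan is to translate the statement into a question about the number of gauge qubits of the parent subsystem code and then to obstruct universality when that number is small. By Definition~\ref{def:code-switching-contextual} together with Corollary~\ref{cor:subsystem_code_contextual_iff_g_ge_2}, the code-switching protocol (with parent subsystem code $\mathcal{Q}$) is strongly contextual in a partial closure if and only if $\mathcal{Q}$ has $g \ge 2$ gauge qubits. Hence it suffices to prove the contrapositive: if $g \le 1$, the union of the transversal gate sets of $\mathcal{Q}_1$ and $\mathcal{Q}_2$ is not universal. One in fact aims for the stronger conditional assertion of Theorem~\ref{thm:code-switching-contextual}: under mild nondegeneracy hypotheses on $\mathcal{Q}_1,\mathcal{Q}_2$ (for instance, both of distance at least two, a single logical qubit, and a genuinely decoupled gauge sector), achieving a universal transversal gate set forces $g \ge 3$.

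First I would dispose of the case $g = 0$. Then $I_1 = I_2 = \emptyset$, so $\mathcal{Q}_1 = \mathcal{Q}_2 = \mathcal{Q}$ is a single honest stabilizer code and the ``switch'' is vacuous; provided the code is nontrivial, the Eastin--Knill theorem directly forbids its transversal gate group from being universal. The substantive case for the bare conjecture is $g = 1$. Here $\mathcal{Q}\cong\mathcal{A}\otimes\mathcal{B}$ with $\mathcal{B}$ a single gauge qubit, and $\mathcal{Q}_1,\mathcal{Q}_2$ are, up to relabeling, the code $\mathcal{A}$ with this ancilla fixed to $|0\rangle$ and to $|+\rangle$ respectively. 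The idea is to show that each physical transversal unitary preserving $\mathcal{Q}_i$ descends to a logical gate lying in a \emph{single} finite group $G_\mathcal{A}$ attached to the parent code $\mathcal{A}$ (equivalently, to $\mathcal{Q}$ viewed as a subsystem code, via a subsystem version of Eastin--Knill), so that the reachable logical group $\langle G_1, G_2\rangle$ is confined to $G_\mathcal{A}$ and cannot be universal. Making this precise requires relating transversal gates of a gauge-fixed code back to dressed transversal logical gates of the parent subsystem code, and it is precisely here that the nondegeneracy hypotheses enter: they exclude pathological ways in which the one extra gauge direction could be entangled with the physical qubits so as to enlarge $\langle G_1, G_2\rangle$ beyond $G_\mathcal{A}$.

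The main obstacle---and the reason the unconditional statement remains Conjecture~\ref{conj:code-switching-strongly-contextual-partial-closure}---is controlling $\langle G_1, G_2\rangle$ from structural data alone. Eastin--Knill makes each $G_i$ finite, but the join of two finite subgroups of $\mathrm{PU}(2^k)$ is generically dense (already for $k=1$), so non-universality cannot follow from finiteness; one must genuinely exploit that $\mathcal{Q}_1$ and $\mathcal{Q}_2$ agree away from the $2g$ gauge generators of the shared parent $\mathcal{Q}$ to pin down $\langle G_1, G_2\rangle$. As $g$ grows this confinement degrades, each additional gauge qubit being a degree of freedom the switching can exploit to enlarge the reachable logical group, and it is not evident that $g = 2$ already provides enough structure to force universality, let alone that $g \ge 3$ is unconditionally necessary. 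A promising route for the strong ($g \ge 3$) statement is a quantitative refinement of Eastin--Knill that tracks how the transversal logical group can grow under one gauge-fixing step---plausibly via disjointness or Clifford-hierarchy-level bounds on transversal logical gates of stabilizer codes, as in the analysis of the $\llbracket 7,1,3 \rrbracket$-to-$\llbracket 15,1,3 \rrbracket$ switch---together with an argument that at least three such steps are needed to reach a universal gate set; carrying the same bookkeeping out without the nondegeneracy hypotheses is the open part.
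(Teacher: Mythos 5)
The statement you are addressing is a conjecture, and the paper does not prove it in full generality either; it only establishes a restricted version (Theorem~\ref{thm:code-switching-contextual}) under specific hypotheses. Your opening reduction is exactly the paper's own observation following the conjecture: by Corollary~\ref{cor:subsystem_code_contextual_iff_g_ge_2} the claim is equivalent to $g \ge 2$ for the parent code, Eastin--Knill disposes of $g=0$, and the substantive content is ruling out $g=1$. Your handling of $g=0$ (that $I_1=I_2=\emptyset$ forces $\mathcal{Q}_1=\mathcal{Q}_2=\mathcal{Q}$, so universality would contradict Eastin--Knill) is correct and matches the paper.

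The genuine gap is in your $g=1$ strategy. You propose that every transversal unitary preserving $\mathcal{Q}_i$ ``descends to a logical gate lying in a single finite group $G_\mathcal{A}$ attached to the parent code,'' so that $\langle G_1,G_2\rangle \subseteq G_\mathcal{A}$. There is no justification for this confinement: a transversal gate of the gauge-fixed code $\mathcal{Q}_1$ only needs to preserve $\mathcal{Q}_1$'s codespace and stabilizer group, not the parent gauge group $\mathcal{G}$, so it need not induce a dressed logical gate of the subsystem code $\mathcal{Q}$ at all, and the two logical groups $G_1, G_2$ have no a priori common finite overgroup. You candidly flag that the join of two finite subgroups of $\mathrm{PU}(2^k)$ is generically dense and that this is ``the open part,'' which is honest, but it means the proposal does not close the argument for $g=1$ and therefore does not prove the conjecture. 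You also slightly mischaracterize the hypotheses of Theorem~\ref{thm:code-switching-contextual}: they are not a generic ``nondegeneracy'' or ``decoupled gauge sector'' condition but the concrete requirements that $\mathcal{Q}_1$ admits logical $\CNOT$ and $T$ via physical transversal gates, $\mathcal{Q}_2$ admits logical $H$ transversally, both have distance greater than $1$, and the logical operators are $X^{\otimes n}$ and $Z^{\otimes n}$.

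For comparison, the paper's partial result takes an entirely different, non-Eastin--Knill route: transversal $\CNOT$ forces $\mathcal{Q}_1$ to be CSS; transversal $T$ then makes it a CSS-T code, so the generator matrix is triorthogonal (Theorem~14 of \cite{rengaswamy2020optimal}); a counting argument via the inclusion-exclusion identity and the strong eventown bound of \cite{sudakov2018two} yields $\dim C_1^\perp - \dim C_2 \ge 6$; and finally the symplectic action of transversal $H$ on the stabilizer group of $\mathcal{Q}_2$ converts this $X$/$Z$ asymmetry into the bound $g \ge \lceil \tfrac{1}{2}\dim V\rceil \ge 3$. If you want to pursue your ``quantitative Eastin--Knill'' or disjointness route, be aware it is speculative and not what the paper does; the paper's mechanism is combinatorial rigidity of triorthogonal codes, not growth bounds on transversal logical groups under gauge fixing.
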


By Corollary~\ref{cor:subsystem_code_contextual_iff_g_ge_2}, this is equivalent to saying that the underlying subsystem code must have at least two gauge qubits. Since the Eastin--Knill theorem rules out the possibility that $g = 0$ (which would have implied that the code $\mathcal{Q}_1 = \mathcal{Q}_2$ admits a universal transversal gate set), Conjecture~\ref{conj:code-switching-strongly-contextual-partial-closure} would hold if one can show that $g \ne 1$ under the given hypotheses.

If this conjecture is true, it would serve as an evidence that contextuality may be a critical resource for achieving universal quantum computation via code-switching and transversal gates, complementing past results from, e.g., \cite{howard2014contextuality} that contextuality is necessary for achieving universal quantum computation via magic state distillation.

While we do not have a complete resolution of Conjecture~\ref{conj:code-switching-strongly-contextual-partial-closure}, we can prove a version of it under several assumptions commonly satisfied by code-switching constructions achieving universal fault tolerance, including by the basic construction of the doubled color codes family (see Remark~\ref{rem:code_switching_transversal_Tdagger}). For instance, a common universal gate set used to achieve universal quantum computation is $\{H, T, \CNOT\}$, which motivates our assumption that one of the two codes is a stabilizer code admitting logical transversal $\CNOT$ and $T$ gates, while the other is a subsystem code admitting a logical transversal $H$ gate. We also assume the logical gates are implemented using \emph{physical} transversal gates of the same type, i.e., the logical gate $U$ is implemented by applying the $U$ gate to all of the physical qubits. (This last condition can be slightly relaxed; see Remark~\ref{rem:code_switching_relaxation}.)

\begin{theorem} \label{thm:code-switching-contextual}
    Consider a code-switching protocol between two $n$-qubit subsystem codes $\mathcal{Q}_1$ and $\mathcal{Q}_2$ (each with distance greater than $1$),
    such that the underlying $n$-qubit subsystem code $\mathcal{Q}$ encodes one logical qubit, with the $X$- and $Z$-logical operators given by $\overline{X} = \prod_{j=1}^n X_j$ and $\overline{Z} = \prod_{j=1}^n Z_j$. Suppose that:
    \begin{itemize}
        \item $\mathcal{Q}_1$ is a stabilizer code admitting logical $\CNOT$ and $T$ gates via physical transversal $\CNOT$ and $T$ gates respectively, and
        \item $\mathcal{Q}_2$ is a subsystem stabilizer code admitting the logical $H$ gate via the physical transversal $H$ gate.
    \end{itemize}
    Then $\mathcal{Q}$ must have at least $3$ gauge qubits, and the code-switching protocol is therefore strongly contextual in a partial closure.
\end{theorem}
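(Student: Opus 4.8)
The plan is to establish $g\ge 3$ by ruling out $g\in\{0,1,2\}$, combining the Eastin--Knill theorem with the algebraic constraints that transversal $\CNOT$, $T$, and $H$ impose. Write $\mathcal{S},\mathcal{G}$ for the stabilizer and gauge groups of $\mathcal{Q}$ and $\mathcal{S}_1,\mathcal{S}_2$ for those of $\mathcal{Q}_1,\mathcal{Q}_2$, so that $\mathcal{S}=\mathcal{S}_1\cap\mathcal{S}_2$. Since $\mathcal{Q}$ encodes one logical qubit, $g=n-1-\operatorname{rank}\mathcal{S}$; and since $\mathcal{Q}_1$ is a stabilizer code encoding one logical qubit, $\operatorname{rank}\mathcal{S}_1=n-1$. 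Thus $g\ge 3$ is equivalent to $\mathcal{S}_1\cap\mathcal{S}_2$ having codimension at least $3$ in $\mathcal{S}_1$. The case $g=0$ is immediate: then $\mathcal{Q}_1=\mathcal{Q}_2=\mathcal{Q}$ would be a distance-$(>1)$ stabilizer code admitting the universal transversal gate set $\{H,T,\CNOT\}$, contradicting Eastin--Knill~\cite{eastin2009restrictions}.

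Next I would record what the transversal gates force. Transversal $\CNOT$ implementing logical $\CNOT$ forces $\mathcal{Q}_1$ to be CSS (a product operator $s\otimes I$ with $s\in\mathcal{S}_1$ lands in the stabilizer group of the doubled code only if the $X$- and $Z$-parts of $s$ are already stabilizers), so write $\mathcal{S}_1=\mathrm{CSS}(A,B)$ for its $X$- and $Z$-stabilizer spaces $A,B\subseteq\F_2^n$, with $A\subseteq B^\perp$ and $\dim A+\dim B=n-1$. Since $\overline X=X(1^n)$ and $\overline Z=Z(1^n)$ commute with but do not lie in $\mathcal{S}_1$, the all-ones vector $1^n$ is orthogonal to both $A$ and $B$ (so both are even-weight) but lies in neither, and $n$ is odd because $\overline X,\overline Z$ anticommute. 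Expanding the logical basis as $|\overline 0\rangle\propto\sum_{v\in A}|v\rangle$ and $|\overline 1\rangle\propto\sum_{v\in A}|v+1^n\rangle$ and demanding that $P\,T^{\otimes n}$ act as logical $T$ for some Pauli $P$, the standard analysis of transversal $T$ on CSS codes gives $\wt a\equiv 0\pmod 4$ and $\wt(a\wedge a')\equiv 0\pmod 4$ for all $a,a'\in A$ (and $n\equiv\pm 1\pmod 8$); in particular $A$ is self-orthogonal (the $T^\dagger$ case being analogous; see Remark~\ref{rem:code_switching_transversal_Tdagger}). Finally, transversal $H$ implementing logical $H$ forces $\mathcal{S}_2$ to be invariant under the symplectic swap $\sigma\colon X(a)Z(b)\mapsto\pm Z(a)X(b)$.

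Finally I would combine these. On one hand, $\mathcal{S}\subseteq\mathcal{S}_1=\mathrm{CSS}(A,B)$, hence $\mathcal{S}\cap\sigma(\mathcal{S})\subseteq\mathrm{CSS}(A,B)\cap\mathrm{CSS}(B,A)=\mathrm{CSS}(A\cap B,A\cap B)$, which has rank $2\dim(A\cap B)$. On the other hand, $\sigma(\mathcal{S}_2)=\mathcal{S}_2$ implies $\mathcal{S}$ and $\sigma(\mathcal{S})$ both lie in $\mathcal{S}_2$, so $\operatorname{rank}\langle\mathcal{S},\sigma(\mathcal{S})\rangle\le\operatorname{rank}\mathcal{S}_2\le n-1$. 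Since $\operatorname{rank}\langle\mathcal{S},\sigma(\mathcal{S})\rangle=2\operatorname{rank}\mathcal{S}-\operatorname{rank}(\mathcal{S}\cap\sigma(\mathcal{S}))\ge 2\operatorname{rank}\mathcal{S}-2\dim(A\cap B)$, we obtain $\operatorname{rank}\mathcal{S}\le\tfrac12(n-1)+\dim(A\cap B)$. It therefore suffices to prove $\dim(A\cap B)\le\tfrac12(n-7)$, for then $\operatorname{rank}\mathcal{S}\le n-4$ and $g\ge 3$. This last bound --- tight already for the $t=1$ doubled color code --- is the main obstacle: it is where the weight-divisibility of $A$, the placement of $1^n$, the distance-$(>1)$ hypothesis, and the full strength of the $\sigma$-invariance of $\mathcal{S}_2$ must be jointly exploited to pass from the easy bound $g\ge 1$ to the sharp bound $g\ge 3$. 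Granting it, strong contextuality of the protocol in its partial closure follows from Corollary~\ref{cor:subsystem_code_contextual_iff_g_ge_2} together with Definition~\ref{def:code-switching-contextual}.
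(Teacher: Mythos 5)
Your setup matches the paper's: transversal $\CNOT$ forces $\mathcal{Q}_1$ to be CSS, transversal $T$ imposes weight/orthogonality constraints on the $X$-stabilizer space $A$, and transversal $H$ forces $\sigma$-invariance of $\mathcal{S}_2$. Your final combining step is a legitimate (and arguably cleaner) variant of the paper's: the paper works with a complement $C_3$ of $A$ in $B$ and bounds $g \ge \min\bigl(\dim\frac{W_1}{W_1\cap W_2}, \dim\frac{W_2}{W_1\cap W_2}\bigr) \ge \lceil\tfrac12\dim C_3\rceil$, whereas you use $\operatorname{rank}\langle\mathcal{S},\sigma(\mathcal{S})\rangle \le n-1$ together with $\mathcal{S}\cap\sigma(\mathcal{S})\subseteq\mathrm{CSS}(A\cap B, A\cap B)$; both reduce the theorem to the same inequality $\dim A \le \tfrac12(n-7)$ (note $A\cap B = A$ here, since triorthogonality gives $A\subseteq B$, i.e., $C_2\subseteq C_1^\perp$ in the paper's notation).

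However, there is a genuine gap: you explicitly leave that inequality unproven, and it is the heart of the paper's argument, not a routine estimate. The paper proves $\dim C_2 \le (n-7)/2$ as follows. First, triorthogonality of $G_{C_1}$ plus the even weight of all elements of $C_2$ upgrades to a \emph{strong} triorthogonality: the intersection of the supports of any three (not necessarily distinct) elements of $C_2$ has even cardinality — your proposal only records pairwise conditions, which are not enough. Second, the distance-$(>1)$ hypothesis forces $\bigcup_{v\in C_2}\supp(v) = \{1,\dots,n\}$ (otherwise some weight-$1$ $Z_j$ commutes with everything). Third, inclusion–exclusion applied to this union, whose size $n$ is odd, produces a nonempty $A\subseteq C_2$ of minimal size with $\bigl|\bigcap_{v\in A}\supp(v)\bigr|$ odd; strong triorthogonality forces $|A|\ge 4$, so $\{\supp(v): v\in C_2\}$ is a strong $3$-wise eventown that is not a strong $4$-wise eventown. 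Finally, Theorem 3 of Sudakov–Tomon gives $\dim C_2 \le \lfloor n/2\rfloor - (2^{|A|-1}-|A|-1) \le (n-7)/2$. Without this chain (or a substitute for it), the proposal only yields the easy bound $g\ge 1$, which — as you note yourself — does not suffice. Your observation that the bound is tight for $n=15$ is a correct sanity check but is of course not a proof.
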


\begin{proof}
    Before presenting the formal proof, we outline its logical structure, which proceeds in four main steps. First, we use the condition that the physical transversal $\CNOT$ gates realize a logical operator on $\mathcal{Q}_1$ to establish that it must be a CSS code. Second, we leverage the requirement that the CSS code $\mathcal{Q}_1$ has physical transversal $T$ gates realizing the logical $T$ to apply the characterization of such ``CSS-T'' codes proven in~\cite{rengaswamy2020optimal}, which imposes strong constraints on the generator matrix of this CSS code. Third, these constraints allow us to prove a key result: the number of independent pure $Z$-type stabilizers must exceed the number of independent pure $X$-type stabilizers by at least six. Finally, using the symplectic representation, we show how the physical transversal $H$ gates realizing a logical operator on $\mathcal{Q}_2$ translates this imbalance in the stabilizer group of $\mathcal{Q}_1$ into a lower bound on the number of gauge operators, forcing the parent code $\mathcal{Q}$ to have at least three gauge qubits.
    
    We first consider the structure of $\mathcal{Q}_1$. It is well known that a stabilizer code admits a logical $\CNOT$ via physical transversal $\CNOT$ gates if and only if it is a CSS code (i.e., a stabilizer code whose stabilizer group is generated by pure $X$-type and pure $Z$-type generators, with phase $+1$).  Thus, $\mathcal{Q}_1$ is a CSS code.
    
    Having established that $\mathcal{Q}_1$ is a CSS code, we now use the transversal $T$ gate condition to constrain its underlying classical codes. We can apply Theorem 14 in~\cite{rengaswamy2020optimal} to characterize the stabilizer group of $\mathcal{Q}_1$. Suppose that the CSS code $\mathcal{Q}_1$ is obtained from two classical binary codes $C_1, C_2$ with $0 \ne C_2 \subseteq C_1 \subseteq \F_2^n$, with the stabilizer group $\mathcal{S}_1$ generated by the $X$-type Pauli operators $X(f)$ for each $f \in C_2$ and the $Z$-type Pauli operators $Z(f)$ for each $f \in C_1^\perp$. We can write a generator matrix for $C_1$ as $G_{C_1} = \begin{bmatrix} G_{C_1/C_2} \\ G_{C_2} \end{bmatrix}$, where $G_{C_2}$ is the generator matrix of $C_2$ and $G_{C_1/C_2}$ is the generator matrix of $C_1/C_2$ using representatives in $C_1$. Note that $G_{C_1/C_2}$ generates the logical $X$ group, so by assumption this must be a single row containing the all-ones vector $\bar{1} \in \F_2^n$. By Theorem 14 in~\cite{rengaswamy2020optimal}, the matrix $G_{C_1}$ must be \emph{triorthogonal}, in the sense that the intersection of the support of any pair or triples of distinct rows of $G_{C_1}$ is of even cardinality, and for all $a \in C_2$:
    \begin{itemize}
        \item the weight of $a$ is congruent to $0$ modulo $8$, and
        \item the weight of $a + \bar{1}$ (which is $n$ minus the weight of $a$) is congruent to $1$ modulo $8$.
    \end{itemize}
    From the last two conditions, we see that the weight of every row in $G_{C_2}$ must be even (in fact, $0$ modulo $8$) and $n$ must be odd (in fact, $1$ modulo $8$). Moreover, since the support of every row in $C_2$ has even overlap with that of every row in $C_1$ (by triorthogonality), we have $C_2 \subseteq C_1^\perp$, which means $Z(f) \in \mathcal{S}_1$ if $X(f) \in \mathcal{S}_1$, for any $f \in \F_2^n$.

    Our next step is to use this characterization of $C_1$ and $C_2$ to demonstrate an asymmetry between its pure $X$-type and $Z$-type generators. We will prove a lower bound on how many ``excess'' pure $Z$-type operators there are compared to pure $X$-type operators in $\mathcal{S}_1$; specifically, we will show that $\dim C_1^\perp - \dim C_2 \ge 6$. First, observe that $C_2$ satisfies a stronger triorthogonality condition, in the sense that the intersection of the supports of any three (not necessarily distinct) elements in $C_2$ is of even cardinality. To see this, note that this property is exactly saying that $F(x,y,z) \coloneqq \sum_{j=1}^n x_j y_j z_j = 0 \in \F_2$ for every $x=(x_j)_{j=1}^n, y=(y_j)_{j=1}^n, z=(z_j)_{j=1}^n \in C_2$, and this follows from linearity of $F$ in each variable and the fact that it holds for all rows of $G_{C_2}$ (as the $G_{C_2}$ is triorthogonal and all its rows have even weight). Let $\supp(v) \subseteq \{1, \cdots, n\}$ denote the support of $v \in \F_2^n$. The strong triorthogonality condition ensures that $\supp(v)$, $\supp(v) \cap \supp(v')$ and $\supp(v) \cap \supp(v') \cap \supp(v'')$ are all of even cardinality for distinct $v,v',v'' \in C_2$ (noting that the condition works for triples of elements that may not be pairwise distinct). Moreover, by the inclusion-exclusion principle,
    \begin{equation} \label{eq:inclusion_exclusion_supports}
        \left|\bigcup_{v \in C_2} \supp(v)\right| =  \sum_{\emptyset \neq A \subseteq C_2} (-1)^{|A|-1} \left|\bigcap_{v \in A} \supp(v)\right|.
    \end{equation}
    (In other words, the size of the union of the supports of elements in $C_2$ is the sum of the sizes of each of the supports, minus the sum of the sizes of the intersection of two distinct supports, plus the sum of the sizes of the intersection of three distinct supports, and so on.) The union of the supports of elements in $C_2$ must be all of $\{1, \cdots, n\}$; otherwise, some $j \in \{1, \cdots, n\}$ must not be in the support of any element in $C_2$, i.e., no $X$-operator has any support on qubit $j$, and so the weight-$1$ operator $Z_j$ will commute with every element in the stabilizer group, making $\mathcal{Q}_1$ a distance-$1$ code. Thus, the left-hand side of Equation~\eqref{eq:inclusion_exclusion_supports} is equal to $|C_2| = n$, which is odd. Therefore, at least one of the summands on the right-hand side is odd, i.e., there is some nonempty $A \subseteq C_2$ such that $\left|\bigcap_{v \in A} \supp(v)\right|$ is odd. Choose such an $A$ with the smallest cardinality, so the intersection of the supports of any $m$ distinct elements in $C_2$ is even whenever $m \le |A|-1$. By the strong triorthogonality condition, $|A| \ge 4$.
    In the language of~\cite{sudakov2018two}, the set $\{\supp(v) : v \in C_2\}$ forms a strong $(|A|-1)$-wise eventown on $\{1, \cdots, n\}$ but not a strong $|A|$-wise eventown\footnote{A collection of sets $\{A_1,\dots,A_\ell\}$ is a strong $k$-wise eventown if $|\bigcap_{j\in S} A_j| = 0 \bmod 2$ for any $S\subseteq\{1,\dots,\ell\}$ with $|S|=m$ for all $1\leq m\leq k$. So, $\{\supp(v) : v \in C_2\}$ is a strong $(|A|-1)$-wise eventown since $|\bigcap_{v\in S} \supp(v)| = 0 \bmod 2$ for all $S\subseteq\{1,\dots,n\}$ with $1\leq|S|\leq|A|-1$, by the minimality of $|A|$. However it is not a $|A|$-wise eventown since $|\bigcap_{v\in A} \supp(v)| = 1 \bmod 2$, by the definition of $A$.}. Theorem 3 of~\cite{sudakov2018two} then says that $|C_2| \le 2^{\floor{n/2} - (2^{|A|-1} - |A| - 1)}$, i.e., $\dim C_2 \le \floor{n/2} - (2^{|A|-1} - |A| - 1)$. Since $|A| \ge 4$ and the expression $2^{|A|-1} - |A| - 1$ is increasing for $|A| \ge 4$, then $\dim C_2 \le \floor{n/2} - (2^{4-1} - 4 - 1) = (n-7)/2$ (noting $n$ is odd). By nondegeneracy of the inner product on $\F_2^n$,
    \begin{align*}
        \dim C_1^\perp - \dim C_2 &= (n - \dim C_1) - \dim C_2 \\
        &= n - (\dim C_2 + 1) - \dim C_2 \\
        &= n - 1 - 2 \cdot \dim C_2 \\
        &\ge n - 1 - (n-7) \ge 6,
    \end{align*}
    as we wanted.

    Having determined the structure of the stabilizer group $\mathcal{S}_1$ of $\mathcal{Q}_1$, we can now turn our attention to the stabilizer group $\mathcal{S}_2$ of $\mathcal{Q}_2$ and the intersection $\mathcal{S}_1 \cap \mathcal{S}_2$. For this, we shall work with the usual symplectic representation of $G_n$: to each Pauli operator $\left(\Pi_{j=1}^n X_j^{x_j}\right) \left(\Pi_{j=1}^n Z_j^{z_j}\right)$ in $G_n$ (modulo its phase), we associate the vector $(x | z) \in \F_2^{2n}$ for $x = (x_j)_{j=1}^n, z = (z_j)_{j=1}^n \in \F_2^n$. Let $W_1, W_2 \subseteq \F_2^{2n}$ be the subspaces for $\mathcal{S}_1$ and $\mathcal{S}_2$. The intersection $\mathcal{S}_1 \cap \mathcal{S}_2$ is then represented by $W_1 \cap W_2$, so the number of gauge operators fixed when switching to the code $\mathcal{Q}_j$ is in fact $\dim (W_j/(W_1 \cap W_2))$. We will now calculate this for each of $\mathcal{Q}_1$ and $\mathcal{Q}_2$.
    
    Since $C_2 \subseteq C_1^\perp$, we can find a subspace $C_3 \subseteq C_1^\perp$ such that $C_1^\perp = C_2 \oplus C_3$ as a direct sum of vector spaces, so $\dim C_3 = \dim C_1^\perp - \dim C_2 \ge 6$. Define $V = \{(0|z) : z \in C_3\}$ to be the subspace of $W_1$ consisting of all pure $Z$-type operators whose supports are represented by the vectors in $C_3$, so $\dim V = \dim C_3 \ge 6$. 
    Composing\footnote{The inclusion map $V \hookrightarrow W_1$ maps $v\in V$ to $v\in W_1$. The quotient map $W_1 \twoheadrightarrow W_1/(W_1 \cap W_2)$ maps $w\in W_1$ to coset $w+(W_1+W_2)$.} the inclusion $V \hookrightarrow W_1$ with the quotient $W_1 \twoheadrightarrow W_1/(W_1 \cap W_2)$ gives the linear map $V \to W_1/(W_1 \cap W_2)$ with kernel $V \cap W_2$, so
    \begin{equation*} \label{eq:code_switching_general_W1_bound}
        \dim \frac{W_1}{W_1 \cap W_2} \ge \dim \frac{V}{V \cap W_2} = \dim V - \dim (V \cap W_2).
    \end{equation*}
    Let $\eta \colon \F_2^{2n} \to \F_2^{2n}$ represent the physical transversal $H$ gate, so $\eta((x|z)) = (z|x)$ for all $x,z \in \F_2^n$. Since the physical transversal $H$ gate is a logical gate for $\mathcal{Q}_2$, then it must preserve $\mathcal{S}_2$, i.e., $\eta(W_2) \subseteq W_2$. Therefore, $\eta(V \cap W_2) \subseteq \eta(V) \cap W_2$. But $\eta(V) = \{(z|0) : z \in C_3\}$ must intersect $W_1$ only trivially, since all vectors in $W_1$ of the form $(z|0)$ are such that $z \in C_2$, but $C_3$ is chosen to intersect $C_2$ only trivially. Hence $\eta(V \cap W_2)$ must intersect $W_1$ only trivially. Composing the inclusion $\eta(V \cap W_2) \hookrightarrow W_2$ with the quotient $W_2 \twoheadrightarrow W_2/(W_1 \cap W_2)$ gives the linear map $\eta(V \cap W_2) \to W_2/(W_1 \cap W_2)$ with kernel $\eta(V \cap W_2) \cap W_1 = \{0\}$, so
    \begin{equation*} \label{eq:code_switching_general_W2_bound}
        \dim \frac{W_2}{W_1 \cap W_2} \ge \dim \eta(V \cap W_2) = \dim (V \cap W_2),
    \end{equation*}
    since $\eta$ is an automorphism.

    The number of gauge qubits $g$ is greater than or equal to the number of gauge operators fixed when switching to either $\mathcal{Q}_1$ or $\mathcal{Q}_2$, so a lower bound for $g$ is
    \[\min\left(\dim \frac{W_1}{W_1 \cap W_2}, \dim \frac{W_2}{W_1 \cap W_2} \right).\]
    We showed that this is at least
    \[\min(\dim V - \dim(V\cap W_2), \dim (V \cap W_2)),\]
    which must be at least $\ceil{\tfrac{1}{2} \dim V}$. Since $\dim V \ge 6$, we get $g \ge 3$.

    Finally, since $\mathcal{Q}$ has at least $3$ gauge qubits, then the code-switching protocol is indeed strongly contextual in a partial closure by Corollary \ref{cor:subsystem_code_contextual_iff_g_ge_2}.
\end{proof}

This theorem establishes a restricted form of Conjecture~\ref{conj:code-switching-strongly-contextual-partial-closure}: noncontextuality of a subsystem code strictly limits the possibility of assembling a universal transversal gate set through code-switching. The general case remains open, but Theorem~\ref{thm:code-switching-contextual} demonstrates that Conjecture~\ref{conj:code-switching-strongly-contextual-partial-closure} holds in a nontrivial and widely studied subclass of code-switching protocols (see Remark~\ref{rem:code_switching_transversal_Tdagger}).

\begin{remark} \label{rem:code_switching_relaxation}
    We can relax the conditions of Theorem~\ref{thm:code-switching-contextual} to allow for Pauli corrections on the physical transversal gates. This means the logical $T$ gate may be implemented as $P (\Pi_{j=1}^n T_j) P^{\dagger}$ for some $n$-qubit Pauli $P$ (where $T_j$ is the $T$ operator applied to qubit $j$), and similarly for the logical $H$ gate, while the logical $\CNOT$ gate (between two logical qubits) may be implemented as $(P \otimes P) (\Pi_{j=1}^n \CNOT_j) (P \otimes P)^{\dagger}$ for some $n$-qubit Pauli $P$ (where $\CNOT_j$ is the $\CNOT$ operator applied between qubit $j$ of the first logical qubit and qubit $j$ of the second logical qubit). To see why this is the case, we examine how the proof changes when we allow for Pauli corrections to the physical transversal gates. We first used the fact that the physical transversal $\CNOT$ realizes a logical gate on $\mathcal{Q}_1$ to establish that $\mathcal{Q}_1$ is a CSS code. If instead we only know that the physical transversal $\CNOT$ realizes a logical gate on $\mathcal{Q}_1$ up to some Pauli corrections, then we know that it is a CSS code up to some Pauli corrections, i.e., the stabilizer group of $\mathcal{Q}_1$ may need to be conjugated by some $n$-qubit Pauli $P$ for $\mathcal{Q}_1$ to be a CSS code. Note, however, that such a conjugation only flips the signs of some of the stabilizers (and stabilizer generators). The next place in the proof that would change is the use of Theorem 14 in~\cite{rengaswamy2020optimal}. Instead of applying the theorem to $\mathcal{Q}_1$, we can instead apply it to an appropriately Pauli-corrected version of $\mathcal{Q}_1$ such that it admits the logical $T$ gate via physical transversal $T$ gates. Note that this will be some Pauli-corrected version of a CSS code, which is still a CSS code by the convention in~\cite{rengaswamy2020optimal}. Thus, all the results regarding the structure of $\mathcal{Q}_1$ continue to hold, with the exception that some of the stabilizers in $\mathcal{S}_1$ might have their sign flipped. Finally, we used the fact that the physical transversal $H$ is a logical gate on $\mathcal{Q}_2$ to deduce that $\eta(W_2) \subseteq W_2$. This is now only true for a Pauli-corrected version of $\mathcal{Q}_2$, but it will still be true that $\eta(W_2) \subseteq W_2$, because we have dropped the phases in the symplectic representation of the stabilizers.    
\end{remark}

\begin{remark} \label{rem:code_switching_transversal_Tdagger}
    The weakening of the hypothesis of Theorem~\ref{thm:code-switching-contextual} as highlighted in Remark~\ref{rem:code_switching_relaxation} means we can allow the logical $T$ gate for $\mathcal{Q}_1$ to be implemented using physical transversal $T$ or $T^{\dagger}$ gates, which can differ between qubits\footnote{This notion of transversality is often referred to as ``weak transversality'' where a logical $U$ gate on an $n$-qubit code is given by $\Bar{U} = \bigotimes_{i=1}^n V_i$ where the physical unitaries $V_i,V_j$ for $i\neq j$ need not be the same (nor equal to $U$). This is a relaxation of transversality (which is sometimes referred to as ``strict transverality'') where $\Bar{U} = V^{\otimes n}$.} (noting that $XTX^{\dagger} = e^{i\pi/4} T^{\dagger}$). In particular, this will cover the switching between the extended $\llbracket 7,1,3 \rrbracket$ and the $\llbracket 15,1,3 \rrbracket$, where the logical $T$ gate for the $\llbracket 15,1,3 \rrbracket$ code is realized by a physical transversal $T^{\dagger}$ gate. This also covers the basic construction of the family of code-switching protocols in~\cite{bravyi2015doubled}, because the logical $T$ gate for the $T$ code in that case is realized by a physical $T$ or $T^{\dagger}$ gate on \emph{every} qubit (see Lemma 1 and Lemma 4 of~\cite{bravyi2015doubled}).
\end{remark}

In summary, this section has demonstrated through several widely studied examples and Theorem~\ref{thm:code-switching-contextual} that many commonly encountered fault-tolerant code-switching protocols enabling universal transversal gate sets are strongly contextual in a partial closure. Our results show that this is not a coincidence but a consequence of the underlying structure required to enable universal fault tolerance. Specifically, the need to accommodate complementary gate sets forces the parent subsystem code to have a sufficient number of gauge qubits, which is precisely the condition for strong contextuality established in Corollary~\ref{cor:subsystem_code_contextual_iff_g_ge_2}. This provides compelling evidence for Conjecture~\ref{conj:code-switching-strongly-contextual-partial-closure} and solidifies the role of contextuality as an essential structural feature for universal fault-tolerant quantum computation via code-switching protocols.

\section{Discussion and Open Problems} \label{sec:discussion_open_problems}
In this work, we forged a direct link between quantum error correction and quantum contextuality, with our contributions spanning three key aspects: fundamental, mathematical, and practical. Fundamentally, we developed a rigorous framework to characterize contextuality in QEC codes, proving that a subsystem stabilizer code is strongly contextual in its partial closure if and only if it has at least two gauge qubits (Corollary~\ref{cor:subsystem_code_contextual_iff_g_ge_2}). This finding establishes a clear physical criterion for when contextuality manifests. Mathematically, we proved the equivalence of several distinct definitions of contextuality---including the sheaf-theoretic and tree-based approaches---for the partial closure of Pauli measurement sets (Corollary~\ref{cor:strong-contextuality-equivalences}). This unification resolves a recent conjecture by Kim and Abramsky~\cite{kim2023stateindependentallversusnothingarguments}. Practically, we established that under mild, common assumptions, achieving universal fault-tolerant computation via code-switching requires the protocol to be strongly contextual in its partial closure (Theorem~\ref{thm:code-switching-contextual}). Ultimately, these findings synthesize our fundamental and mathematical insights into the practical conclusion that the structural complexity demanded of code-switching protocols that achieve universal computation is precisely what makes them strongly contextual under partial closure. An immediate design implication is that contextuality acts as a litmus test: if a code is noncontextual, then no transversal-universality scheme based on it can succeed. Conversely, the presence of contextuality signals the potential for universal fault tolerance. This positions contextuality as a complementary resource to entanglement and magic for realizing universality through code-switching protocols and transversal gates.

Our research lays a strong foundation for further exploration. In this study, we have already established a classification of quantum error-correcting codes as either contextual or noncontextual, providing key insights into their structure. While this binary classification captures essential distinctions, a natural next step is to quantify the degree of contextuality within a code (before taking partial closure) to refine our understanding further. A straightforward way to achieve this is through contextual separation~\cite{kirby2019contextuality}. Another promising approach might involve linear programming techniques~\cite{abramsky2016quantifying,boyd2004convex}, which have been successfully employed in related settings to quantify contextuality~\cite{abramsky2016quantifying}.  

While we have identified contextuality as a property of quantum error-correcting codes, its practical implications beyond its role in universal fault-tolerant computation remain unexplored. Future work should investigate how contextuality affects the performance, resource requirements, or error thresholds of a code. For instance, does greater contextuality enhance a code's ability to correct specific types of errors, or does it impose limitations on code construction? Understanding these operational consequences could guide the design of new codes optimized for specific quantum computational tasks.

In this work, using a sheaf-theoretic framework, we define the contextuality of quantum error-correcting codes. Recently, sheaf-theoretic ideas have been applied to quantum error correction~\cite{panteleev2024maximally,lin2024transversal,dinur2024expansion,kalachev2025maximally}. Exploring possible connections between our work and these developments~\cite{panteleev2024maximally,lin2024transversal,dinur2024expansion,kalachev2025maximally} could provide deeper insights into the role of contextuality in quantum error correction.

Our framework focused on static quantum error-correcting codes, without considering the role of measurement schedules or adaptivity. However, in dynamical quantum error correction~\cite{hastings2021dynamically,davydova2023floquet,fahimniya2023hyperbolic,gidney2021fault,haah2022boundaries,vuillot2021planar,paetznick2023performance,gidney2022benchmarking,hilaire2024enhanced,higgott2023constructions,aasen2023measurement,zhang2023x,dua2024engineering,alam2024dynamicallogicalqubitsbaconshor,bombin2023unifying,tanggara2024strategic,townsend2023floquetifying,ellison2023floquet,sullivan2023floquet,tanggara2024simple}, such as Floquet codes, the scheduling of measurements is critical to the code's functionality. Extending the notion of contextuality to incorporate these temporal aspects could provide a richer understanding of how contextuality manifests in time-dependent quantum codes.

Finally, given the strong contextuality in a partial closure of the code-switching protocols exhibited in Section \ref{sec:contextuality_code_switching}, one may ask whether every fault-tolerantly implementable code-switching protocol whose constituent codes jointly admit a universal transversal gate set---such as the one described in \cite[Section VI]{butt2024fault} and highlighted in Remark \ref{rem:10qubit_code_switching}---must necessarily be strongly contextual in a partial closure when viewed as a subsystem code. We formulate this as Conjecture~\ref{conj:code-switching-strongly-contextual-partial-closure}, which may be viewed as a contextual extension of the Eastin--Knill theorem~\cite{eastin2009restrictions}: while the latter theorem restricts the power of a \emph{single} transversal gate set, our conjecture posits that overcoming this limit via \emph{switching} between gate sets mandates the introduction of a fundamental nonclassical feature, namely strong contextuality. By Corollary \ref{cor:subsystem_code_contextual_iff_g_ge_2}, this is equivalent to asking if the subsystem codes associated to such code-switching protocols must have at least two gauge qubits. Our Theorem~\ref{thm:code-switching-contextual} provides a partial proof of this conjecture under assumptions satisfied by many widely studied code-switching protocols. Resolving this conjecture in its full generality would confirm the role of contextuality as a fundamental requirement for universal fault-tolerant quantum computation via code-switching protocols, thereby extending its established role in magic state distillation~\cite{howard2014contextuality} to a more general principle.

\begin{acknowledgments}
This research is supported by Q.InC Strategic Research and Translational Thrust. AT is supported by CQT PhD scholarship and Google PhD fellowship program. We thank Atul Singh Arora, Tobias Haug, Zlatko Minev, Varun Narsimhachar, and Yunlong Xiao for interesting discussions.
\end{acknowledgments}

\bibliography{references}

\begin{thebibliography}{100}%
\makeatletter
\providecommand \@ifxundefined [1]{%
 \@ifx{#1\undefined}
}%
\providecommand \@ifnum [1]{%
 \ifnum #1\expandafter \@firstoftwo
 \else \expandafter \@secondoftwo
 \fi
}%
\providecommand \@ifx [1]{%
 \ifx #1\expandafter \@firstoftwo
 \else \expandafter \@secondoftwo
 \fi
}%
\providecommand \natexlab [1]{#1}%
\providecommand \enquote  [1]{``#1''}%
\providecommand \bibnamefont  [1]{#1}%
\providecommand \bibfnamefont [1]{#1}%
\providecommand \citenamefont [1]{#1}%
\providecommand \href@noop [0]{\@secondoftwo}%
\providecommand \href [0]{\begingroup \@sanitize@url \@href}%
\providecommand \@href[1]{\@@startlink{#1}\@@href}%
\providecommand \@@href[1]{\endgroup#1\@@endlink}%
\providecommand \@sanitize@url [0]{\catcode `\\12\catcode `\$12\catcode `\&12\catcode `\#12\catcode `\^12\catcode `\_12\catcode `\%12\relax}%
\providecommand \@@startlink[1]{}%
\providecommand \@@endlink[0]{}%
\providecommand \url  [0]{\begingroup\@sanitize@url \@url }%
\providecommand \@url [1]{\endgroup\@href {#1}{\urlprefix }}%
\providecommand \urlprefix  [0]{URL }%
\providecommand \Eprint [0]{\href }%
\providecommand \doibase [0]{https://doi.org/}%
\providecommand \selectlanguage [0]{\@gobble}%
\providecommand \bibinfo  [0]{\@secondoftwo}%
\providecommand \bibfield  [0]{\@secondoftwo}%
\providecommand \translation [1]{[#1]}%
\providecommand \BibitemOpen [0]{}%
\providecommand \bibitemStop [0]{}%
\providecommand \bibitemNoStop [0]{.\EOS\space}%
\providecommand \EOS [0]{\spacefactor3000\relax}%
\providecommand \BibitemShut  [1]{\csname bibitem#1\endcsname}%
\let\auto@bib@innerbib\@empty
\bibitem [{\citenamefont {Shor}(1995)}]{shor1995scheme}%
  \BibitemOpen
  \bibfield  {author} {\bibinfo {author} {\bibfnamefont {P.~W.}\ \bibnamefont {Shor}},\ }\bibfield  {title} {\bibinfo {title} {Scheme for reducing decoherence in quantum computer memory},\ }\href {https://doi.org/10.1103/physreva.52.r2493} {\bibfield  {journal} {\bibinfo  {journal} {Physical Review A}\ }\textbf {\bibinfo {volume} {52}},\ \bibinfo {pages} {R2493} (\bibinfo {year} {1995})}\BibitemShut {NoStop}%
\bibitem [{\citenamefont {Gottesman}(1997)}]{gottesman1997stabilizer}%
  \BibitemOpen
  \bibfield  {author} {\bibinfo {author} {\bibfnamefont {D.}~\bibnamefont {Gottesman}},\ }\emph {\bibinfo {title} {Stabilizer Codes and Quantum Error Correction}},\ \href {https://thesis.library.caltech.edu/2900/2/THESIS.pdf} {Ph.D. thesis},\ \bibinfo  {school} {California Institute of Technology} (\bibinfo {year} {1997})\BibitemShut {NoStop}%
\bibitem [{\citenamefont {Dennis}\ \emph {et~al.}(2002)\citenamefont {Dennis}, \citenamefont {Kitaev}, \citenamefont {Landahl},\ and\ \citenamefont {Preskill}}]{dennis2002topological}%
  \BibitemOpen
  \bibfield  {author} {\bibinfo {author} {\bibfnamefont {E.}~\bibnamefont {Dennis}}, \bibinfo {author} {\bibfnamefont {A.}~\bibnamefont {Kitaev}}, \bibinfo {author} {\bibfnamefont {A.}~\bibnamefont {Landahl}},\ and\ \bibinfo {author} {\bibfnamefont {J.}~\bibnamefont {Preskill}},\ }\bibfield  {title} {\bibinfo {title} {Topological quantum memory},\ }\href {https://doi.org/10.1063/1.1499754} {\bibfield  {journal} {\bibinfo  {journal} {Journal of Mathematical Physics}\ }\textbf {\bibinfo {volume} {43}},\ \bibinfo {pages} {4452} (\bibinfo {year} {2002})}\BibitemShut {NoStop}%
\bibitem [{\citenamefont {Kitaev}(2003)}]{kitaev2003fault}%
  \BibitemOpen
  \bibfield  {author} {\bibinfo {author} {\bibfnamefont {A.~Y.}\ \bibnamefont {Kitaev}},\ }\bibfield  {title} {\bibinfo {title} {Fault-tolerant quantum computation by anyons},\ }\href {https://doi.org/10.1016/s0003-4916(02)00018-0} {\bibfield  {journal} {\bibinfo  {journal} {Annals of Physics}\ }\textbf {\bibinfo {volume} {303}},\ \bibinfo {pages} {2} (\bibinfo {year} {2003})}\BibitemShut {NoStop}%
\bibitem [{\citenamefont {Lidar}\ and\ \citenamefont {Brun}(2013)}]{lidar2013quantum}%
  \BibitemOpen
  \bibinfo {editor} {\bibfnamefont {P.~A.}\ \bibnamefont {Lidar}}\ and\ \bibinfo {editor} {\bibfnamefont {T.~A.}\ \bibnamefont {Brun}},\ eds.,\ \href {https://doi.org/10.1017/cbo9781139034807} {\emph {\bibinfo {title} {Quantum Error Correction}}}\ (\bibinfo  {publisher} {Cambridge University Press},\ \bibinfo {year} {2013})\BibitemShut {NoStop}%
\bibitem [{\citenamefont {Terhal}(2015)}]{terhal2015quantum}%
  \BibitemOpen
  \bibfield  {author} {\bibinfo {author} {\bibfnamefont {B.~M.}\ \bibnamefont {Terhal}},\ }\bibfield  {title} {\bibinfo {title} {Quantum error correction for quantum memories},\ }\href {https://doi.org/10.1103/revmodphys.87.307} {\bibfield  {journal} {\bibinfo  {journal} {Reviews of Modern Physics}\ }\textbf {\bibinfo {volume} {87}},\ \bibinfo {pages} {307} (\bibinfo {year} {2015})}\BibitemShut {NoStop}%
\bibitem [{\citenamefont {Bennett}\ \emph {et~al.}(1996)\citenamefont {Bennett}, \citenamefont {DiVincenzo}, \citenamefont {Smolin},\ and\ \citenamefont {Wootters}}]{bennett1996mixed}%
  \BibitemOpen
  \bibfield  {author} {\bibinfo {author} {\bibfnamefont {C.~H.}\ \bibnamefont {Bennett}}, \bibinfo {author} {\bibfnamefont {D.~P.}\ \bibnamefont {DiVincenzo}}, \bibinfo {author} {\bibfnamefont {J.~A.}\ \bibnamefont {Smolin}},\ and\ \bibinfo {author} {\bibfnamefont {W.~K.}\ \bibnamefont {Wootters}},\ }\bibfield  {title} {\bibinfo {title} {Mixed-state entanglement and quantum error correction},\ }\href {https://doi.org/10.1103/physreva.54.3824} {\bibfield  {journal} {\bibinfo  {journal} {Physical Review A}\ }\textbf {\bibinfo {volume} {54}},\ \bibinfo {pages} {3824} (\bibinfo {year} {1996})}\BibitemShut {NoStop}%
\bibitem [{\citenamefont {Brun}\ \emph {et~al.}(2006)\citenamefont {Brun}, \citenamefont {Devetak},\ and\ \citenamefont {Hsieh}}]{brun2006correcting}%
  \BibitemOpen
  \bibfield  {author} {\bibinfo {author} {\bibfnamefont {T.}~\bibnamefont {Brun}}, \bibinfo {author} {\bibfnamefont {I.}~\bibnamefont {Devetak}},\ and\ \bibinfo {author} {\bibfnamefont {M.-H.}\ \bibnamefont {Hsieh}},\ }\bibfield  {title} {\bibinfo {title} {Correcting quantum errors with entanglement},\ }\href {https://doi.org/10.1126/science.1131563} {\bibfield  {journal} {\bibinfo  {journal} {Science}\ }\textbf {\bibinfo {volume} {314}},\ \bibinfo {pages} {436} (\bibinfo {year} {2006})}\BibitemShut {NoStop}%
\bibitem [{\citenamefont {D{\"u}r}\ and\ \citenamefont {Briegel}(2007)}]{dur2007entanglement}%
  \BibitemOpen
  \bibfield  {author} {\bibinfo {author} {\bibfnamefont {W.}~\bibnamefont {D{\"u}r}}\ and\ \bibinfo {author} {\bibfnamefont {H.~J.}\ \bibnamefont {Briegel}},\ }\bibfield  {title} {\bibinfo {title} {Entanglement purification and quantum error correction},\ }\href {https://doi.org/10.1088/0034-4885/70/8/r03} {\bibfield  {journal} {\bibinfo  {journal} {Reports on Progress in Physics}\ }\textbf {\bibinfo {volume} {70}},\ \bibinfo {pages} {1381} (\bibinfo {year} {2007})}\BibitemShut {NoStop}%
\bibitem [{\citenamefont {Bravyi}\ \emph {et~al.}(2025)\citenamefont {Bravyi}, \citenamefont {Lee}, \citenamefont {Li},\ and\ \citenamefont {Yoshida}}]{bravyi2024much}%
  \BibitemOpen
  \bibfield  {author} {\bibinfo {author} {\bibfnamefont {S.}~\bibnamefont {Bravyi}}, \bibinfo {author} {\bibfnamefont {D.}~\bibnamefont {Lee}}, \bibinfo {author} {\bibfnamefont {Z.}~\bibnamefont {Li}},\ and\ \bibinfo {author} {\bibfnamefont {B.}~\bibnamefont {Yoshida}},\ }\bibfield  {title} {\bibinfo {title} {How much entanglement is needed for quantum error correction?},\ }\href {https://doi.org/10.1103/physrevlett.134.210602} {\bibfield  {journal} {\bibinfo  {journal} {Physical Review Letters}\ }\textbf {\bibinfo {volume} {134}},\ \bibinfo {pages} {210602} (\bibinfo {year} {2025})}\BibitemShut {NoStop}%
\bibitem [{\citenamefont {Anshu}\ \emph {et~al.}(2023)\citenamefont {Anshu}, \citenamefont {Breuckmann},\ and\ \citenamefont {Nirkhe}}]{anshu2023nlts}%
  \BibitemOpen
  \bibfield  {author} {\bibinfo {author} {\bibfnamefont {A.}~\bibnamefont {Anshu}}, \bibinfo {author} {\bibfnamefont {N.~P.}\ \bibnamefont {Breuckmann}},\ and\ \bibinfo {author} {\bibfnamefont {C.}~\bibnamefont {Nirkhe}},\ }\bibfield  {title} {\bibinfo {title} {{NLTS} {H}amiltonians from good quantum codes},\ }in\ \href {https://doi.org/10.1145/3564246.3585114} {\emph {\bibinfo {booktitle} {Proceedings of the 55th Annual ACM Symposium on Theory of Computing}}},\ \bibinfo {series and number} {STOC ’23}\ (\bibinfo  {publisher} {ACM},\ \bibinfo {year} {2023})\ pp.\ \bibinfo {pages} {1090--1096}\BibitemShut {NoStop}%
\bibitem [{\citenamefont {Bravyi}\ and\ \citenamefont {Kitaev}(2005)}]{bravyi2005universal}%
  \BibitemOpen
  \bibfield  {author} {\bibinfo {author} {\bibfnamefont {S.}~\bibnamefont {Bravyi}}\ and\ \bibinfo {author} {\bibfnamefont {A.}~\bibnamefont {Kitaev}},\ }\bibfield  {title} {\bibinfo {title} {Universal quantum computation with ideal {C}lifford gates and noisy ancillas},\ }\href {https://doi.org/10.1103/physreva.71.022316} {\bibfield  {journal} {\bibinfo  {journal} {Physical Review A}\ }\textbf {\bibinfo {volume} {71}},\ \bibinfo {pages} {022316} (\bibinfo {year} {2005})}\BibitemShut {NoStop}%
\bibitem [{\citenamefont {Veitch}\ \emph {et~al.}(2014)\citenamefont {Veitch}, \citenamefont {Hamed~Mousavian}, \citenamefont {Gottesman},\ and\ \citenamefont {Emerson}}]{veitch2014resource}%
  \BibitemOpen
  \bibfield  {author} {\bibinfo {author} {\bibfnamefont {V.}~\bibnamefont {Veitch}}, \bibinfo {author} {\bibfnamefont {S.~A.}\ \bibnamefont {Hamed~Mousavian}}, \bibinfo {author} {\bibfnamefont {D.}~\bibnamefont {Gottesman}},\ and\ \bibinfo {author} {\bibfnamefont {J.}~\bibnamefont {Emerson}},\ }\bibfield  {title} {\bibinfo {title} {The resource theory of stabilizer quantum computation},\ }\href {https://doi.org/10.1088/1367-2630/16/1/013009} {\bibfield  {journal} {\bibinfo  {journal} {New Journal of Physics}\ }\textbf {\bibinfo {volume} {16}},\ \bibinfo {pages} {013009} (\bibinfo {year} {2014})}\BibitemShut {NoStop}%
\bibitem [{\citenamefont {Gidney}\ \emph {et~al.}(2024)\citenamefont {Gidney}, \citenamefont {Shutty},\ and\ \citenamefont {Jones}}]{gidney2024magic}%
  \BibitemOpen
  \bibfield  {author} {\bibinfo {author} {\bibfnamefont {C.}~\bibnamefont {Gidney}}, \bibinfo {author} {\bibfnamefont {N.}~\bibnamefont {Shutty}},\ and\ \bibinfo {author} {\bibfnamefont {C.}~\bibnamefont {Jones}},\ }\href@noop {} {\bibinfo {title} {Magic state cultivation: growing {T} states as cheap as {CNOT} gates}} (\bibinfo {year} {2024}),\ \Eprint {https://arxiv.org/abs/2409.17595} {arXiv:2409.17595 [quant-ph]} \BibitemShut {NoStop}%
\bibitem [{\citenamefont {Eastin}\ and\ \citenamefont {Knill}(2009)}]{eastin2009restrictions}%
  \BibitemOpen
  \bibfield  {author} {\bibinfo {author} {\bibfnamefont {B.}~\bibnamefont {Eastin}}\ and\ \bibinfo {author} {\bibfnamefont {E.}~\bibnamefont {Knill}},\ }\bibfield  {title} {\bibinfo {title} {Restrictions on transversal encoded quantum gate sets},\ }\href {https://doi.org/10.1103/physrevlett.102.110502} {\bibfield  {journal} {\bibinfo  {journal} {Physical Review Letters}\ }\textbf {\bibinfo {volume} {102}},\ \bibinfo {pages} {110502} (\bibinfo {year} {2009})}\BibitemShut {NoStop}%
\bibitem [{\citenamefont {Kochen}\ and\ \citenamefont {Specker}(1967)}]{Koc}%
  \BibitemOpen
  \bibfield  {author} {\bibinfo {author} {\bibfnamefont {S.}~\bibnamefont {Kochen}}\ and\ \bibinfo {author} {\bibfnamefont {E.~P.}\ \bibnamefont {Specker}},\ }\bibfield  {title} {\bibinfo {title} {The problem of hidden variables in quantum mechanics},\ }\href {http://www.jstor.org/stable/24902153} {\bibfield  {journal} {\bibinfo  {journal} {Journal of Mathematics and Mechanics}\ }\textbf {\bibinfo {volume} {17}},\ \bibinfo {pages} {59} (\bibinfo {year} {1967})}\BibitemShut {NoStop}%
\bibitem [{\citenamefont {Specker}(1960)}]{Specker1960-bg}%
  \BibitemOpen
  \bibfield  {author} {\bibinfo {author} {\bibfnamefont {E.}~\bibnamefont {Specker}},\ }\bibfield  {title} {\bibinfo {title} {Die {L}ogik nicht gleichzeitig entscheidbarer {A}ussagen},\ }\href {http://www.jstor.org/stable/42964323} {\bibfield  {journal} {\bibinfo  {journal} {Dialectica}\ }\textbf {\bibinfo {volume} {14}},\ \bibinfo {pages} {239} (\bibinfo {year} {1960})}\BibitemShut {NoStop}%
\bibitem [{\citenamefont {Budroni}\ \emph {et~al.}(2022)\citenamefont {Budroni}, \citenamefont {Cabello}, \citenamefont {G{\"u}hne}, \citenamefont {Kleinmann},\ and\ \citenamefont {Larsson}}]{budroni2022kochen}%
  \BibitemOpen
  \bibfield  {author} {\bibinfo {author} {\bibfnamefont {C.}~\bibnamefont {Budroni}}, \bibinfo {author} {\bibfnamefont {A.}~\bibnamefont {Cabello}}, \bibinfo {author} {\bibfnamefont {O.}~\bibnamefont {G{\"u}hne}}, \bibinfo {author} {\bibfnamefont {M.}~\bibnamefont {Kleinmann}},\ and\ \bibinfo {author} {\bibfnamefont {J.-{\AA}.}\ \bibnamefont {Larsson}},\ }\bibfield  {title} {\bibinfo {title} {Kochen-{S}pecker contextuality},\ }\href {https://doi.org/10.1103/revmodphys.94.045007} {\bibfield  {journal} {\bibinfo  {journal} {Reviews of Modern Physics}\ }\textbf {\bibinfo {volume} {94}},\ \bibinfo {pages} {045007} (\bibinfo {year} {2022})}\BibitemShut {NoStop}%
\bibitem [{\citenamefont {Klyachko}\ \emph {et~al.}(2008)\citenamefont {Klyachko}, \citenamefont {Can}, \citenamefont {Binicio{\u{g}}lu},\ and\ \citenamefont {Shumovsky}}]{klyachko2008simple}%
  \BibitemOpen
  \bibfield  {author} {\bibinfo {author} {\bibfnamefont {A.~A.}\ \bibnamefont {Klyachko}}, \bibinfo {author} {\bibfnamefont {M.~A.}\ \bibnamefont {Can}}, \bibinfo {author} {\bibfnamefont {S.}~\bibnamefont {Binicio{\u{g}}lu}},\ and\ \bibinfo {author} {\bibfnamefont {A.~S.}\ \bibnamefont {Shumovsky}},\ }\bibfield  {title} {\bibinfo {title} {Simple test for hidden variables in spin-1 systems},\ }\href {https://doi.org/10.1103/physrevlett.101.020403} {\bibfield  {journal} {\bibinfo  {journal} {Physical Review Letters}\ }\textbf {\bibinfo {volume} {101}},\ \bibinfo {pages} {020403} (\bibinfo {year} {2008})}\BibitemShut {NoStop}%
\bibitem [{\citenamefont {Raussendorf}(2013)}]{raussendorf2013contextuality}%
  \BibitemOpen
  \bibfield  {author} {\bibinfo {author} {\bibfnamefont {R.}~\bibnamefont {Raussendorf}},\ }\bibfield  {title} {\bibinfo {title} {Contextuality in measurement-based quantum computation},\ }\href {https://doi.org/10.1103/physreva.88.022322} {\bibfield  {journal} {\bibinfo  {journal} {Physical Review A}\ }\textbf {\bibinfo {volume} {88}},\ \bibinfo {pages} {022322} (\bibinfo {year} {2013})}\BibitemShut {NoStop}%
\bibitem [{\citenamefont {Lapkiewicz}\ \emph {et~al.}(2011)\citenamefont {Lapkiewicz}, \citenamefont {Li}, \citenamefont {Schaeff}, \citenamefont {Langford}, \citenamefont {Ramelow}, \citenamefont {Wie{\'s}niak},\ and\ \citenamefont {Zeilinger}}]{lapkiewicz2011experimental}%
  \BibitemOpen
  \bibfield  {author} {\bibinfo {author} {\bibfnamefont {R.}~\bibnamefont {Lapkiewicz}}, \bibinfo {author} {\bibfnamefont {P.}~\bibnamefont {Li}}, \bibinfo {author} {\bibfnamefont {C.}~\bibnamefont {Schaeff}}, \bibinfo {author} {\bibfnamefont {N.~K.}\ \bibnamefont {Langford}}, \bibinfo {author} {\bibfnamefont {S.}~\bibnamefont {Ramelow}}, \bibinfo {author} {\bibfnamefont {M.}~\bibnamefont {Wie{\'s}niak}},\ and\ \bibinfo {author} {\bibfnamefont {A.}~\bibnamefont {Zeilinger}},\ }\bibfield  {title} {\bibinfo {title} {Experimental non-classicality of an indivisible quantum system},\ }\href {https://doi.org/10.1038/nature10119} {\bibfield  {journal} {\bibinfo  {journal} {Nature}\ }\textbf {\bibinfo {volume} {474}},\ \bibinfo {pages} {490} (\bibinfo {year} {2011})}\BibitemShut {NoStop}%
\bibitem [{\citenamefont {Um}\ \emph {et~al.}(2013)\citenamefont {Um}, \citenamefont {Zhang}, \citenamefont {Zhang}, \citenamefont {Wang}, \citenamefont {Yangchao}, \citenamefont {Deng}, \citenamefont {Duan},\ and\ \citenamefont {Kim}}]{um2013experimental}%
  \BibitemOpen
  \bibfield  {author} {\bibinfo {author} {\bibfnamefont {M.}~\bibnamefont {Um}}, \bibinfo {author} {\bibfnamefont {X.}~\bibnamefont {Zhang}}, \bibinfo {author} {\bibfnamefont {J.}~\bibnamefont {Zhang}}, \bibinfo {author} {\bibfnamefont {Y.}~\bibnamefont {Wang}}, \bibinfo {author} {\bibfnamefont {S.}~\bibnamefont {Yangchao}}, \bibinfo {author} {\bibfnamefont {D.~L.}\ \bibnamefont {Deng}}, \bibinfo {author} {\bibfnamefont {L.-M.}\ \bibnamefont {Duan}},\ and\ \bibinfo {author} {\bibfnamefont {K.}~\bibnamefont {Kim}},\ }\bibfield  {title} {\bibinfo {title} {Experimental certification of random numbers via quantum contextuality},\ }\href {https://doi.org/10.1038/srep01627} {\bibfield  {journal} {\bibinfo  {journal} {Scientific Reports}\ }\textbf {\bibinfo {volume} {3}},\ \bibinfo {pages} {1627} (\bibinfo {year} {2013})}\BibitemShut {NoStop}%
\bibitem [{\citenamefont {Jerger}\ \emph {et~al.}(2016)\citenamefont {Jerger} \emph {et~al.}}]{jerger2016contextuality}%
  \BibitemOpen
  \bibfield  {author} {\bibinfo {author} {\bibfnamefont {M.}~\bibnamefont {Jerger}} \emph {et~al.},\ }\bibfield  {title} {\bibinfo {title} {Contextuality without nonlocality in a superconducting quantum system},\ }\href {https://doi.org/10.1038/ncomms12930} {\bibfield  {journal} {\bibinfo  {journal} {Nature Communications}\ }\textbf {\bibinfo {volume} {7}},\ \bibinfo {pages} {12930} (\bibinfo {year} {2016})}\BibitemShut {NoStop}%
\bibitem [{\citenamefont {Zhan}\ \emph {et~al.}(2017)\citenamefont {Zhan}, \citenamefont {Kurzy{\'n}ski}, \citenamefont {Kaszlikowski}, \citenamefont {Wang}, \citenamefont {Bian}, \citenamefont {Zhang},\ and\ \citenamefont {Xue}}]{zhan2017experimental}%
  \BibitemOpen
  \bibfield  {author} {\bibinfo {author} {\bibfnamefont {X.}~\bibnamefont {Zhan}}, \bibinfo {author} {\bibfnamefont {P.}~\bibnamefont {Kurzy{\'n}ski}}, \bibinfo {author} {\bibfnamefont {D.}~\bibnamefont {Kaszlikowski}}, \bibinfo {author} {\bibfnamefont {K.}~\bibnamefont {Wang}}, \bibinfo {author} {\bibfnamefont {Z.}~\bibnamefont {Bian}}, \bibinfo {author} {\bibfnamefont {Y.}~\bibnamefont {Zhang}},\ and\ \bibinfo {author} {\bibfnamefont {P.}~\bibnamefont {Xue}},\ }\bibfield  {title} {\bibinfo {title} {Experimental detection of information deficit in a photonic contextuality scenario},\ }\href {https://doi.org/10.1103/physrevlett.119.220403} {\bibfield  {journal} {\bibinfo  {journal} {Physical Review Letters}\ }\textbf {\bibinfo {volume} {119}},\ \bibinfo {pages} {220403} (\bibinfo {year} {2017})}\BibitemShut {NoStop}%
\bibitem [{\citenamefont {Malinowski}\ \emph {et~al.}(2018)\citenamefont {Malinowski}, \citenamefont {Zhang}, \citenamefont {Leupold}, \citenamefont {Cabello}, \citenamefont {Alonso},\ and\ \citenamefont {Home}}]{malinowski2018probing}%
  \BibitemOpen
  \bibfield  {author} {\bibinfo {author} {\bibfnamefont {M.}~\bibnamefont {Malinowski}}, \bibinfo {author} {\bibfnamefont {C.}~\bibnamefont {Zhang}}, \bibinfo {author} {\bibfnamefont {F.~M.}\ \bibnamefont {Leupold}}, \bibinfo {author} {\bibfnamefont {A.}~\bibnamefont {Cabello}}, \bibinfo {author} {\bibfnamefont {J.}~\bibnamefont {Alonso}},\ and\ \bibinfo {author} {\bibfnamefont {J.~P.}\ \bibnamefont {Home}},\ }\bibfield  {title} {\bibinfo {title} {Probing the limits of correlations in an indivisible quantum system},\ }\href {https://doi.org/10.1103/physreva.98.050102} {\bibfield  {journal} {\bibinfo  {journal} {Physical Review A}\ }\textbf {\bibinfo {volume} {98}},\ \bibinfo {pages} {050102} (\bibinfo {year} {2018})}\BibitemShut {NoStop}%
\bibitem [{\citenamefont {Leupold}\ \emph {et~al.}(2018)\citenamefont {Leupold}, \citenamefont {Malinowski}, \citenamefont {Zhang}, \citenamefont {Negnevitsky}, \citenamefont {Alonso}, \citenamefont {Home},\ and\ \citenamefont {Cabello}}]{leupold2018sustained}%
  \BibitemOpen
  \bibfield  {author} {\bibinfo {author} {\bibfnamefont {F.~M.}\ \bibnamefont {Leupold}}, \bibinfo {author} {\bibfnamefont {M.}~\bibnamefont {Malinowski}}, \bibinfo {author} {\bibfnamefont {C.}~\bibnamefont {Zhang}}, \bibinfo {author} {\bibfnamefont {V.}~\bibnamefont {Negnevitsky}}, \bibinfo {author} {\bibfnamefont {J.}~\bibnamefont {Alonso}}, \bibinfo {author} {\bibfnamefont {J.~P.}\ \bibnamefont {Home}},\ and\ \bibinfo {author} {\bibfnamefont {A.}~\bibnamefont {Cabello}},\ }\bibfield  {title} {\bibinfo {title} {Sustained state-independent quantum contextual correlations from a single ion},\ }\href {https://doi.org/10.1103/physrevlett.120.180401} {\bibfield  {journal} {\bibinfo  {journal} {Physical Review Letters}\ }\textbf {\bibinfo {volume} {120}},\ \bibinfo {pages} {180401} (\bibinfo {year} {2018})}\BibitemShut {NoStop}%
\bibitem [{\citenamefont {Zhang}\ \emph {et~al.}(2019)\citenamefont {Zhang}, \citenamefont {Xu}, \citenamefont {Xie}, \citenamefont {Zhang}, \citenamefont {Smith}, \citenamefont {Kim},\ and\ \citenamefont {Zhang}}]{zhang2019experimental}%
  \BibitemOpen
  \bibfield  {author} {\bibinfo {author} {\bibfnamefont {A.}~\bibnamefont {Zhang}}, \bibinfo {author} {\bibfnamefont {H.}~\bibnamefont {Xu}}, \bibinfo {author} {\bibfnamefont {J.}~\bibnamefont {Xie}}, \bibinfo {author} {\bibfnamefont {H.}~\bibnamefont {Zhang}}, \bibinfo {author} {\bibfnamefont {B.~J.}\ \bibnamefont {Smith}}, \bibinfo {author} {\bibfnamefont {M.~S.}\ \bibnamefont {Kim}},\ and\ \bibinfo {author} {\bibfnamefont {L.}~\bibnamefont {Zhang}},\ }\bibfield  {title} {\bibinfo {title} {Experimental test of contextuality in quantum and classical systems},\ }\href {https://doi.org/10.1103/physrevlett.122.080401} {\bibfield  {journal} {\bibinfo  {journal} {Physical Review Letters}\ }\textbf {\bibinfo {volume} {122}},\ \bibinfo {pages} {080401} (\bibinfo {year} {2019})}\BibitemShut {NoStop}%
\bibitem [{\citenamefont {Um}\ \emph {et~al.}(2020)\citenamefont {Um}, \citenamefont {Zhao}, \citenamefont {Zhang}, \citenamefont {Wang}, \citenamefont {Wang}, \citenamefont {Qiao}, \citenamefont {Zhou}, \citenamefont {Ma},\ and\ \citenamefont {Kim}}]{um2020randomness}%
  \BibitemOpen
  \bibfield  {author} {\bibinfo {author} {\bibfnamefont {M.}~\bibnamefont {Um}}, \bibinfo {author} {\bibfnamefont {Q.}~\bibnamefont {Zhao}}, \bibinfo {author} {\bibfnamefont {J.}~\bibnamefont {Zhang}}, \bibinfo {author} {\bibfnamefont {P.}~\bibnamefont {Wang}}, \bibinfo {author} {\bibfnamefont {Y.}~\bibnamefont {Wang}}, \bibinfo {author} {\bibfnamefont {M.}~\bibnamefont {Qiao}}, \bibinfo {author} {\bibfnamefont {H.}~\bibnamefont {Zhou}}, \bibinfo {author} {\bibfnamefont {X.}~\bibnamefont {Ma}},\ and\ \bibinfo {author} {\bibfnamefont {K.}~\bibnamefont {Kim}},\ }\bibfield  {title} {\bibinfo {title} {Randomness expansion secured by quantum contextuality},\ }\href {https://doi.org/10.1103/physrevapplied.13.034077} {\bibfield  {journal} {\bibinfo  {journal} {Physical Review Applied}\ }\textbf {\bibinfo {volume} {13}},\ \bibinfo {pages} {034077} (\bibinfo {year} {2020})}\BibitemShut {NoStop}%
\bibitem [{\citenamefont {Wang}\ \emph {et~al.}(2022)\citenamefont {Wang}, \citenamefont {Zhang}, \citenamefont {Luan}, \citenamefont {Um}, \citenamefont {Wang}, \citenamefont {Qiao}, \citenamefont {Xie}, \citenamefont {Zhang}, \citenamefont {Cabello},\ and\ \citenamefont {Kim}}]{wang2022significant}%
  \BibitemOpen
  \bibfield  {author} {\bibinfo {author} {\bibfnamefont {P.}~\bibnamefont {Wang}}, \bibinfo {author} {\bibfnamefont {J.}~\bibnamefont {Zhang}}, \bibinfo {author} {\bibfnamefont {C.-Y.}\ \bibnamefont {Luan}}, \bibinfo {author} {\bibfnamefont {M.}~\bibnamefont {Um}}, \bibinfo {author} {\bibfnamefont {Y.}~\bibnamefont {Wang}}, \bibinfo {author} {\bibfnamefont {M.}~\bibnamefont {Qiao}}, \bibinfo {author} {\bibfnamefont {T.}~\bibnamefont {Xie}}, \bibinfo {author} {\bibfnamefont {J.-N.}\ \bibnamefont {Zhang}}, \bibinfo {author} {\bibfnamefont {A.}~\bibnamefont {Cabello}},\ and\ \bibinfo {author} {\bibfnamefont {K.}~\bibnamefont {Kim}},\ }\bibfield  {title} {\bibinfo {title} {Significant loophole-free test of {K}ochen-{S}pecker contextuality using two species of atomic ions},\ }\bibfield  {journal} {\bibinfo  {journal} {Science Advances}\ }\textbf {\bibinfo {volume} {8}},\ \href {https://doi.org/10.1126/sciadv.abk1660} {10.1126/sciadv.abk1660} (\bibinfo {year} {2022})\BibitemShut {NoStop}%
\bibitem [{\citenamefont {Hu}\ \emph {et~al.}(2023)\citenamefont {Hu} \emph {et~al.}}]{hu2023self}%
  \BibitemOpen
  \bibfield  {author} {\bibinfo {author} {\bibfnamefont {X.-M.}\ \bibnamefont {Hu}} \emph {et~al.},\ }\bibfield  {title} {\bibinfo {title} {Self-testing of a single quantum system from theory to experiment},\ }\href {https://doi.org/10.1038/s41534-023-00769-7} {\bibfield  {journal} {\bibinfo  {journal} {npj Quantum Information}\ }\textbf {\bibinfo {volume} {9}},\ \bibinfo {pages} {103} (\bibinfo {year} {2023})}\BibitemShut {NoStop}%
\bibitem [{\citenamefont {Liu}\ \emph {et~al.}(2023)\citenamefont {Liu}, \citenamefont {Meng}, \citenamefont {Xu}, \citenamefont {Zhou}, \citenamefont {Chen}, \citenamefont {Xu}, \citenamefont {Li}, \citenamefont {Guo},\ and\ \citenamefont {Cabello}}]{liu2023experimental}%
  \BibitemOpen
  \bibfield  {author} {\bibinfo {author} {\bibfnamefont {Z.-H.}\ \bibnamefont {Liu}}, \bibinfo {author} {\bibfnamefont {H.-X.}\ \bibnamefont {Meng}}, \bibinfo {author} {\bibfnamefont {Z.-P.}\ \bibnamefont {Xu}}, \bibinfo {author} {\bibfnamefont {J.}~\bibnamefont {Zhou}}, \bibinfo {author} {\bibfnamefont {J.-L.}\ \bibnamefont {Chen}}, \bibinfo {author} {\bibfnamefont {J.-S.}\ \bibnamefont {Xu}}, \bibinfo {author} {\bibfnamefont {C.-F.}\ \bibnamefont {Li}}, \bibinfo {author} {\bibfnamefont {G.-C.}\ \bibnamefont {Guo}},\ and\ \bibinfo {author} {\bibfnamefont {A.}~\bibnamefont {Cabello}},\ }\bibfield  {title} {\bibinfo {title} {Experimental test of high-dimensional quantum contextuality based on contextuality concentration},\ }\href {https://doi.org/10.1103/physrevlett.130.240202} {\bibfield  {journal} {\bibinfo  {journal} {Physical Review Letters}\ }\textbf {\bibinfo {volume} {130}},\ \bibinfo {pages} {240202} (\bibinfo {year} {2023})}\BibitemShut {NoStop}%
\bibitem [{\citenamefont {Bharti}\ \emph {et~al.}(2019{\natexlab{a}})\citenamefont {Bharti}, \citenamefont {Ray}, \citenamefont {Varvitsiotis}, \citenamefont {Warsi}, \citenamefont {Cabello},\ and\ \citenamefont {Kwek}}]{bharti2019robust}%
  \BibitemOpen
  \bibfield  {author} {\bibinfo {author} {\bibfnamefont {K.}~\bibnamefont {Bharti}}, \bibinfo {author} {\bibfnamefont {M.}~\bibnamefont {Ray}}, \bibinfo {author} {\bibfnamefont {A.}~\bibnamefont {Varvitsiotis}}, \bibinfo {author} {\bibfnamefont {N.~A.}\ \bibnamefont {Warsi}}, \bibinfo {author} {\bibfnamefont {A.}~\bibnamefont {Cabello}},\ and\ \bibinfo {author} {\bibfnamefont {L.-C.}\ \bibnamefont {Kwek}},\ }\bibfield  {title} {\bibinfo {title} {Robust self-testing of quantum systems via noncontextuality inequalities},\ }\href {https://doi.org/10.1103/physrevlett.122.250403} {\bibfield  {journal} {\bibinfo  {journal} {Physical Review Letters}\ }\textbf {\bibinfo {volume} {122}},\ \bibinfo {pages} {250403} (\bibinfo {year} {2019}{\natexlab{a}})}\BibitemShut {NoStop}%
\bibitem [{\citenamefont {Bharti}\ \emph {et~al.}(2019{\natexlab{b}})\citenamefont {Bharti}, \citenamefont {Ray}, \citenamefont {Varvitsiotis}, \citenamefont {Cabello},\ and\ \citenamefont {Kwek}}]{bharti2019local}%
  \BibitemOpen
  \bibfield  {author} {\bibinfo {author} {\bibfnamefont {K.}~\bibnamefont {Bharti}}, \bibinfo {author} {\bibfnamefont {M.}~\bibnamefont {Ray}}, \bibinfo {author} {\bibfnamefont {A.}~\bibnamefont {Varvitsiotis}}, \bibinfo {author} {\bibfnamefont {A.}~\bibnamefont {Cabello}},\ and\ \bibinfo {author} {\bibfnamefont {L.-C.}\ \bibnamefont {Kwek}},\ }\href@noop {} {\bibinfo {title} {Local certification of programmable quantum devices of arbitrary high dimensionality}} (\bibinfo {year} {2019}{\natexlab{b}}),\ \Eprint {https://arxiv.org/abs/1911.09448} {arXiv:1911.09448 [quant-ph]} \BibitemShut {NoStop}%
\bibitem [{\citenamefont {Xu}\ \emph {et~al.}(2024)\citenamefont {Xu}, \citenamefont {Saha}, \citenamefont {Bharti},\ and\ \citenamefont {Cabello}}]{xu2024certifying}%
  \BibitemOpen
  \bibfield  {author} {\bibinfo {author} {\bibfnamefont {Z.-P.}\ \bibnamefont {Xu}}, \bibinfo {author} {\bibfnamefont {D.}~\bibnamefont {Saha}}, \bibinfo {author} {\bibfnamefont {K.}~\bibnamefont {Bharti}},\ and\ \bibinfo {author} {\bibfnamefont {A.}~\bibnamefont {Cabello}},\ }\bibfield  {title} {\bibinfo {title} {Certifying sets of quantum observables with any full-rank state},\ }\href {https://doi.org/10.1103/physrevlett.132.140201} {\bibfield  {journal} {\bibinfo  {journal} {Physical Review Letters}\ }\textbf {\bibinfo {volume} {132}},\ \bibinfo {pages} {140201} (\bibinfo {year} {2024})}\BibitemShut {NoStop}%
\bibitem [{\citenamefont {Saha}\ \emph {et~al.}(2020)\citenamefont {Saha}, \citenamefont {Santos},\ and\ \citenamefont {Augusiak}}]{saha2020sum}%
  \BibitemOpen
  \bibfield  {author} {\bibinfo {author} {\bibfnamefont {D.}~\bibnamefont {Saha}}, \bibinfo {author} {\bibfnamefont {R.}~\bibnamefont {Santos}},\ and\ \bibinfo {author} {\bibfnamefont {R.}~\bibnamefont {Augusiak}},\ }\bibfield  {title} {\bibinfo {title} {Sum-of-squares decompositions for a family of noncontextuality inequalities and self-testing of quantum devices},\ }\href {https://doi.org/10.22331/q-2020-08-03-302} {\bibfield  {journal} {\bibinfo  {journal} {Quantum}\ }\textbf {\bibinfo {volume} {4}},\ \bibinfo {pages} {302} (\bibinfo {year} {2020})}\BibitemShut {NoStop}%
\bibitem [{\citenamefont {Liu}(2023)}]{liu2023exploring}%
  \BibitemOpen
  \bibfield  {author} {\bibinfo {author} {\bibfnamefont {Z.-H.}\ \bibnamefont {Liu}},\ }\href {https://doi.org/10.1007/978-981-99-6167-2} {\emph {\bibinfo {title} {Exploring Quantum Contextuality with Photons}}}\ (\bibinfo  {publisher} {Springer Nature Singapore},\ \bibinfo {year} {2023})\BibitemShut {NoStop}%
\bibitem [{\citenamefont {Arora}\ \emph {et~al.}(2024)\citenamefont {Arora}, \citenamefont {Bharti}, \citenamefont {Cojocaru},\ and\ \citenamefont {Coladangelo}}]{arora2024computational}%
  \BibitemOpen
  \bibfield  {author} {\bibinfo {author} {\bibfnamefont {A.~S.}\ \bibnamefont {Arora}}, \bibinfo {author} {\bibfnamefont {K.}~\bibnamefont {Bharti}}, \bibinfo {author} {\bibfnamefont {A.}~\bibnamefont {Cojocaru}},\ and\ \bibinfo {author} {\bibfnamefont {A.}~\bibnamefont {Coladangelo}},\ }\bibfield  {title} {\bibinfo {title} {A computational test of contextuality and, even simpler proofs of quantumness},\ }in\ \href {https://doi.org/10.1109/focs61266.2024.00073} {\emph {\bibinfo {booktitle} {2024 IEEE 65th Annual Symposium on Foundations of Computer Science (FOCS)}}}\ (\bibinfo  {publisher} {IEEE},\ \bibinfo {year} {2024})\ pp.\ \bibinfo {pages} {1106--1125}\BibitemShut {NoStop}%
\bibitem [{\citenamefont {Bowles}\ \emph {et~al.}(2023)\citenamefont {Bowles}, \citenamefont {Wright}, \citenamefont {Farkas}, \citenamefont {Killoran},\ and\ \citenamefont {Schuld}}]{bowles2023contextuality}%
  \BibitemOpen
  \bibfield  {author} {\bibinfo {author} {\bibfnamefont {J.}~\bibnamefont {Bowles}}, \bibinfo {author} {\bibfnamefont {V.~J.}\ \bibnamefont {Wright}}, \bibinfo {author} {\bibfnamefont {M.}~\bibnamefont {Farkas}}, \bibinfo {author} {\bibfnamefont {N.}~\bibnamefont {Killoran}},\ and\ \bibinfo {author} {\bibfnamefont {M.}~\bibnamefont {Schuld}},\ }\href@noop {} {\bibinfo {title} {Contextuality and inductive bias in quantum machine learning}} (\bibinfo {year} {2023}),\ \Eprint {https://arxiv.org/abs/2302.01365} {arXiv:2302.01365 [quant-ph]} \BibitemShut {NoStop}%
\bibitem [{\citenamefont {Gao}\ \emph {et~al.}(2022)\citenamefont {Gao}, \citenamefont {Anschuetz}, \citenamefont {Wang}, \citenamefont {Cirac},\ and\ \citenamefont {Lukin}}]{gao2022enhancing}%
  \BibitemOpen
  \bibfield  {author} {\bibinfo {author} {\bibfnamefont {X.}~\bibnamefont {Gao}}, \bibinfo {author} {\bibfnamefont {E.~R.}\ \bibnamefont {Anschuetz}}, \bibinfo {author} {\bibfnamefont {S.-T.}\ \bibnamefont {Wang}}, \bibinfo {author} {\bibfnamefont {J.~I.}\ \bibnamefont {Cirac}},\ and\ \bibinfo {author} {\bibfnamefont {M.~D.}\ \bibnamefont {Lukin}},\ }\bibfield  {title} {\bibinfo {title} {Enhancing generative models via quantum correlations},\ }\href {https://doi.org/10.1103/physrevx.12.021037} {\bibfield  {journal} {\bibinfo  {journal} {Physical Review X}\ }\textbf {\bibinfo {volume} {12}},\ \bibinfo {pages} {021037} (\bibinfo {year} {2022})}\BibitemShut {NoStop}%
\bibitem [{\citenamefont {Hameedi}\ \emph {et~al.}(2017)\citenamefont {Hameedi}, \citenamefont {Tavakoli}, \citenamefont {Marques},\ and\ \citenamefont {Bourennane}}]{hameedi2017communication}%
  \BibitemOpen
  \bibfield  {author} {\bibinfo {author} {\bibfnamefont {A.}~\bibnamefont {Hameedi}}, \bibinfo {author} {\bibfnamefont {A.}~\bibnamefont {Tavakoli}}, \bibinfo {author} {\bibfnamefont {B.}~\bibnamefont {Marques}},\ and\ \bibinfo {author} {\bibfnamefont {M.}~\bibnamefont {Bourennane}},\ }\bibfield  {title} {\bibinfo {title} {Communication games reveal preparation contextuality},\ }\href {https://doi.org/10.1103/physrevlett.119.220402} {\bibfield  {journal} {\bibinfo  {journal} {Physical Review Letters}\ }\textbf {\bibinfo {volume} {119}},\ \bibinfo {pages} {220402} (\bibinfo {year} {2017})}\BibitemShut {NoStop}%
\bibitem [{\citenamefont {Saha}\ \emph {et~al.}(2019)\citenamefont {Saha}, \citenamefont {Horodecki},\ and\ \citenamefont {Paw{\l}owski}}]{saha2019state}%
  \BibitemOpen
  \bibfield  {author} {\bibinfo {author} {\bibfnamefont {D.}~\bibnamefont {Saha}}, \bibinfo {author} {\bibfnamefont {P.}~\bibnamefont {Horodecki}},\ and\ \bibinfo {author} {\bibfnamefont {M.}~\bibnamefont {Paw{\l}owski}},\ }\bibfield  {title} {\bibinfo {title} {State independent contextuality advances one-way communication},\ }\href {https://doi.org/10.1088/1367-2630/ab4149} {\bibfield  {journal} {\bibinfo  {journal} {New Journal of Physics}\ }\textbf {\bibinfo {volume} {21}},\ \bibinfo {pages} {093057} (\bibinfo {year} {2019})}\BibitemShut {NoStop}%
\bibitem [{\citenamefont {Gupta}\ \emph {et~al.}(2023)\citenamefont {Gupta}, \citenamefont {Saha}, \citenamefont {Xu}, \citenamefont {Cabello},\ and\ \citenamefont {Majumdar}}]{gupta2023quantum}%
  \BibitemOpen
  \bibfield  {author} {\bibinfo {author} {\bibfnamefont {S.}~\bibnamefont {Gupta}}, \bibinfo {author} {\bibfnamefont {D.}~\bibnamefont {Saha}}, \bibinfo {author} {\bibfnamefont {Z.-P.}\ \bibnamefont {Xu}}, \bibinfo {author} {\bibfnamefont {A.}~\bibnamefont {Cabello}},\ and\ \bibinfo {author} {\bibfnamefont {A.~S.}\ \bibnamefont {Majumdar}},\ }\bibfield  {title} {\bibinfo {title} {Quantum contextuality provides communication complexity advantage},\ }\href {https://doi.org/10.1103/physrevlett.130.080802} {\bibfield  {journal} {\bibinfo  {journal} {Physical Review Letters}\ }\textbf {\bibinfo {volume} {130}},\ \bibinfo {pages} {080802} (\bibinfo {year} {2023})}\BibitemShut {NoStop}%
\bibitem [{\citenamefont {Mermin}(1990)}]{mermin1990extreme}%
  \BibitemOpen
  \bibfield  {author} {\bibinfo {author} {\bibfnamefont {N.~D.}\ \bibnamefont {Mermin}},\ }\bibfield  {title} {\bibinfo {title} {Extreme quantum entanglement in a superposition of macroscopically distinct states},\ }\href {https://doi.org/10.1103/physrevlett.65.1838} {\bibfield  {journal} {\bibinfo  {journal} {Physical Review Letters}\ }\textbf {\bibinfo {volume} {65}},\ \bibinfo {pages} {1838} (\bibinfo {year} {1990})}\BibitemShut {NoStop}%
\bibitem [{\citenamefont {Abramsky}\ and\ \citenamefont {Brandenburger}(2011)}]{abramsky2011sheaf}%
  \BibitemOpen
  \bibfield  {author} {\bibinfo {author} {\bibfnamefont {S.}~\bibnamefont {Abramsky}}\ and\ \bibinfo {author} {\bibfnamefont {A.}~\bibnamefont {Brandenburger}},\ }\bibfield  {title} {\bibinfo {title} {The sheaf-theoretic structure of non-locality and contextuality},\ }\href {https://doi.org/10.1088/1367-2630/13/11/113036} {\bibfield  {journal} {\bibinfo  {journal} {New Journal of Physics}\ }\textbf {\bibinfo {volume} {13}},\ \bibinfo {pages} {113036} (\bibinfo {year} {2011})}\BibitemShut {NoStop}%
\bibitem [{\citenamefont {Abramsky}\ \emph {et~al.}(2015)\citenamefont {Abramsky}, \citenamefont {Soares~Barbosa}, \citenamefont {Kishida}, \citenamefont {Lal},\ and\ \citenamefont {Mansfield}}]{abramsky2015contextuality}%
  \BibitemOpen
  \bibfield  {author} {\bibinfo {author} {\bibfnamefont {S.}~\bibnamefont {Abramsky}}, \bibinfo {author} {\bibfnamefont {R.}~\bibnamefont {Soares~Barbosa}}, \bibinfo {author} {\bibfnamefont {K.}~\bibnamefont {Kishida}}, \bibinfo {author} {\bibfnamefont {R.}~\bibnamefont {Lal}},\ and\ \bibinfo {author} {\bibfnamefont {S.}~\bibnamefont {Mansfield}},\ }\bibfield  {title} {\bibinfo {title} {Contextuality, cohomology and paradox},\ }in\ \href {https://doi.org/10.4230/LIPICS.CSL.2015.211} {\emph {\bibinfo {booktitle} {24th EACSL Annual Conference on Computer Science Logic (CSL 2015)}}}\ (\bibinfo  {publisher} {Schloss Dagstuhl – Leibniz-Zentrum f{\"u}r Informatik},\ \bibinfo {year} {2015})\BibitemShut {NoStop}%
\bibitem [{\citenamefont {Kirby}\ and\ \citenamefont {Love}(2019)}]{kirby2019contextuality}%
  \BibitemOpen
  \bibfield  {author} {\bibinfo {author} {\bibfnamefont {W.~M.}\ \bibnamefont {Kirby}}\ and\ \bibinfo {author} {\bibfnamefont {P.~J.}\ \bibnamefont {Love}},\ }\bibfield  {title} {\bibinfo {title} {Contextuality test of the nonclassicality of variational quantum eigensolvers},\ }\href {https://doi.org/10.1103/physrevlett.123.200501} {\bibfield  {journal} {\bibinfo  {journal} {Physical Review Letters}\ }\textbf {\bibinfo {volume} {123}},\ \bibinfo {pages} {200501} (\bibinfo {year} {2019})}\BibitemShut {NoStop}%
\bibitem [{\citenamefont {Kim}\ and\ \citenamefont {Abramsky}(2023)}]{kim2023stateindependentallversusnothingarguments}%
  \BibitemOpen
  \bibfield  {author} {\bibinfo {author} {\bibfnamefont {B.}~\bibnamefont {Kim}}\ and\ \bibinfo {author} {\bibfnamefont {S.}~\bibnamefont {Abramsky}},\ }\href@noop {} {\bibinfo {title} {State-independent all-versus-nothing arguments}} (\bibinfo {year} {2023}),\ \Eprint {https://arxiv.org/abs/2311.11218} {arXiv:2311.11218 [quant-ph]} \BibitemShut {NoStop}%
\bibitem [{\citenamefont {Butt}\ \emph {et~al.}(2024)\citenamefont {Butt}, \citenamefont {Heu{\ss}en}, \citenamefont {Rispler},\ and\ \citenamefont {M{\"u}ller}}]{butt2024fault}%
  \BibitemOpen
  \bibfield  {author} {\bibinfo {author} {\bibfnamefont {F.}~\bibnamefont {Butt}}, \bibinfo {author} {\bibfnamefont {S.}~\bibnamefont {Heu{\ss}en}}, \bibinfo {author} {\bibfnamefont {M.}~\bibnamefont {Rispler}},\ and\ \bibinfo {author} {\bibfnamefont {M.}~\bibnamefont {M{\"u}ller}},\ }\bibfield  {title} {\bibinfo {title} {Fault-tolerant code-switching protocols for near-term quantum processors},\ }\href {https://doi.org/10.1103/prxquantum.5.020345} {\bibfield  {journal} {\bibinfo  {journal} {PRX Quantum}\ }\textbf {\bibinfo {volume} {5}},\ \bibinfo {pages} {020345} (\bibinfo {year} {2024})}\BibitemShut {NoStop}%
\bibitem [{\citenamefont {Bravyi}\ and\ \citenamefont {Cross}(2015)}]{bravyi2015doubled}%
  \BibitemOpen
  \bibfield  {author} {\bibinfo {author} {\bibfnamefont {S.}~\bibnamefont {Bravyi}}\ and\ \bibinfo {author} {\bibfnamefont {A.}~\bibnamefont {Cross}},\ }\href@noop {} {\bibinfo {title} {Doubled color codes}} (\bibinfo {year} {2015}),\ \Eprint {https://arxiv.org/abs/1509.03239} {arXiv:1509.03239 [quant-ph]} \BibitemShut {NoStop}%
\bibitem [{\citenamefont {Chitambar}\ and\ \citenamefont {Gour}(2019)}]{chitambar2019quantum}%
  \BibitemOpen
  \bibfield  {author} {\bibinfo {author} {\bibfnamefont {E.}~\bibnamefont {Chitambar}}\ and\ \bibinfo {author} {\bibfnamefont {G.}~\bibnamefont {Gour}},\ }\bibfield  {title} {\bibinfo {title} {Quantum resource theories},\ }\href {https://doi.org/10.1103/revmodphys.91.025001} {\bibfield  {journal} {\bibinfo  {journal} {Reviews of Modern Physics}\ }\textbf {\bibinfo {volume} {91}},\ \bibinfo {pages} {025001} (\bibinfo {year} {2019})}\BibitemShut {NoStop}%
\bibitem [{\citenamefont {Bell}(1964)}]{bell1964einstein}%
  \BibitemOpen
  \bibfield  {author} {\bibinfo {author} {\bibfnamefont {J.~S.}\ \bibnamefont {Bell}},\ }\bibfield  {title} {\bibinfo {title} {On the {E}instein {P}odolsky {R}osen paradox},\ }\href {https://doi.org/10.1103/physicsphysiquefizika.1.195} {\bibfield  {journal} {\bibinfo  {journal} {Physics Physique Fizika}\ }\textbf {\bibinfo {volume} {1}},\ \bibinfo {pages} {195} (\bibinfo {year} {1964})}\BibitemShut {NoStop}%
\bibitem [{\citenamefont {Greenberger}\ \emph {et~al.}(1989)\citenamefont {Greenberger}, \citenamefont {Horne},\ and\ \citenamefont {Zeilinger}}]{greenberger1989going}%
  \BibitemOpen
  \bibfield  {author} {\bibinfo {author} {\bibfnamefont {D.~M.}\ \bibnamefont {Greenberger}}, \bibinfo {author} {\bibfnamefont {M.~A.}\ \bibnamefont {Horne}},\ and\ \bibinfo {author} {\bibfnamefont {A.}~\bibnamefont {Zeilinger}},\ }\bibinfo {title} {Going beyond {B}ell's theorem},\ in\ \href {https://doi.org/10.1007/978-94-017-0849-4_10} {\emph {\bibinfo {booktitle} {Bell’s Theorem, Quantum Theory and Conceptions of the Universe}}}\ (\bibinfo  {publisher} {Springer Netherlands},\ \bibinfo {year} {1989})\ pp.\ \bibinfo {pages} {69--72}\BibitemShut {NoStop}%
\bibitem [{\citenamefont {Greenberger}\ \emph {et~al.}(1990)\citenamefont {Greenberger}, \citenamefont {Horne}, \citenamefont {Shimony},\ and\ \citenamefont {Zeilinger}}]{greenberger1990bell}%
  \BibitemOpen
  \bibfield  {author} {\bibinfo {author} {\bibfnamefont {D.~M.}\ \bibnamefont {Greenberger}}, \bibinfo {author} {\bibfnamefont {M.~A.}\ \bibnamefont {Horne}}, \bibinfo {author} {\bibfnamefont {A.}~\bibnamefont {Shimony}},\ and\ \bibinfo {author} {\bibfnamefont {A.}~\bibnamefont {Zeilinger}},\ }\bibfield  {title} {\bibinfo {title} {Bell's theorem without inequalities},\ }\href {https://doi.org/10.1119/1.16243} {\bibfield  {journal} {\bibinfo  {journal} {American Journal of Physics}\ }\textbf {\bibinfo {volume} {58}},\ \bibinfo {pages} {1131} (\bibinfo {year} {1990})}\BibitemShut {NoStop}%
\bibitem [{\citenamefont {DiVincenzo}\ and\ \citenamefont {Peres}(1997)}]{divincenzo1997quantum}%
  \BibitemOpen
  \bibfield  {author} {\bibinfo {author} {\bibfnamefont {D.~P.}\ \bibnamefont {DiVincenzo}}\ and\ \bibinfo {author} {\bibfnamefont {A.}~\bibnamefont {Peres}},\ }\bibfield  {title} {\bibinfo {title} {Quantum code words contradict local realism},\ }\href {https://doi.org/10.1103/physreva.55.4089} {\bibfield  {journal} {\bibinfo  {journal} {Physical Review A}\ }\textbf {\bibinfo {volume} {55}},\ \bibinfo {pages} {4089} (\bibinfo {year} {1997})}\BibitemShut {NoStop}%
\bibitem [{\citenamefont {Cabello}\ \emph {et~al.}(2008)\citenamefont {Cabello}, \citenamefont {G{\"u}hne},\ and\ \citenamefont {Rodr{\'i}guez}}]{cabello2008mermin}%
  \BibitemOpen
  \bibfield  {author} {\bibinfo {author} {\bibfnamefont {A.}~\bibnamefont {Cabello}}, \bibinfo {author} {\bibfnamefont {O.}~\bibnamefont {G{\"u}hne}},\ and\ \bibinfo {author} {\bibfnamefont {D.}~\bibnamefont {Rodr{\'i}guez}},\ }\bibfield  {title} {\bibinfo {title} {Mermin inequalities for perfect correlations},\ }\href {https://doi.org/10.1103/physreva.77.062106} {\bibfield  {journal} {\bibinfo  {journal} {Physical Review A}\ }\textbf {\bibinfo {volume} {77}},\ \bibinfo {pages} {062106} (\bibinfo {year} {2008})}\BibitemShut {NoStop}%
\bibitem [{\citenamefont {Abramsky}\ \emph {et~al.}(2017)\citenamefont {Abramsky}, \citenamefont {Barbosa}, \citenamefont {Car{\`u}},\ and\ \citenamefont {Perdrix}}]{abramsky2017complete}%
  \BibitemOpen
  \bibfield  {author} {\bibinfo {author} {\bibfnamefont {S.}~\bibnamefont {Abramsky}}, \bibinfo {author} {\bibfnamefont {R.~S.}\ \bibnamefont {Barbosa}}, \bibinfo {author} {\bibfnamefont {G.}~\bibnamefont {Car{\`u}}},\ and\ \bibinfo {author} {\bibfnamefont {S.}~\bibnamefont {Perdrix}},\ }\bibfield  {title} {\bibinfo {title} {A complete characterization of all-versus-nothing arguments for stabilizer states},\ }\href {https://doi.org/10.1098/rsta.2016.0385} {\bibfield  {journal} {\bibinfo  {journal} {Philosophical Transactions of the Royal Society A: Mathematical, Physical and Engineering Sciences}\ }\textbf {\bibinfo {volume} {375}},\ \bibinfo {pages} {20160385} (\bibinfo {year} {2017})}\BibitemShut {NoStop}%
\bibitem [{\citenamefont {Howard}\ \emph {et~al.}(2014)\citenamefont {Howard}, \citenamefont {Wallman}, \citenamefont {Veitch},\ and\ \citenamefont {Emerson}}]{howard2014contextuality}%
  \BibitemOpen
  \bibfield  {author} {\bibinfo {author} {\bibfnamefont {M.}~\bibnamefont {Howard}}, \bibinfo {author} {\bibfnamefont {J.}~\bibnamefont {Wallman}}, \bibinfo {author} {\bibfnamefont {V.}~\bibnamefont {Veitch}},\ and\ \bibinfo {author} {\bibfnamefont {J.}~\bibnamefont {Emerson}},\ }\bibfield  {title} {\bibinfo {title} {Contextuality supplies the `magic' for quantum computation},\ }\href {https://doi.org/10.1038/nature13460} {\bibfield  {journal} {\bibinfo  {journal} {Nature}\ }\textbf {\bibinfo {volume} {510}},\ \bibinfo {pages} {351} (\bibinfo {year} {2014})}\BibitemShut {NoStop}%
\bibitem [{\citenamefont {Cabello}\ \emph {et~al.}(2010)\citenamefont {Cabello}, \citenamefont {Severini},\ and\ \citenamefont {Winter}}]{cabello2010noncontextuality}%
  \BibitemOpen
  \bibfield  {author} {\bibinfo {author} {\bibfnamefont {A.}~\bibnamefont {Cabello}}, \bibinfo {author} {\bibfnamefont {S.}~\bibnamefont {Severini}},\ and\ \bibinfo {author} {\bibfnamefont {A.}~\bibnamefont {Winter}},\ }\href@noop {} {\bibinfo {title} {{(Non-)Contextuality} of physical theories as an axiom}} (\bibinfo {year} {2010}),\ \Eprint {https://arxiv.org/abs/1010.2163} {arXiv:1010.2163 [quant-ph]} \BibitemShut {NoStop}%
\bibitem [{\citenamefont {Cabello}\ \emph {et~al.}(2014)\citenamefont {Cabello}, \citenamefont {Severini},\ and\ \citenamefont {Winter}}]{cabello2014graph}%
  \BibitemOpen
  \bibfield  {author} {\bibinfo {author} {\bibfnamefont {A.}~\bibnamefont {Cabello}}, \bibinfo {author} {\bibfnamefont {S.}~\bibnamefont {Severini}},\ and\ \bibinfo {author} {\bibfnamefont {A.}~\bibnamefont {Winter}},\ }\bibfield  {title} {\bibinfo {title} {Graph-theoretic approach to quantum correlations},\ }\href {https://doi.org/10.1103/physrevlett.112.040401} {\bibfield  {journal} {\bibinfo  {journal} {Physical Review Letters}\ }\textbf {\bibinfo {volume} {112}},\ \bibinfo {pages} {040401} (\bibinfo {year} {2014})}\BibitemShut {NoStop}%
\bibitem [{\citenamefont {Kirby}\ \emph {et~al.}(2021)\citenamefont {Kirby}, \citenamefont {Tranter},\ and\ \citenamefont {Love}}]{kirby2021contextual}%
  \BibitemOpen
  \bibfield  {author} {\bibinfo {author} {\bibfnamefont {W.~M.}\ \bibnamefont {Kirby}}, \bibinfo {author} {\bibfnamefont {A.}~\bibnamefont {Tranter}},\ and\ \bibinfo {author} {\bibfnamefont {P.~J.}\ \bibnamefont {Love}},\ }\bibfield  {title} {\bibinfo {title} {Contextual subspace variational quantum eigensolver},\ }\href {https://doi.org/10.22331/q-2021-05-14-456} {\bibfield  {journal} {\bibinfo  {journal} {Quantum}\ }\textbf {\bibinfo {volume} {5}},\ \bibinfo {pages} {456} (\bibinfo {year} {2021})}\BibitemShut {NoStop}%
\bibitem [{\citenamefont {Abramsky}\ \emph {et~al.}(2024)\citenamefont {Abramsky}, \citenamefont {Cercelescu},\ and\ \citenamefont {Constantin}}]{abramsky2024commutation}%
  \BibitemOpen
  \bibfield  {author} {\bibinfo {author} {\bibfnamefont {S.}~\bibnamefont {Abramsky}}, \bibinfo {author} {\bibfnamefont {{\c{S}}.-I.}\ \bibnamefont {Cercelescu}},\ and\ \bibinfo {author} {\bibfnamefont {C.-M.}\ \bibnamefont {Constantin}},\ }\bibfield  {title} {\bibinfo {title} {{Commutation Groups and State-Independent Contextuality}},\ }in\ \href {https://doi.org/10.4230/LIPIcs.FSCD.2024.28} {\emph {\bibinfo {booktitle} {9th International Conference on Formal Structures for Computation and Deduction (FSCD 2024)}}},\ \bibinfo {series} {Leibniz International Proceedings in Informatics (LIPIcs)}, Vol.\ \bibinfo {volume} {299},\ \bibinfo {editor} {edited by\ \bibinfo {editor} {\bibfnamefont {J.}~\bibnamefont {Rehof}}}\ (\bibinfo  {publisher} {Schloss Dagstuhl -- Leibniz-Zentrum f{\"u}r Informatik},\ \bibinfo {address} {Dagstuhl, Germany},\ \bibinfo {year} {2024})\ pp.\ \bibinfo {pages} {28:1--28:20}\BibitemShut {NoStop}%
\bibitem [{\citenamefont {Gottesman}(2024)}]{gottesman2016surviving}%
  \BibitemOpen
  \bibfield  {author} {\bibinfo {author} {\bibfnamefont {D.}~\bibnamefont {Gottesman}},\ }\bibfield  {title} {\bibinfo {title} {Surviving as a quantum computer in a classical world}} (\bibinfo {year} {2024}),\ \bibinfo {note} {textbook manuscript preprint available at \url{https://www.cs.umd.edu/class/spring2024/cmsc858G/QECCbook-2024-ch1-15.pdf}}\BibitemShut {NoStop}%
\bibitem [{\citenamefont {Kribs}\ \emph {et~al.}(2006)\citenamefont {Kribs}, \citenamefont {Laflamme}, \citenamefont {Poulin},\ and\ \citenamefont {Lesosky}}]{kribs2005operator}%
  \BibitemOpen
  \bibfield  {author} {\bibinfo {author} {\bibfnamefont {D.}~\bibnamefont {Kribs}}, \bibinfo {author} {\bibfnamefont {R.}~\bibnamefont {Laflamme}}, \bibinfo {author} {\bibfnamefont {D.}~\bibnamefont {Poulin}},\ and\ \bibinfo {author} {\bibfnamefont {M.}~\bibnamefont {Lesosky}},\ }\bibfield  {title} {\bibinfo {title} {Operator quantum error correction},\ }\href {https://doi.org/10.26421/qic6.4-5-6} {\bibfield  {journal} {\bibinfo  {journal} {Quantum Information and Computation}\ }\textbf {\bibinfo {volume} {6}},\ \bibinfo {pages} {382} (\bibinfo {year} {2006})}\BibitemShut {NoStop}%
\bibitem [{\citenamefont {Kribs}\ \emph {et~al.}(2005)\citenamefont {Kribs}, \citenamefont {Laflamme},\ and\ \citenamefont {Poulin}}]{kribs2005unified}%
  \BibitemOpen
  \bibfield  {author} {\bibinfo {author} {\bibfnamefont {D.}~\bibnamefont {Kribs}}, \bibinfo {author} {\bibfnamefont {R.}~\bibnamefont {Laflamme}},\ and\ \bibinfo {author} {\bibfnamefont {D.}~\bibnamefont {Poulin}},\ }\bibfield  {title} {\bibinfo {title} {Unified and generalized approach to quantum error correction},\ }\href {https://doi.org/10.1103/physrevlett.94.180501} {\bibfield  {journal} {\bibinfo  {journal} {Physical Review Letters}\ }\textbf {\bibinfo {volume} {94}},\ \bibinfo {pages} {180501} (\bibinfo {year} {2005})}\BibitemShut {NoStop}%
\bibitem [{\citenamefont {Poulin}(2005)}]{poulin2005stabilizer}%
  \BibitemOpen
  \bibfield  {author} {\bibinfo {author} {\bibfnamefont {D.}~\bibnamefont {Poulin}},\ }\bibfield  {title} {\bibinfo {title} {Stabilizer formalism for operator quantum error correction},\ }\href {https://doi.org/10.1103/physrevlett.95.230504} {\bibfield  {journal} {\bibinfo  {journal} {Physical Review Letters}\ }\textbf {\bibinfo {volume} {95}},\ \bibinfo {pages} {230504} (\bibinfo {year} {2005})}\BibitemShut {NoStop}%
\bibitem [{\citenamefont {Suchara}\ \emph {et~al.}(2011)\citenamefont {Suchara}, \citenamefont {Bravyi},\ and\ \citenamefont {Terhal}}]{suchara2011constructions}%
  \BibitemOpen
  \bibfield  {author} {\bibinfo {author} {\bibfnamefont {M.}~\bibnamefont {Suchara}}, \bibinfo {author} {\bibfnamefont {S.}~\bibnamefont {Bravyi}},\ and\ \bibinfo {author} {\bibfnamefont {B.}~\bibnamefont {Terhal}},\ }\bibfield  {title} {\bibinfo {title} {Constructions and noise threshold of topological subsystem codes},\ }\href {https://doi.org/10.1088/1751-8113/44/15/155301} {\bibfield  {journal} {\bibinfo  {journal} {Journal of Physics A: Mathematical and Theoretical}\ }\textbf {\bibinfo {volume} {44}},\ \bibinfo {pages} {155301} (\bibinfo {year} {2011})}\BibitemShut {NoStop}%
\bibitem [{\citenamefont {Higgott}\ and\ \citenamefont {Breuckmann}(2021)}]{higgott2021subsystem}%
  \BibitemOpen
  \bibfield  {author} {\bibinfo {author} {\bibfnamefont {O.}~\bibnamefont {Higgott}}\ and\ \bibinfo {author} {\bibfnamefont {N.~P.}\ \bibnamefont {Breuckmann}},\ }\bibfield  {title} {\bibinfo {title} {Subsystem codes with high thresholds by gauge fixing and reduced qubit overhead},\ }\href {https://doi.org/10.1103/physrevx.11.031039} {\bibfield  {journal} {\bibinfo  {journal} {Physical Review X}\ }\textbf {\bibinfo {volume} {11}},\ \bibinfo {pages} {031039} (\bibinfo {year} {2021})}\BibitemShut {NoStop}%
\bibitem [{\citenamefont {Shaw}\ \emph {et~al.}(2008)\citenamefont {Shaw}, \citenamefont {Wilde}, \citenamefont {Oreshkov}, \citenamefont {Kremsky},\ and\ \citenamefont {Lidar}}]{shaw2008encoding}%
  \BibitemOpen
  \bibfield  {author} {\bibinfo {author} {\bibfnamefont {B.}~\bibnamefont {Shaw}}, \bibinfo {author} {\bibfnamefont {M.~M.}\ \bibnamefont {Wilde}}, \bibinfo {author} {\bibfnamefont {O.}~\bibnamefont {Oreshkov}}, \bibinfo {author} {\bibfnamefont {I.}~\bibnamefont {Kremsky}},\ and\ \bibinfo {author} {\bibfnamefont {D.~A.}\ \bibnamefont {Lidar}},\ }\bibfield  {title} {\bibinfo {title} {Encoding one logical qubit into six physical qubits},\ }\href {https://doi.org/10.1103/physreva.78.012337} {\bibfield  {journal} {\bibinfo  {journal} {Physical Review A}\ }\textbf {\bibinfo {volume} {78}},\ \bibinfo {pages} {012337} (\bibinfo {year} {2008})}\BibitemShut {NoStop}%
\bibitem [{ecz(2024)}]{eczoo_bacon_shor_9}%
  \BibitemOpen
  \bibfield  {title} {\bibinfo {title} {$[[9,1,4,3]]$ {N}ine-qubit {B}acon-{S}hor code},\ }in\ \href {https://errorcorrectionzoo.org/c/bacon_shor_9} {\emph {\bibinfo {booktitle} {The Error Correction Zoo}}},\ \bibinfo {editor} {edited by\ \bibinfo {editor} {\bibfnamefont {V.~V.}\ \bibnamefont {Albert}}\ and\ \bibinfo {editor} {\bibfnamefont {P.}~\bibnamefont {Faist}}}\ (\bibinfo {year} {2024})\BibitemShut {NoStop}%
\bibitem [{\citenamefont {Hastings}\ and\ \citenamefont {Haah}(2021)}]{hastings2021dynamically}%
  \BibitemOpen
  \bibfield  {author} {\bibinfo {author} {\bibfnamefont {M.~B.}\ \bibnamefont {Hastings}}\ and\ \bibinfo {author} {\bibfnamefont {J.}~\bibnamefont {Haah}},\ }\bibfield  {title} {\bibinfo {title} {Dynamically generated logical qubits},\ }\href {https://doi.org/10.22331/q-2021-10-19-564} {\bibfield  {journal} {\bibinfo  {journal} {Quantum}\ }\textbf {\bibinfo {volume} {5}},\ \bibinfo {pages} {564} (\bibinfo {year} {2021})}\BibitemShut {NoStop}%
\bibitem [{\citenamefont {Chermak}(2013)}]{chermak2013fusion}%
  \BibitemOpen
  \bibfield  {author} {\bibinfo {author} {\bibfnamefont {A.}~\bibnamefont {Chermak}},\ }\bibfield  {title} {\bibinfo {title} {Fusion systems and localities},\ }\href {https://doi.org/10.1007/s11511-013-0099-5} {\bibfield  {journal} {\bibinfo  {journal} {Acta Mathematica}\ }\textbf {\bibinfo {volume} {211}},\ \bibinfo {pages} {47} (\bibinfo {year} {2013})}\BibitemShut {NoStop}%
\bibitem [{\citenamefont {Anderson}\ \emph {et~al.}(2014)\citenamefont {Anderson}, \citenamefont {Duclos-Cianci},\ and\ \citenamefont {Poulin}}]{anderson2014fault}%
  \BibitemOpen
  \bibfield  {author} {\bibinfo {author} {\bibfnamefont {J.~T.}\ \bibnamefont {Anderson}}, \bibinfo {author} {\bibfnamefont {G.}~\bibnamefont {Duclos-Cianci}},\ and\ \bibinfo {author} {\bibfnamefont {D.}~\bibnamefont {Poulin}},\ }\bibfield  {title} {\bibinfo {title} {Fault-tolerant conversion between the {S}teane and {R}eed-{M}uller quantum codes},\ }\href {https://doi.org/10.1103/physrevlett.113.080501} {\bibfield  {journal} {\bibinfo  {journal} {Physical Review Letters}\ }\textbf {\bibinfo {volume} {113}},\ \bibinfo {pages} {080501} (\bibinfo {year} {2014})}\BibitemShut {NoStop}%
\bibitem [{\citenamefont {Bombin}\ and\ \citenamefont {Martin-Delgado}(2006)}]{bombin2006topological}%
  \BibitemOpen
  \bibfield  {author} {\bibinfo {author} {\bibfnamefont {H.}~\bibnamefont {Bombin}}\ and\ \bibinfo {author} {\bibfnamefont {M.~A.}\ \bibnamefont {Martin-Delgado}},\ }\bibfield  {title} {\bibinfo {title} {Topological quantum distillation},\ }\href {https://doi.org/10.1103/physrevlett.97.180501} {\bibfield  {journal} {\bibinfo  {journal} {Physical Review Letters}\ }\textbf {\bibinfo {volume} {97}},\ \bibinfo {pages} {180501} (\bibinfo {year} {2006})}\BibitemShut {NoStop}%
\bibitem [{\citenamefont {Rengaswamy}\ \emph {et~al.}(2020)\citenamefont {Rengaswamy}, \citenamefont {Calderbank}, \citenamefont {Newman},\ and\ \citenamefont {Pfister}}]{rengaswamy2020optimal}%
  \BibitemOpen
  \bibfield  {author} {\bibinfo {author} {\bibfnamefont {N.}~\bibnamefont {Rengaswamy}}, \bibinfo {author} {\bibfnamefont {R.}~\bibnamefont {Calderbank}}, \bibinfo {author} {\bibfnamefont {M.}~\bibnamefont {Newman}},\ and\ \bibinfo {author} {\bibfnamefont {H.~D.}\ \bibnamefont {Pfister}},\ }\bibfield  {title} {\bibinfo {title} {On optimality of {CSS} codes for transversal {$T$}},\ }\href {https://doi.org/10.1109/jsait.2020.3012914} {\bibfield  {journal} {\bibinfo  {journal} {IEEE Journal on Selected Areas in Information Theory}\ }\textbf {\bibinfo {volume} {1}},\ \bibinfo {pages} {499} (\bibinfo {year} {2020})}\BibitemShut {NoStop}%
\bibitem [{\citenamefont {Sudakov}\ and\ \citenamefont {Vieira}(2018)}]{sudakov2018two}%
  \BibitemOpen
  \bibfield  {author} {\bibinfo {author} {\bibfnamefont {B.}~\bibnamefont {Sudakov}}\ and\ \bibinfo {author} {\bibfnamefont {P.}~\bibnamefont {Vieira}},\ }\bibfield  {title} {\bibinfo {title} {Two remarks on eventown and oddtown problems},\ }\href {https://doi.org/10.1137/16m1100666} {\bibfield  {journal} {\bibinfo  {journal} {SIAM Journal on Discrete Mathematics}\ }\textbf {\bibinfo {volume} {32}},\ \bibinfo {pages} {280} (\bibinfo {year} {2018})}\BibitemShut {NoStop}%
\bibitem [{\citenamefont {Abramsky}\ \emph {et~al.}(2016)\citenamefont {Abramsky}, \citenamefont {Barbosa},\ and\ \citenamefont {Mansfield}}]{abramsky2016quantifying}%
  \BibitemOpen
  \bibfield  {author} {\bibinfo {author} {\bibfnamefont {S.}~\bibnamefont {Abramsky}}, \bibinfo {author} {\bibfnamefont {R.}~\bibnamefont {Barbosa}},\ and\ \bibinfo {author} {\bibfnamefont {S.}~\bibnamefont {Mansfield}},\ }\bibfield  {title} {\bibinfo {title} {Quantifying contextuality via linear programming},\ }in\ \href {http://qpl2016.cis.strath.ac.uk/} {\emph {\bibinfo {booktitle} {13th International Conference on Quantum Physics and Logic (QPL 2016)}}}\ (\bibinfo {year} {2016})\BibitemShut {NoStop}%
\bibitem [{\citenamefont {Boyd}\ and\ \citenamefont {Vandenberghe}(2004)}]{boyd2004convex}%
  \BibitemOpen
  \bibfield  {author} {\bibinfo {author} {\bibfnamefont {S.}~\bibnamefont {Boyd}}\ and\ \bibinfo {author} {\bibfnamefont {L.}~\bibnamefont {Vandenberghe}},\ }\href {https://doi.org/10.1017/cbo9780511804441} {\emph {\bibinfo {title} {Convex Optimization}}}\ (\bibinfo  {publisher} {Cambridge University Press},\ \bibinfo {year} {2004})\BibitemShut {NoStop}%
\bibitem [{\citenamefont {Panteleev}\ and\ \citenamefont {Kalachev}(2024)}]{panteleev2024maximally}%
  \BibitemOpen
  \bibfield  {author} {\bibinfo {author} {\bibfnamefont {P.}~\bibnamefont {Panteleev}}\ and\ \bibinfo {author} {\bibfnamefont {G.}~\bibnamefont {Kalachev}},\ }\href@noop {} {\bibinfo {title} {Maximally extendable sheaf codes}} (\bibinfo {year} {2024}),\ \Eprint {https://arxiv.org/abs/2403.03651} {arXiv:2403.03651 [cs.IT]} \BibitemShut {NoStop}%
\bibitem [{\citenamefont {Lin}(2024)}]{lin2024transversal}%
  \BibitemOpen
  \bibfield  {author} {\bibinfo {author} {\bibfnamefont {T.-C.}\ \bibnamefont {Lin}},\ }\href@noop {} {\bibinfo {title} {Transversal non-{C}lifford gates for quantum {LDPC} codes on sheaves}} (\bibinfo {year} {2024}),\ \Eprint {https://arxiv.org/abs/2410.14631} {arXiv:2410.14631 [quant-ph]} \BibitemShut {NoStop}%
\bibitem [{\citenamefont {Dinur}\ \emph {et~al.}(2024)\citenamefont {Dinur}, \citenamefont {Lin},\ and\ \citenamefont {Vidick}}]{dinur2024expansion}%
  \BibitemOpen
  \bibfield  {author} {\bibinfo {author} {\bibfnamefont {I.}~\bibnamefont {Dinur}}, \bibinfo {author} {\bibfnamefont {T.-C.}\ \bibnamefont {Lin}},\ and\ \bibinfo {author} {\bibfnamefont {T.}~\bibnamefont {Vidick}},\ }\href@noop {} {\bibinfo {title} {Expansion of higher-dimensional cubical complexes with application to quantum locally testable codes}} (\bibinfo {year} {2024}),\ \Eprint {https://arxiv.org/abs/2402.07476} {arXiv:2402.07476 [quant-ph]} \BibitemShut {NoStop}%
\bibitem [{\citenamefont {Kalachev}\ and\ \citenamefont {Panteleev}(2025)}]{kalachev2025maximally}%
  \BibitemOpen
  \bibfield  {author} {\bibinfo {author} {\bibfnamefont {G.}~\bibnamefont {Kalachev}}\ and\ \bibinfo {author} {\bibfnamefont {P.}~\bibnamefont {Panteleev}},\ }\href@noop {} {\bibinfo {title} {Maximally extendable product codes are good coboundary expanders}} (\bibinfo {year} {2025}),\ \Eprint {https://arxiv.org/abs/2501.01411} {arXiv:2501.01411 [cs.IT]} \BibitemShut {NoStop}%
\bibitem [{\citenamefont {Davydova}\ \emph {et~al.}(2023)\citenamefont {Davydova}, \citenamefont {Tantivasadakarn},\ and\ \citenamefont {Balasubramanian}}]{davydova2023floquet}%
  \BibitemOpen
  \bibfield  {author} {\bibinfo {author} {\bibfnamefont {M.}~\bibnamefont {Davydova}}, \bibinfo {author} {\bibfnamefont {N.}~\bibnamefont {Tantivasadakarn}},\ and\ \bibinfo {author} {\bibfnamefont {S.}~\bibnamefont {Balasubramanian}},\ }\bibfield  {title} {\bibinfo {title} {Floquet codes without parent subsystem codes},\ }\href {https://doi.org/10.1103/prxquantum.4.020341} {\bibfield  {journal} {\bibinfo  {journal} {PRX Quantum}\ }\textbf {\bibinfo {volume} {4}},\ \bibinfo {pages} {020341} (\bibinfo {year} {2023})}\BibitemShut {NoStop}%
\bibitem [{\citenamefont {Fahimniya}\ \emph {et~al.}(2024)\citenamefont {Fahimniya}, \citenamefont {Dehghani}, \citenamefont {Bharti}, \citenamefont {Mathew}, \citenamefont {Koll{\'a}r}, \citenamefont {Gorshkov},\ and\ \citenamefont {Gullans}}]{fahimniya2023hyperbolic}%
  \BibitemOpen
  \bibfield  {author} {\bibinfo {author} {\bibfnamefont {A.}~\bibnamefont {Fahimniya}}, \bibinfo {author} {\bibfnamefont {H.}~\bibnamefont {Dehghani}}, \bibinfo {author} {\bibfnamefont {K.}~\bibnamefont {Bharti}}, \bibinfo {author} {\bibfnamefont {S.}~\bibnamefont {Mathew}}, \bibinfo {author} {\bibfnamefont {A.~J.}\ \bibnamefont {Koll{\'a}r}}, \bibinfo {author} {\bibfnamefont {A.~V.}\ \bibnamefont {Gorshkov}},\ and\ \bibinfo {author} {\bibfnamefont {M.~J.}\ \bibnamefont {Gullans}},\ }\href@noop {} {\bibinfo {title} {Fault-tolerant hyperbolic {F}loquet quantum error correcting codes}} (\bibinfo {year} {2024}),\ \Eprint {https://arxiv.org/abs/2309.10033} {arXiv:2309.10033 [quant-ph]} \BibitemShut {NoStop}%
\bibitem [{\citenamefont {Gidney}\ \emph {et~al.}(2021)\citenamefont {Gidney}, \citenamefont {Newman}, \citenamefont {Fowler},\ and\ \citenamefont {Broughton}}]{gidney2021fault}%
  \BibitemOpen
  \bibfield  {author} {\bibinfo {author} {\bibfnamefont {C.}~\bibnamefont {Gidney}}, \bibinfo {author} {\bibfnamefont {M.}~\bibnamefont {Newman}}, \bibinfo {author} {\bibfnamefont {A.}~\bibnamefont {Fowler}},\ and\ \bibinfo {author} {\bibfnamefont {M.}~\bibnamefont {Broughton}},\ }\bibfield  {title} {\bibinfo {title} {A fault-tolerant honeycomb memory},\ }\href {https://doi.org/10.22331/q-2021-12-20-605} {\bibfield  {journal} {\bibinfo  {journal} {Quantum}\ }\textbf {\bibinfo {volume} {5}},\ \bibinfo {pages} {605} (\bibinfo {year} {2021})}\BibitemShut {NoStop}%
\bibitem [{\citenamefont {Haah}\ and\ \citenamefont {Hastings}(2022)}]{haah2022boundaries}%
  \BibitemOpen
  \bibfield  {author} {\bibinfo {author} {\bibfnamefont {J.}~\bibnamefont {Haah}}\ and\ \bibinfo {author} {\bibfnamefont {M.~B.}\ \bibnamefont {Hastings}},\ }\bibfield  {title} {\bibinfo {title} {Boundaries for the honeycomb code},\ }\href {https://doi.org/10.22331/q-2022-04-21-693} {\bibfield  {journal} {\bibinfo  {journal} {Quantum}\ }\textbf {\bibinfo {volume} {6}},\ \bibinfo {pages} {693} (\bibinfo {year} {2022})}\BibitemShut {NoStop}%
\bibitem [{\citenamefont {Vuillot}(2021)}]{vuillot2021planar}%
  \BibitemOpen
  \bibfield  {author} {\bibinfo {author} {\bibfnamefont {C.}~\bibnamefont {Vuillot}},\ }\href@noop {} {\bibinfo {title} {Planar {F}loquet codes}} (\bibinfo {year} {2021}),\ \Eprint {https://arxiv.org/abs/2110.05348} {arXiv:2110.05348 [quant-ph]} \BibitemShut {NoStop}%
\bibitem [{\citenamefont {Paetznick}\ \emph {et~al.}(2023)\citenamefont {Paetznick}, \citenamefont {Knapp}, \citenamefont {Delfosse}, \citenamefont {Bauer}, \citenamefont {Haah}, \citenamefont {Hastings},\ and\ \citenamefont {da~Silva}}]{paetznick2023performance}%
  \BibitemOpen
  \bibfield  {author} {\bibinfo {author} {\bibfnamefont {A.}~\bibnamefont {Paetznick}}, \bibinfo {author} {\bibfnamefont {C.}~\bibnamefont {Knapp}}, \bibinfo {author} {\bibfnamefont {N.}~\bibnamefont {Delfosse}}, \bibinfo {author} {\bibfnamefont {B.}~\bibnamefont {Bauer}}, \bibinfo {author} {\bibfnamefont {J.}~\bibnamefont {Haah}}, \bibinfo {author} {\bibfnamefont {M.~B.}\ \bibnamefont {Hastings}},\ and\ \bibinfo {author} {\bibfnamefont {M.~P.}\ \bibnamefont {da~Silva}},\ }\bibfield  {title} {\bibinfo {title} {Performance of planar {F}loquet codes with {M}ajorana-based qubits},\ }\href {https://doi.org/10.1103/prxquantum.4.010310} {\bibfield  {journal} {\bibinfo  {journal} {PRX Quantum}\ }\textbf {\bibinfo {volume} {4}},\ \bibinfo {pages} {010310} (\bibinfo {year} {2023})}\BibitemShut {NoStop}%
\bibitem [{\citenamefont {Gidney}\ \emph {et~al.}(2022)\citenamefont {Gidney}, \citenamefont {Newman},\ and\ \citenamefont {McEwen}}]{gidney2022benchmarking}%
  \BibitemOpen
  \bibfield  {author} {\bibinfo {author} {\bibfnamefont {C.}~\bibnamefont {Gidney}}, \bibinfo {author} {\bibfnamefont {M.}~\bibnamefont {Newman}},\ and\ \bibinfo {author} {\bibfnamefont {M.}~\bibnamefont {McEwen}},\ }\bibfield  {title} {\bibinfo {title} {Benchmarking the planar honeycomb code},\ }\href {https://doi.org/10.22331/q-2022-09-21-813} {\bibfield  {journal} {\bibinfo  {journal} {Quantum}\ }\textbf {\bibinfo {volume} {6}},\ \bibinfo {pages} {813} (\bibinfo {year} {2022})}\BibitemShut {NoStop}%
\bibitem [{\citenamefont {Hilaire}\ \emph {et~al.}(2024)\citenamefont {Hilaire}, \citenamefont {Dessertaine}, \citenamefont {Bourdoncle}, \citenamefont {Denys}, \citenamefont {de~Gliniasty}, \citenamefont {Valent{\'i}-Rojas},\ and\ \citenamefont {Mansfield}}]{hilaire2024enhanced}%
  \BibitemOpen
  \bibfield  {author} {\bibinfo {author} {\bibfnamefont {P.}~\bibnamefont {Hilaire}}, \bibinfo {author} {\bibfnamefont {T.}~\bibnamefont {Dessertaine}}, \bibinfo {author} {\bibfnamefont {B.}~\bibnamefont {Bourdoncle}}, \bibinfo {author} {\bibfnamefont {A.}~\bibnamefont {Denys}}, \bibinfo {author} {\bibfnamefont {G.}~\bibnamefont {de~Gliniasty}}, \bibinfo {author} {\bibfnamefont {G.}~\bibnamefont {Valent{\'i}-Rojas}},\ and\ \bibinfo {author} {\bibfnamefont {S.}~\bibnamefont {Mansfield}},\ }\href@noop {} {\bibinfo {title} {Enhanced fault-tolerance in photonic quantum computing: {F}loquet code outperforms surface code in tailored architecture}} (\bibinfo {year} {2024}),\ \Eprint {https://arxiv.org/abs/2410.07065} {arXiv:2410.07065 [quant-ph]} \BibitemShut {NoStop}%
\bibitem [{\citenamefont {Higgott}\ and\ \citenamefont {Breuckmann}(2024)}]{higgott2023constructions}%
  \BibitemOpen
  \bibfield  {author} {\bibinfo {author} {\bibfnamefont {O.}~\bibnamefont {Higgott}}\ and\ \bibinfo {author} {\bibfnamefont {N.~P.}\ \bibnamefont {Breuckmann}},\ }\bibfield  {title} {\bibinfo {title} {Constructions and performance of hyperbolic and semi-hyperbolic {F}loquet codes},\ }\href {https://doi.org/10.1103/prxquantum.5.040327} {\bibfield  {journal} {\bibinfo  {journal} {PRX Quantum}\ }\textbf {\bibinfo {volume} {5}},\ \bibinfo {pages} {040327} (\bibinfo {year} {2024})}\BibitemShut {NoStop}%
\bibitem [{\citenamefont {Aasen}\ \emph {et~al.}(2023)\citenamefont {Aasen}, \citenamefont {Haah}, \citenamefont {Li},\ and\ \citenamefont {Mong}}]{aasen2023measurement}%
  \BibitemOpen
  \bibfield  {author} {\bibinfo {author} {\bibfnamefont {D.}~\bibnamefont {Aasen}}, \bibinfo {author} {\bibfnamefont {J.}~\bibnamefont {Haah}}, \bibinfo {author} {\bibfnamefont {Z.}~\bibnamefont {Li}},\ and\ \bibinfo {author} {\bibfnamefont {R.~S.~K.}\ \bibnamefont {Mong}},\ }\href@noop {} {\bibinfo {title} {Measurement quantum cellular automata and anomalies in {F}loquet codes}} (\bibinfo {year} {2023}),\ \Eprint {https://arxiv.org/abs/2304.01277} {arXiv:2304.01277 [quant-ph]} \BibitemShut {NoStop}%
\bibitem [{\citenamefont {Zhang}\ \emph {et~al.}(2023)\citenamefont {Zhang}, \citenamefont {Aasen},\ and\ \citenamefont {Vijay}}]{zhang2023x}%
  \BibitemOpen
  \bibfield  {author} {\bibinfo {author} {\bibfnamefont {Z.}~\bibnamefont {Zhang}}, \bibinfo {author} {\bibfnamefont {D.}~\bibnamefont {Aasen}},\ and\ \bibinfo {author} {\bibfnamefont {S.}~\bibnamefont {Vijay}},\ }\bibfield  {title} {\bibinfo {title} {{$X$-cube} {F}loquet code: {A} dynamical quantum error correcting code with a subextensive number of logical qubits},\ }\href {https://doi.org/10.1103/physrevb.108.205116} {\bibfield  {journal} {\bibinfo  {journal} {Physical Review B}\ }\textbf {\bibinfo {volume} {108}},\ \bibinfo {pages} {205116} (\bibinfo {year} {2023})}\BibitemShut {NoStop}%
\bibitem [{\citenamefont {Dua}\ \emph {et~al.}(2024)\citenamefont {Dua}, \citenamefont {Tantivasadakarn}, \citenamefont {Sullivan},\ and\ \citenamefont {Ellison}}]{dua2024engineering}%
  \BibitemOpen
  \bibfield  {author} {\bibinfo {author} {\bibfnamefont {A.}~\bibnamefont {Dua}}, \bibinfo {author} {\bibfnamefont {N.}~\bibnamefont {Tantivasadakarn}}, \bibinfo {author} {\bibfnamefont {J.}~\bibnamefont {Sullivan}},\ and\ \bibinfo {author} {\bibfnamefont {T.~D.}\ \bibnamefont {Ellison}},\ }\bibfield  {title} {\bibinfo {title} {Engineering 3{D} {F}loquet codes by rewinding},\ }\href {https://doi.org/10.1103/prxquantum.5.020305} {\bibfield  {journal} {\bibinfo  {journal} {PRX Quantum}\ }\textbf {\bibinfo {volume} {5}},\ \bibinfo {pages} {020305} (\bibinfo {year} {2024})}\BibitemShut {NoStop}%
\bibitem [{\citenamefont {Alam}\ and\ \citenamefont {Rieffel}(2024)}]{alam2024dynamicallogicalqubitsbaconshor}%
  \BibitemOpen
  \bibfield  {author} {\bibinfo {author} {\bibfnamefont {M.~S.}\ \bibnamefont {Alam}}\ and\ \bibinfo {author} {\bibfnamefont {E.}~\bibnamefont {Rieffel}},\ }\href@noop {} {\bibinfo {title} {Dynamical logical qubits in the {B}acon-{S}hor code}} (\bibinfo {year} {2024}),\ \Eprint {https://arxiv.org/abs/2403.03291} {arXiv:2403.03291 [quant-ph]} \BibitemShut {NoStop}%
\bibitem [{\citenamefont {Bombin}\ \emph {et~al.}(2024)\citenamefont {Bombin}, \citenamefont {Litinski}, \citenamefont {Nickerson}, \citenamefont {Pastawski},\ and\ \citenamefont {Roberts}}]{bombin2023unifying}%
  \BibitemOpen
  \bibfield  {author} {\bibinfo {author} {\bibfnamefont {H.}~\bibnamefont {Bombin}}, \bibinfo {author} {\bibfnamefont {D.}~\bibnamefont {Litinski}}, \bibinfo {author} {\bibfnamefont {N.}~\bibnamefont {Nickerson}}, \bibinfo {author} {\bibfnamefont {F.}~\bibnamefont {Pastawski}},\ and\ \bibinfo {author} {\bibfnamefont {S.}~\bibnamefont {Roberts}},\ }\bibfield  {title} {\bibinfo {title} {Unifying flavors of fault tolerance with the {ZX} calculus},\ }\href {https://doi.org/10.22331/q-2024-06-18-1379} {\bibfield  {journal} {\bibinfo  {journal} {Quantum}\ }\textbf {\bibinfo {volume} {8}},\ \bibinfo {pages} {1379} (\bibinfo {year} {2024})}\BibitemShut {NoStop}%
\bibitem [{\citenamefont {Tanggara}\ \emph {et~al.}(2024)\citenamefont {Tanggara}, \citenamefont {Gu},\ and\ \citenamefont {Bharti}}]{tanggara2024strategic}%
  \BibitemOpen
  \bibfield  {author} {\bibinfo {author} {\bibfnamefont {A.}~\bibnamefont {Tanggara}}, \bibinfo {author} {\bibfnamefont {M.}~\bibnamefont {Gu}},\ and\ \bibinfo {author} {\bibfnamefont {K.}~\bibnamefont {Bharti}},\ }\href@noop {} {\bibinfo {title} {Strategic code: {A} unified spatio-temporal framework for quantum error-correction}} (\bibinfo {year} {2024}),\ \Eprint {https://arxiv.org/abs/2405.17567} {arXiv:2405.17567 [quant-ph]} \BibitemShut {NoStop}%
\bibitem [{\citenamefont {Townsend-Teague}\ \emph {et~al.}(2023)\citenamefont {Townsend-Teague}, \citenamefont {Magdalena de~la Fuente},\ and\ \citenamefont {Kesselring}}]{townsend2023floquetifying}%
  \BibitemOpen
  \bibfield  {author} {\bibinfo {author} {\bibfnamefont {A.}~\bibnamefont {Townsend-Teague}}, \bibinfo {author} {\bibfnamefont {J.}~\bibnamefont {Magdalena de~la Fuente}},\ and\ \bibinfo {author} {\bibfnamefont {M.}~\bibnamefont {Kesselring}},\ }\bibfield  {title} {\bibinfo {title} {Floquetifying the colour code},\ }\href {https://doi.org/10.4204/eptcs.384.14} {\bibfield  {journal} {\bibinfo  {journal} {Electronic Proceedings in Theoretical Computer Science}\ }\textbf {\bibinfo {volume} {384}},\ \bibinfo {pages} {265} (\bibinfo {year} {2023})}\BibitemShut {NoStop}%
\bibitem [{\citenamefont {Ellison}\ \emph {et~al.}(2023)\citenamefont {Ellison}, \citenamefont {Sullivan},\ and\ \citenamefont {Dua}}]{ellison2023floquet}%
  \BibitemOpen
  \bibfield  {author} {\bibinfo {author} {\bibfnamefont {T.~D.}\ \bibnamefont {Ellison}}, \bibinfo {author} {\bibfnamefont {J.}~\bibnamefont {Sullivan}},\ and\ \bibinfo {author} {\bibfnamefont {A.}~\bibnamefont {Dua}},\ }\href@noop {} {\bibinfo {title} {Floquet codes with a twist}} (\bibinfo {year} {2023}),\ \Eprint {https://arxiv.org/abs/2306.08027} {arXiv:2306.08027 [quant-ph]} \BibitemShut {NoStop}%
\bibitem [{\citenamefont {Sullivan}\ \emph {et~al.}(2023)\citenamefont {Sullivan}, \citenamefont {Wen},\ and\ \citenamefont {Potter}}]{sullivan2023floquet}%
  \BibitemOpen
  \bibfield  {author} {\bibinfo {author} {\bibfnamefont {J.}~\bibnamefont {Sullivan}}, \bibinfo {author} {\bibfnamefont {R.}~\bibnamefont {Wen}},\ and\ \bibinfo {author} {\bibfnamefont {A.~C.}\ \bibnamefont {Potter}},\ }\bibfield  {title} {\bibinfo {title} {Floquet codes and phases in twist-defect networks},\ }\href {https://doi.org/10.1103/physrevb.108.195134} {\bibfield  {journal} {\bibinfo  {journal} {Physical Review B}\ }\textbf {\bibinfo {volume} {108}},\ \bibinfo {pages} {195134} (\bibinfo {year} {2023})}\BibitemShut {NoStop}%
\bibitem [{\citenamefont {Tanggara}\ \emph {et~al.}(2025)\citenamefont {Tanggara}, \citenamefont {Gu},\ and\ \citenamefont {Bharti}}]{tanggara2024simple}%
  \BibitemOpen
  \bibfield  {author} {\bibinfo {author} {\bibfnamefont {A.}~\bibnamefont {Tanggara}}, \bibinfo {author} {\bibfnamefont {M.}~\bibnamefont {Gu}},\ and\ \bibinfo {author} {\bibfnamefont {K.}~\bibnamefont {Bharti}},\ }\href@noop {} {\bibinfo {title} {Simple construction of qudit {F}loquet codes on a family of lattices}} (\bibinfo {year} {2025}),\ \Eprint {https://arxiv.org/abs/2410.02022} {arXiv:2410.02022 [quant-ph]} \BibitemShut {NoStop}%
\end{thebibliography}%

\appendix

\section{Primer on Sheaf-Theoretic Contextuality} \label{sec:primer_sheaf_theory}

\noindent\textit{Note. This appendix is intended as a pedagogical supplement for readers less familiar with the mathematics; it is not required for following the main proofs in the paper.}

\bigskip

Sheaf theory provides a mathematical framework to study contextuality by modeling measurement outcomes across compatible sets of observables. Here, we introduce the basics of sheaves on a finite, discrete topological space, tailored to the quantum measurement scenarios in this paper.

\subsection{Sheaves on a Finite, Discrete Topological Space}

We consider a finite set $X$ equipped with the discrete topology, where every subset of $X$ is \emph{open}. Informally, a \emph{presheaf} is a structure that assigns data to all (open) subsets of X while providing a way to ``restrict'' this data from bigger to smaller subsets. A \emph{sheaf} is a presheaf that satisfies additional conditions, ensuring that this \emph{local} data can be consistently ``glued'' together to form unique \emph{global} data.

Formally, a \emph{presheaf} $\mathcal{F}$ on $X$ (e.g., a set of Pauli observables) assigns to each subset $U \subseteq X$ a set $\mathcal{F}(U)$ (e.g., possible measurement outcomes for the observables in $U$) and to each inclusion $U \subseteq V \subseteq X$ a \emph{restriction map} $\restr_{V,U} \colon \mathcal{F}(V) \to \mathcal{F}(U)$, satisfying

\begin{itemize}
    \item The restriction from any subset $U$ to itself is the identity, i.e., $\restr_{U,U}$ is the identity. 
    \item For $U \subseteq V \subseteq W$, the restriction from $W$ to $U$ is the composition of restrictions from $W$ to $V$ and $V$ to $U$, i.e., $\restr_{W,U} = \restr_{V,U} \circ \restr_{W,V}$. 
\end{itemize}
The elements $\mathcal{F}(U)$ are called \emph{sections} of $\mathcal{F}$ over $U$. For $V\subseteq U$ and $s\in\mathcal{F}(U)$, we sometimes write $s|_V := \restr_{U,V}(s)$ for convenience. A sheaf $\mathcal{F}$ on $X$ is a presheaf on $X$ that satisfies two additional axioms, which formalize the idea of ``locality'' and ``gluing'':

\begin{itemize}
    \item Locality: If two sections $s_1, s_2 \in \mathcal{F}(U)$ are equal on a \emph{cover} of $U$, then they must be the same section. Specifically, if $U = \cup_i U_i$ and $\restr_{U,U_i}(s_1) = \restr_{U,U_i}(s_2)$ for all $i$, then $s_1 = s_2$. In simpler terms, \emph{a section is uniquely determined by its local pieces}. No two different ``global'' sections look identical everywhere ``locally.''
    \item Gluing: Given any collection of sections $\{s_i \in \mathcal{F}(U_i)\}$ that agree on all the overlaps (i.e., $\restr_{U_i, U_i \cap U_j}(s_i) = \restr_{U_j, U_i \cap U_j}(s_j)$), these local sections can be ``glued together'' to form a unique global section $s \in \mathcal{F}(U)$, where $U = \cup_i U_i$. This unique global section $s$ must restrict to each of the original local sections, i.e., $\restr_{U,U_i}(s) = s_i$. This means that \emph{consistent local data can always be combined to form unique global data}.
\end{itemize}

\subsection{Empirical Models and Contextuality}

In this subsection, $X$ will always refer to a finite set of observables and $O$ will refer to the set of measurement outcomes for each observable, assumed to be the same across all observables (e.g., $O = \{0,1\}$, or equivalently, $O = \{\pm 1\}$ for Pauli measurements). The set $\mathcal{M}$ will refer to the set of all \emph{maximally} commuting subsets of $X$. The elements of $\mathcal{M}$ are called \emph{measurement contexts}.

We begin by defining the \emph{event sheaf} $\mathcal{E}$ on $X$, which assigns to each subset $U$ of observables the set $\mathcal{E}(U) = O^U$ of all functions from $U$ to $O$. Each element (i.e., section) of $\mathcal{E}(U)$ is a function that assigns an outcome in $O$ to every observable in $U$, representing a conceivable joint outcome assignment (whether or not it is empirically realizable in a given model, and whether or not the observables in $U$ commute). For example, if $U = \{X_1, X_2, X_1X_2\}$ and $O = \{0,1\}$, then $\mathcal{E}(U)$ contains $2^3 = 8$ assignments, even though only four of these are physically valid. The restriction maps $\restr_{V,U}$ for any $U \subseteq V$ are the natural ones, sending each $s \colon V \to O$ to its restriction to $U$, $s|_U \colon U \to O$. It can be checked that $\mathcal{E}$ satisfies all the presheaf and sheaf axioms.

Besides the event sheaf $\mathcal{E}$, we can also consider the presheaf $\mathcal{D}_{\R_{\ge 0}} \mathcal{E}$ of probability distributions on the joint measurement outcomes. Instead of assigning each subset $U$ of observables to the set $\mathcal{E}(U) = O^U$, we assign each $U$ to the set of all probability distributions on $\mathcal{E}(U)$ (i.e., $\R_{\ge 0}$-valued functions on $\mathcal{E}(U)$ whose function values sum to $1$). An $\R_{\ge 0}$-valued \emph{empirical model} is a choice of probability distributions $e_C \in \mathcal{D}_{\R_{\ge 0}} \mathcal{E}(C)$ for all the maximally commuting sets $C \in \mathcal{M}$, which are compatible in the sense that for any two maximally commuting sets $C$ and $C'$, their marginal distributions on $O^{C \cap C'}$ agree, i.e., $e_C|_{C \cap C'} = e_{C'}|_{C \cap C'}$. This compatibility condition corresponds to the no-signaling condition, ensuring consistency on overlaps.

Instead of probability distributions on the joint measurement outcomes, we can also consider the so-called ``possibility distributions'' on the outcomes, where we assign $0$ if we deem the joint measurement outcome impossible and $1$ if we deem the joint measurement outcome possible. (Note that every possibility distribution must have at least one joint measurement outcome that is deemed possible; this corresponds to the condition that probabilities in a probability distribution sum to $1$.) This gives us the presheaf $\mathcal{D}_{\B} \mathcal{E}$ instead, where $\B = \{0,1\}$ is the boolean semiring satisfying $1 + 1 = 1$. An $\B$-valued empirical model is a choice of compatible possibility distributions $e_C \in \mathcal{D}_{\B} \mathcal{E}(C)$ for all the maximally commuting sets $C \in \mathcal{M}$. (Compatibility is defined analogously as before, via restrictions on intersections.)

An equivalent description of a $\B$-valued empirical model is a \emph{possibilistic empirical model}, which is a subpresheaf $\mathcal{S}$ of $\mathcal{E}$ (i.e., $\mathcal{S}(U) \subseteq \mathcal{E}(U) = O^{U}$ for all $U \subseteq X$), where each $\mathcal{S}(U)$ represents all joint measurement outcomes that is deemed ``possible.'' Formally, this must satisfy the following properties:
\begin{enumerate} [label={(E\arabic*)}]
    \item $\mathcal{S}(C) \ne \emptyset$ for all $C \in \mathcal{M}$ (i.e., any joint measurement of a set of maximally commuting observables must yield at least one possible joint measurement outcome),
    \item $\mathcal{S}(U') \to \mathcal{S}(U)$ is surjective for all $C \in \mathcal{M}$ and $U \subseteq U' \subseteq C$ (i.e., if a measurement outcome is possible for a given set of commuting observables, such an outcome must still be possible even if we add more commuting observables to the joint measurement; this is the no-signaling condition),
    \item for any family $\{s_C\}_{C \in \mathcal{M}}$ with $s_C \in \mathcal{S}(C)$ satisfying $s_C|_{C\cap C'} = s_{C'}|_{C \cap C'}$ for all $C, C' \in \mathcal{M}$, there is a global section $s \in \mathcal{S}(X)$ satisfying $s|_{C} = s_C$ for all $C \in \mathcal{M}$ (i.e., if we can choose joint measurement outcomes for every set of maximally commuting observables in a compatible way---such that they agree on their pairwise intersections---then these choices can be extended to a consistent assignment of outcomes to all observables in $X$).
\end{enumerate}
Note that the presheaf $\mathcal{S}$ may not be a sheaf. There is a bijection between $B$-valued empirical models and possibilistic empirical models. Given a $B$-valued empirical model, we can use the possibility distributions $e_C$ for the maximally commuting sets of observables $C$ to directly define $\mathcal{S}(U)$ as the set of all possible joint measurement outcomes for each set of commuting observables $U$, and then extend this to general subsets of observables $U$ by defining $\mathcal{S}(U)$ to be all assignments $s \colon U \to O$ that are locally consistent, meaning that for every commuting subset $V \subseteq U$, the restriction $s|_V$ lies in $\mathcal{S}(V)$. Conversely, given a possibilistic empirical model $\mathcal{S}$, we can simply define each possibility distribution $e_C$ to be the indicator function on $C$. Possibilistic empirical models are particularly useful for capturing ``all-or-nothing'' logical contradictions in quantum systems.

The \emph{empirical model defined by a quantum state $\rho$} (on the Hilbert space where the observables in $X$ act) is the $\R_{\ge 0}$-valued empirical model $\{e_C\}_{C \in \mathcal{M}}$ where each $e_C$ is the probability distribution on $O^C$ induced by jointly measuring the commuting observables in $C$ on $\rho$. This satisfies the compatibility (i.e., no-signaling) condition automatically by the principles of quantum mechanics. We can also obtain a $\B$-valued empirical model from $\rho$ by considering \emph{possibility} distributions instead, and the procedure described in the preceding paragraph allows us to derive a possibilistic empirical model, which we call the \emph{possibilistic empirical model defined by $\rho$}.

We now define the relevant notions of contextuality. A possibilistic empirical model $\mathcal{S}$ is \emph{strongly contextual} if it has \emph{no} global section, i.e., $\mathcal{S}(X) = \emptyset$. This means there is no consistent assignment of outcomes to all observables in $X$ that is compatible with the possible local outcomes on every measurement context. An $\R_{\ge 0}$-valued empirical model $e = \{e_C\}_{C \in \mathcal{M}}$ is \emph{contextual} if there is \emph{no} global probability distribution $d$ on $O^X$ such that its marginal distribution on $O^C$ is $e_C$ for each $C \in \mathcal{M}$. It is \emph{strongly contextual} if its associated possibilistic empirical model is strongly contextual. If $e$ is not contextual, it is \emph{noncontextual}.

For an $\R_{\ge 0}$-valued empirical model, strong contextuality implies contextuality (see Theorem \ref{thm:strong-contexuality-implies-contextuality}), but the converse may not hold in general.

This framework allows us to define contextuality for QEC codes by applying it to their check measurements (Definition \ref{def:QECC_contextuality}), capturing nonclassical behavior in error syndrome measurement.

\subsection{A Simple Example}

As an example, consider $X = \{X_1, X_2, Z_1, Z_2\}$, with outcomes $O = \{0,1\}$. The contexts $M$ are the maximal commuting sets, namely $\{X_1, X_2\}$, $\{Z_1, Z_2\}$, $\{X_1, Z_2\}$ and $\{Z_1, X_2\}$.

An $\R_{\ge 0}$-empirical model assigns probabilities to the outcomes in each context (which are $(0,0), (0,1), (1,0), (1,1)$) in a compatible way. Here, compatibility means, for instance, that the probability that the measurement outcome of $(X_1, X_2)$ is $(0,0)$ or $(0,1)$ must be equal to the probability that the measurement outcome of $(X_1, Z_2)$ is $(0,0)$ or $(0,1)$. This is contextual if there is no global probability distribution over the $2^4 = 16$ possible measurement outcomes of $(X_1, X_2, Z_1, Z_2)$ whose marginal distribution when restricted to each of the four contexts coincides with the one provided by the model.

A possibilistic empirical model assigns to each subset of $X$ a set of outcomes which are deemed ``possible'', again in a compatible way. Here, compatibility means, for instance, that if the only possible outcomes for $(X_1, X_2)$ are $(0,0)$ and $(0,1)$, then the only possible outcome for $X_1$ is $0$ and the only possible outcomes for $(X_1, Z_2)$ are (among) $(0,0)$ and $(0,1)$. This is strongly contextual if the set of possible outcomes for $(X_1, X_2, Z_1, Z_2)$ is the empty set, or equivalently, if we cannot choose possible outcomes for each of the four contexts in a compatible way. An example of a strongly contextual possibilistic model $\mathcal{S}$ is one where\footnote{We slightly abuse notation here for the sake of notation clarity: each ordered pair listed in fact refers to function that sends the relevant operators (in the order listed) to the ordered pair. For instance, by $\mathcal{S}(\{X_1, X_2\}) = \{(0,0), (1,1)\}$, we mean that $\mathcal{S}(\{X_1, X_2\})$ consists of two functions: one sends both $X_1$ and $X_2$ to $0$, the other sends both $X_1$ and $X_2$ to $1$.} $\mathcal{S}(\{X_1, X_2\}) = \mathcal{S}(\{X_1, Z_2\}) = \mathcal{S}(\{Z_1, Z_2\}) = \{(0,0), (1,1)\}$ and $ \mathcal{S}(\{Z_1, X_2\}) = \{(0,1), (1,0)\}$. It can be checked that these do extend to a unique possibilistic model $\mathcal{S}$, and this $\mathcal{S}$ satisfies $\mathcal{S}(\{X_1, X_2, Z_1, Z_2\}) = \emptyset$ because there is no way to choose possible outcomes for each of the four contexts in a compatible way. The first three equations require all elements of $\mathcal{S}(\{X_1, X_2, Z_1, Z_2\})$ to be one of $(0,0,0,0)$ or $(1,1,1,1)$, but neither of this is consistent with the final equation. In contrast, had $(0,0)$ also been a section of $\{Z_1, X_2\}$ (i.e., $ \mathcal{S}(\{Z_1, X_2\}) = \{(0,1), (1,0), (0,0)\}$), then $\mathcal{S}(\{X_1, X_2, Z_1, Z_2\}) = \{(0,0,0,0)\} \ne \emptyset$ and $\mathcal{S}$ would not be strongly contextual.

\section{Primer on Partial Groups} \label{sec:primer_partial_groups}

\noindent\textit{Note. This appendix is intended as a pedagogical supplement for readers less familiar with the mathematics; it is not required for following the main proofs in the paper.}

\bigskip

This appendix provides a brief introduction to partial groups, aimed at readers familiar with group theory and Pauli groups but possibly new to this generalization. Partial groups extend standard groups to cases where multiplication is only partially defined (e.g., for commuting Pauli operators). They are used in Section \ref{sec:equivalence_of_contextuality_notions_under_partial_closure} to define partial closures of measurement sets, which in turn allow us to define Kochen--Specker contextuality (Definition \ref{def:kochen_specker}) and prove equivalences between different notions of contextuality (e.g., Theorems \ref{thm:ks-contextuality-iff-siavn} and \ref{thm:siavn-iff-kl-contextual}). For the full formal definitions, Pauli-specific examples, and applications, see Section~\ref{sec:partial_group_pauli_operators}.

\subsection{Motivation}

In the usual group-theoretic setting, any two elements can be multiplied, and the result is again in the group. However, in quantum measurement scenarios we are often interested in multiplying only commuting observables.
\begin{itemize}
    \item For two Pauli operators $P_1, P_2$, the product $P_1 P_2$ is only meaningful in our operational sense when $P_1$ and $P_2$ commute, since their eigenvalues can then be jointly measured.
    \item If $P_1$ and $P_2$ anticommute, the order of measurement matters and $\{P_1, P_2\}$ is not a valid measurement set.
\end{itemize}
This motivates \emph{partial groups}: sets with a multiplication defined only for certain tuples of elements.

\subsection{Partial Groups}

A \emph{partial group} generalizes a group by allowing the product to be defined only for certain ``allowed'' sequences of elements \cite{chermak2013fusion}. Let $\mathcal{P}$ be a nonempty set. Equip it with:
\begin{itemize}
    \item a subset $D \subseteq \mathcal{P}^*$ of ``allowed words'' (where $\mathcal{P}^*$ is the set of all finite-length sequences or ``words'' of elements from $\mathcal{P}$, including the empty word of length $0$, with concatenation of words given by $\circ$),
    \item a product operation $\Pi \colon D \to \mathcal{P}$ which multiplies all the elements in each allowed word to form an element in $\mathcal{P}$, and
    \item an inversion operation $(\blank)^{-1} \colon \mathcal{P} \to \mathcal{P}$.
\end{itemize}
These operations have to satisfy several additional axioms. Intuitively, these axioms ensure group-like behavior where defined: single elements are allowed, products are associative when possible, inverses exist, and there is an identity $\mathbf{1}$. (For the precise axioms, see Definition \ref{def:partial-group} in the main text.)

This induces a partial binary operation on $\mathcal{P}$: for two elements $u, v \in \mathcal{P}$, we can multiply these together (in the order $u, v$) if $u \circ v$ is an allowed word (i.e., lies in $D$). In this case, the product $u \cdot v$ is the result of applying the product operation $\Pi$ on $u \circ v$.  A partial group is \emph{abelian} if products commute where defined. (See Definition \ref{def:abelian-partial-group} in the main text.)

Any ordinary group $G$ is a partial group with all words allowed ($\mathcal{D} = \mathcal{P}^*$) and usual multiplication and inversion. Abelian groups yield abelian partial groups.

For an example of a partial group that is not a group (Example \ref{eg:pauli-partial-group}), consider the set of n-qubit Pauli observables $\mathcal{P}_n$, i.e., operators of the form $\pm P_1 \otimes \cdots \otimes P_n$ for $P_i \in \{I, X, Y, Z\}$. (The phases $\pm i$ are excluded because they do not correspond to observables.) This set is not a group under regular multiplication, because it fails to satisfy the closure property: for instance, $Z_1 X_1 = iY_1$ is not an observable. However, this set is closed under multiplication of \emph{commuting} operators. (To see this, observe that two operators $\pm P_1 \otimes \cdots \otimes P_n$ and $\pm Q_1 \otimes \cdots \otimes Q_n$ commute if and only if there are an even number of $j$ such that $P_j$ and $Q_j$ anticommute. If $P_j$ and $Q_j$ anticommutes---and hence are both distinct, and also distinct from $I$---then $P_j Q_j = i R_j$, where $R_j$ is the operator among $\{X, Y, Z\}$ that is distinct from $P_j$ and $Q_j$. If $P_j$ and $Q_j$ commute, then $P_j Q_j = R_j$ is among $\{I, X, Y, Z\}$. This means that the product of commuting $\pm P_1 \otimes \cdots \otimes P_n$ and $\pm Q_1 \otimes \cdots \otimes Q_n$ must be of the form $\pm R_1 \otimes \cdots \otimes R_n$.) Therefore, $\mathcal{P}_n$ is a partial group under this partial multiplication operation (i.e., restricted to commuting operators).

\subsection{Partial Subgroups and Closures} \label{sec:primer_partial_subgroups_closure}

A \emph{partial subgroup} $\mathcal{Q}$ of a partial group $\mathcal{P}$ is a subset of $\mathcal{P}$ which is closed under inversion and allowed products. In this case, $\mathcal{Q}$ is itself a partial group with the same partial multiplication operation as $\mathcal{P}$.

The \emph{partial closure} $\bar{S}$ of a subset $S$ in a partial group $\mathcal{P}$ is the smallest partial subgroup of $\mathcal{P}$ containing $S$. In practice, if $\mathcal{P}$ is a finite set, $\bar{S}$ can be constructed iteratively: start with $S$, then repeatedly add products of allowed (e.g., commuting) elements until no more new ones can be added.

As an example, consider the set of $2$-qubit Pauli observables $S = \{X_1, X_2, Z_1, Z_2\}$ in $\mathcal{P}_2$. To calculate $\bar{S}$, we do the following:
\begin{itemize}
    \item We first take products of all commuting elements. On top of the four elements already in $S$, we also get five more elements: $I = X_1 X_1$, $X_1 X_2$, $Z_1 X_2$, $Z_1 Z_2$ and $X_1 Z_2$.
    \item Taking products of all commuting elements gives us two new elements: $Y_1 Y_2 = (X_1 Z_2) (Z_1 X_2)$ and $- Y_1 Y_2 = (X_1 X_2) (Z_1 Z_2)$.
    \item Doing this again, we get five new elements: $-I = (- Y_1 Y_2)(Y_1 Y_2)$, $- X_1 Z_2 = (- Y_1 Y_2)(Z_1 X_2)$, $-Z_1 X_2 = (- Y_1 Y_2)(X_1 Z_2)$, $-X_1 X_2 = (Y_1 Y_2)(Z_1 Z_2)$, and $-Z_1 Z_2 = (Y_1 Y_2)(X_1 X_2)$.
    \item Since $-I$ commutes with every Pauli observable, the next step in the process gives us four new elements by multiplying $-I$ with the four original elements in $S$: $-X_1, -X_2, -Z_1, -Z_2$.
    \item Finally, the product of any two commuting elements in the $20$ elements obtained thus far yields another element among the $20$. This resulting set of $20$ elements is thus $\bar{S}$.
\end{itemize}

The notion of partial closure is especially relevant when discussing check measurements of a quantum error-correcting code. This is because operationally, if $P_1$ and $P_2$ are two commuting check measurements of a quantum error-correcting code, then measuring both (in any order) gives us the outcome of $P_1 P_2$ for free, without any additional measurements. Thus, any noncontextuality or contextuality property for the quantum error-correcting code should reasonably account for all such inferable measurements. By formalizing the set of check measurements as a subset of the partial group of Pauli observables and working with its partial closure, we ensure that all relevant jointly measurable observables are included.

\subsection{Partial Group Homomorphisms}

A \emph{partial group homomorphism} $\beta: \mathcal{P} \to \mathcal{Q}$ is a function between partial groups $\mathcal{P}$ and $\mathcal{Q}$ that preserves allowed words and products. We will only be interested in the case where $\mathcal{Q} = \{ \pm 1\}$ is an ordinary group. In this case, this definition simplifies to preserving binary products where defined (see Remark \ref{rem:partial-group-hom-equiv-def}), i.e., $\beta(u \cdot v) = \beta(u) \beta(v)$ for $u, v \in \mathcal{P}$ (where $\cdot$ is the partial multiplication operation on $\mathcal{P}$).

This notion is key for defining Kochen--Specker contextuality (Definition \ref{def:kochen_specker}): a set of $n$-qubit Pauli observables $X \subseteq \mathcal{P}_n$ is contextual in the Kochen--Specker sense if there is no partial group homomorphism $\lambda \colon \bar{X} \to \{\pm 1\}$ that satisfies $\lambda(-I) = -1$ whenever $-I \in \bar{X}$. Informally, this is saying that there is no way to assign $\pm 1$ (measurement) values to the operators in $\bar{X}$ that is consistent with taking commuting products in $\bar{X}$ (and with assigning $-I$ to $-1$, since the measurement of $-I$ is always $-1$).

\end{document}